\documentclass{llncs}

\newcommand{\norm}[1]{}
\usepackage[utf8]{inputenc}
\usepackage[mathscr]{eucal}
\usepackage{graphicx}
\usepackage{listings}
\usepackage{amssymb}
\usepackage{dsfont}
\usepackage{mathtools}
\usepackage{algorithm}
\usepackage{algpseudocode}
\usepackage{xcolor}
\usepackage{multirow}
\usepackage{amsmath}
\usepackage[english]{babel}
\usepackage{wasysym}
\usepackage{qip}
\usepackage{qm}
\usepackage{csquotes}
\usepackage{environ}

\usepackage[draft=false]{hyperref}
\usepackage[normalem]{ulem}
\usepackage[capitalise]{cleveref}
\crefname{construct}{Construction}{Constructions}

\usepackage{comment}
\usepackage{paralist}
\usepackage{wrapfig}
\usepackage[normalem]{ulem}

\usepackage[clock]{ifsym}

\usepackage[
	n,
	operators,
	advantage, 
	sets,
	adversary,
	landau,
	probability,
	notions,
	logic,
	ff,
	mm,
        primitives,
        events,
        complexity,
        asymptotics, 
        keys]{cryptocode}

\usepackage[maxbibnames=10]{biblatex}

\spnewtheorem{Claim}{Claim}{\bfseries}{\itshape}

\spnewtheorem{construct}{Construction}{\bfseries}{\itshape}
\spnewtheorem*{thm}{Theorem}{\bfseries}{\itshape}
\spnewtheorem*{lem}{Lemma}{\bfseries}{\itshape}
\definecolor{darkblue}{rgb}{0,0,0.6}
\definecolor{darkgreen}{rgb}{0,0.5,0}

\newcommand{\mo}[1]{}
\newcommand{\takashi}[1]{}
\newcommand{\ryo}[1]{}
\newcommand{\mor}[1]{}

\newcommand{\D}{\ensuremath{\mathcal{D}}}
\newcommand{\Dis}{\ensuremath{\mathbf{D}}}

\usepackage{xspace}

\newcommand{\QMA}{\mathsf{QMA}}

\newcommand{\QCMA}{\mathsf{QCMA}}

\newcommand{\BQP}{\mathsf{BQP}}

\newcommand{\PQCMA}{\mathsf{QCMA}}

\newcommand{\V}{\mathcal{V}}
\newcommand{\Pv}{\mathcal{P}}

\newcommand{\Lg}{\mathcal{L}}

\newcommand{\G}{\mathcal{G}}

\newcommand{\NICV}{\mathsf{NI}\text{-}\mathsf{CVQC}}

\newcommand{\Hy}{\ensuremath{\textbf{H}}}

\newcommand{\Sim}{\ensuremath{\mathcal{S}}}

\begin{document}
\title{Separating Non-Interactive Classical Verification of Quantum Computation from Falsifiable Assumptions}

\author{Mohammed Barhoush\inst{1}\thanks{Part of this work was done while visiting NTT Social Informatics Laboratories as an internship.} \and Tomoyuki Morimae\inst{2} \and Ryo Nishimaki\inst{3}  \and Takashi Yamakawa\inst{3,2}}
\institute{Universit\'e de Montr\'eal (DIRO), Montr\'eal, Canada\\  \email{mohammed.barhoush@umontreal.ca} \and Yukawa Institute for Theoretical Physics, Kyoto University, Kyoto, Japan\\ \email{tomoyuki.morimae@yukawa.kyoto-u.ac.jp} \and
NTT Social Informatics Laboratories\\
\email{ryo.nishimaki@ntt.com}, \email{takashi.yamakawa@ntt.com}}%

\maketitle

\begin{abstract}
Mahadev [SIAM J. Comput. 2022] introduced the first protocol for classical verification of quantum computation based on the Learning-with-Errors (LWE) assumption, achieving a 4-message interactive scheme. This breakthrough naturally raised the question of whether fewer messages are possible in the plain model. Despite its importance, this question has remained unresolved.

In this work, we prove that there is no quantum black-box reduction of non-interactive classical verification of quantum computation of $\QMA$ to any falsifiable assumption. Here, “non-interactive” means that after an instance-independent setup, the protocol consists of a single message.
\mor{Maybe adding something like as follows? (Here,  we use the term ``non-interactive'' to describe two-message protocols where the first message is independent of the statement to be proven.)}\takashi{If we regard it as a special case of two-message protocol, then it would be more natural to consider non-adaptive soundness since the instance is usually fixed before starting the protocol. Thus, I feel that adaptive soundness considered in the paper only makes sense when we regard it as a non-interactive protocol after an instance-independent setup.
}\mor{My intention of that comment was that I feel it seems better to mention that our "non-interactive" is not really "non-interactive". People might confuse that we consider the case where the prover sends a message and just it.}\mo{I added a statement to clarify above.}\takashi{My concern is that if we explain in that way, then that would give a wrong impression that we are considering non-adaptive soundness.
How about the following? "Here, “non-interactive” means that after an instance-independent setup, the protocol consists of a single message."? 
}\mo{ok I used your sentence.}
This constitutes a strong negative result given that falsifiable assumptions cover almost all standard assumptions used in cryptography, including LWE.  Our separation holds under the existence of a $\QMA \text{-} \QCMA$ gap problem. Essentially, these problems require a slightly stronger assumption than $\QMA\neq \QCMA$. To support the existence of such problems, we present a construction relative to a quantum unitary oracle.
\end{abstract}

\newpage
\section{Introduction}


Quantum computation and communication promise capabilities that are beyond what is classically achievable. Yet, building and operating quantum devices remains expensive and technologically demanding. A natural and practically important solution is delegated quantum computation: classical users outsource computations to powerful quantum servers, while retaining the ability to verify correctness. Designing such verification protocols is a central challenge in quantum cryptography. 


Early work in this direction \cite{B15,MF16,DBE+17} sought to minimize the quantum resources required of the verifier. A landmark breakthrough came with Mahadev’s protocol \cite{M18}, which showed for the first time that a fully classical verifier can reliably certify the outcome of quantum computations performed by an untrusted quantum prover. More concretely, Mahadev constructed an interactive proof system between a classical probabilistic polynomial-time (PPT) verifier and a quantum polynomial-time (QPT) prover with the following guarantee: for any language $L \in \QMA$ \footnote{Mahadev's work focuses on $\BQP$, but her $\textsf{CVQC}$ construction can be easily generalized to $\QMA$.}, if $x \in L$ with witness $\ket{\psi_x}$, the prover with input $x$ and polynomial copies of $\ket{\psi_x}$, can convince the verifier of this fact, while if $x \notin L$, no malicious QPT prover can succeed in convincing the verifier, except with negligible probability.

Mahadev’s protocol requires four messages of interaction and relies on the quantum hardness of the Learning-with-Errors (LWE) problem \cite{R07}. Subsequent works have refined various aspects of the original scheme \cite{GV19, Z22, NZ23, BKM+25, ACG+20, CCY20, B21}. In particular, several lines of research reduced the round complexity in the quantum random oracle model \cite{ACG+20,CCY20,B21}, ultimately achieving a non-interactive scheme \cite{ACG+20}. Despite these advances, reducing the round complexity in the plain model has remained a central and long-standing open problem in quantum cryptography. 

\subsection{Our Contribution} 

In this work, we show that non-interactive {classical verification of quantum computation} $(\NICV)$ of $\QMA$ cannot be based on any falsifiable assumptions, assuming the existence of a notion termed $\QMA \text{-} \QCMA$ gap problem. Following \cite{ACG+20}, we use the term ``non-interactive'' to describe two-message protocols where the first message is independent of the statement to be proven \footnote{Such a message is often referred to as a common reference string or public key.}. 

More precisely, a $\NICV$ for a language $L\in \QMA$ is an interactive protocol between a QPT prover and PPT verifier, specified by three algorithms $(\mathcal{G},\Pv,\mathcal{V})$ with the following syntax:
\begin{enumerate}
    \item \textbf{Key generation:} On input the security parameter $1^n$, the verifier runs $(\sk,\pk)\gets \mathcal{G}(1^n)$ to produce a public key $\pk$ and a secret verification key $\sk$. The verifier sends $\pk$ to the prover. (This step is independent of the instance.)
    \item \textbf{Proof generation}: Given the public key $\pk$, an instance $x \in L$, and a quantum witness state $\ket{w_x}^{\otimes t}$ for $x$ with an appropriate number of copies $t$, the prover computes a classical proof $\pi\gets \mathcal{P}(\pk,x,\ket{w_x}^{\otimes t})$ and sends $(x,\pi)$ to the verifier.
    \item \textbf{Verification:} Using $\sk$, the verifier runs $1/0\gets \mathcal{V}(\sk,x,\pi)$ to decide whether to accept the claim that $x\in L$.
\end{enumerate}
Soundness requires that no QPT malicious prover can convince the verifier to accept an element not in $L$ {except with negligible probability} \footnote{In fact, our impossibility result also excludes schemes with a more relaxed notion of soundness, where an adversary succeeds with at most a constant $p<1$ probability.}. 
Note that our definition and impossibility result covers $\NICV$ with designated-verifiers, meaning that the verifier is allowed to use a secret verification key $\sk$, which is hidden from the prover. 


Our main result is a quantum black-box separation: 
\begin{theorem}[Informal of \cref{thm:main}]
\label{inf thm: main}
    Let $L$ be a $\QMA$ language with a subexponential $\QMA \text{-} \QCMA$ gap problem and let $\Pi$ be a $\NICV$ for $L$. Then, for any falsifiable assumption, one of the following statements hold:
\begin{enumerate}
    \item The falsifiable assumption is false.
    \item There is no quantum black-box reduction showing the soundness of $\Pi$ from the falsifiable assumption. 
\end{enumerate}
\end{theorem}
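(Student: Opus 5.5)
This follows the Gentry--Wichs meta-reduction paradigm, adapted to quantum reductions and quantum witnesses. Suppose a quantum black-box reduction $R$ breaks the falsifiable assumption whenever it has oracle access to any (possibly inefficient) adversary that breaks soundness of $\Pi$ with non-negligible probability. The goal is to build a \emph{simulatable} adversary $\mathcal{A}$, consisting of an inefficient cheating prover $\mathcal{A}$ together with an efficient simulator $\Sim$, such that no QPT distinguisher, and in particular $R$, can tell oracle access to $\mathcal{A}$ from oracle access to $\Sim$. Given this, $R^{\mathcal{A}}$ breaks the assumption. Indistinguishability then gives that $R^{\Sim}$ also breaks it, because the challenger of a falsifiable assumption is efficient and so its win/lose output is an efficiently checkable event. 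Since $R^{\Sim}$ is a QPT algorithm, the assumption is false, which is alternative~1 of the theorem.

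To build $\mathcal{A}$ and $\Sim$, I use the subexponential $\QMA$-$\QCMA$ gap problem. I take it to consist of two distributions over instances, $Y \subseteq L$ and $N \cap L = \emptyset$, with three properties. First, $Y$ and $N$ are indistinguishable to efficient algorithms. Second, yes-instances have quantum witnesses $\ket{w_x}$ that are efficiently samplable together with $x$. Third, no classical witness of subexponential length helps decide membership.

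The inefficient adversary $\mathcal{A}$ works as follows. On input $\pk$, it samples a no-instance $x\gets N$ and outputs $(x,\pi)$, where $\pi$ is found by brute force so that $\pi$ is accepted with high probability over the $\sk$ consistent with $\pk$. Completeness on yes-instances, together with the fact that only a classical $\pi$ is output, should make such a $\pi$ exist or be approximable. The simulator $\Sim$, on input $\pk$, instead samples $(x,\ket{w_x}^{\otimes t})$ from the yes-distribution and runs the honest prover $\Pv(\pk,x,\ket{w_x}^{\otimes t})$. Its output is accepted by completeness.

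The core claim is that $\mathcal{A}$ and $\Sim$ are indistinguishable even to an adversary making polynomially many (quantum) queries. I plan to prove this with a hybrid over the queries. Suppose a distinguisher $D$ separated them. Then the transcript $\pi$ for a fixed $x$ would serve as a polynomial-length classical certificate: with $\pk$ fixed, $\pi$ is classical and a verifier can check it. A subexponential-size classical advice string, built from the reduction's queries, could then distinguish $Y$ from $N$. This contradicts the $\QCMA$-hardness half of the gap problem, since the $\QMA$ side is what makes the honest proofs exist on $Y$. More concretely, I will fix $\pk$ for each query and argue that the distribution of the classical $\pi$ given by $\mathcal{A}$ on no-instances can be chosen to match the distribution of honest proofs on yes-instances. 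If instead every such choice were detectable, the classical string $\pi$ would be a $\QCMA$ witness separating the cases. The subexponential parameter absorbs the union over the reduction's polynomially many queries and the $2^{|\pk|}$-type guessing losses.

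I expect the main obstacle to be this last step: making $\mathcal{A}$'s fake proofs indistinguishable from honest ones \emph{jointly} across the reduction's quantum queries, including rewinding-style and superposition access. This is unlike the classical setting, where $\mathcal{A}$ could simply lazily sample a consistent pseudo-proof. My first attempt will be to define $\mathcal{A}$ as a fixed stateless function of $(\pk,\text{randomness})$, derived by a minimax argument over all polynomial-size classical distinguishers of subexponential description. This yields a single "classical-proof simulator" that fools all such tests, and the $\QMA$-$\QCMA$ gap assumption is exactly what guarantees that the minimax value is favorable.
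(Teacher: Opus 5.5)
Your outer meta-reduction matches the paper's final step: run $\Sigma$ against an inefficient prover, swap in an efficient simulator, and conclude that $\Sigma^{\Sim}$ breaks $(\mathcal{C},c)$. But both ingredients that make it work are missing or wrong.

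\textbf{The fake prover.} Outputting some $\pi$ that $\mathcal{V}$ accepts is not enough. The reduction chooses $\mathsf{pk}$ itself and can test $(x,\pi)$ in ways unrelated to $\mathcal{V}$, so the joint distribution of $(\overline{x},\overline{\pi})$ must be indistinguishable from (yes instance, honest proof). Also, completeness on yes instances gives no reason for accepted proofs of no instances to exist. In the paper, acceptance is a \emph{consequence} of indistinguishability, since running $\mathcal{V}$ with an $\mathsf{sk}$ consistent with $\mathsf{pk}$ is one of the tests the fake proofs must survive.

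Your fallback minimax argument is the right shape, but you leave open the step that needs the new idea: why the gap assumption makes the minimax value small. Because $|\pi|$ is polynomial rather than succinct, Gentry--Wichs brute force over $\pi$ is unavailable. The paper's leakage lemma (\cref{result 1}) handles this in two moves.
\begin{enumerate}
\item It fixes $q^3$ samples from the minimax-optimal distinguisher distribution, justified by Chernoff plus a union bound over all $2^{p(n)}$ strings $\pi$.
\item It estimates $\textsf{Val}_{min}(x)=\min_\pi\textsf{Val}(x,\pi)$ with $q$ threshold queries $\QCMA(V_n^{i/q},x)$. Here $V_n^{t}$ accepts a classical witness $\pi$ iff the averaged output of the fixed distinguishers on $(x,\pi)$ is at most $t$, and Chebyshev shows the promise gap costs only $O(1/q)$.
\end{enumerate}
This is exactly why the hypothesis is indistinguishability against distinguishers with classical access to a $\mathsf{PromiseQCMA}$ oracle. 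Your property ``no classical witness helps decide membership'' is not the stated hypothesis, and you never show it lets a distinguisher estimate $\min_\pi$.

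\textbf{Quantum access.} Your $\Sim$ draws a fresh $(x,\ket{w_x}^{\otimes t})$ on every call. So it is not a fixed function: querying the same $\mathsf{pk}$ twice exposes it. It is also not defined on superposition queries. A hybrid over the reduction's queries does not go through when a single query can touch all inputs.

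The paper proceeds as follows.
\begin{enumerate}
\item The adversary is a random function with $\overline{\Pv}(1^m,\mathsf{pk})\gets\overline{\Lg}^*_{m,\mathsf{pk}}$.
\item $\Sim$ is the compressed-oracle simulator of \cref{thm:compressed} for a random function whose outputs come from the QPT-samplable honest distribution $\Lg^*_{m,\mathsf{pk}}$.
\item Per-input sample indistinguishability is upgraded to quantum-oracle indistinguishability via \cref{lem:oracle indis}.
\item The hybrid runs over input lengths $m$, not queries, with a table of fake proofs hard-wired as advice for $m<m^*=\lfloor\log^{1/d}n\rfloor$.
\item Complexity leveraging to $2^{-n^{d+3}}$ hardness pays for the $2^{O(m^d)}$-size hybrid distinguishers and for the loss in the upgrade lemma over $k\approx 2^{m^d}$ inputs.
\end{enumerate}
As written, your proposal correctly identifies these obstacles but does not overcome them.
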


Falsifiable assumptions encompass essentially all standard cryptographic assumptions, including one-way functions, trapdoor permutations, RSA, and LWE. Intuitively, an assumption is falsifiable if it can be cast as an efficient interactive game between a challenger and an adversary, where the challenger can verify whether the adversary has won the game. Note that we cannot establish a separation from non-falsifiable assumptions, as $\NICV$ is itself a non-falsifiable assumption. By \cref{inf thm: main} and Mahadev's 4-message protocol for $\textsf{CVQC}$, we immediately obtain the following corollary: 
\begin{corollary}
    Assuming LWE and the existence of a subexponential $\QMA \text{-} \QCMA$ gap problem, there is no quantum black-box reduction of $\NICV$ to 4-message $\textsf{CVQC}$. 
\end{corollary}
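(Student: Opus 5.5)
The corollary follows directly from \cref{inf thm: main} once Mahadev's 4-message protocol is recast as a falsifiable assumption. The plan is to set $\mathsf{A}$ to be the assumption ``$\Pi_{\mathsf{M}}$ is sound,'' where $\Pi_{\mathsf{M}}$ is Mahadev's 4-message $\textsf{CVQC}$ protocol for $L$ (generalized to $\QMA$ as in the footnote), and then invoke the theorem.

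\textbf{Step 1: soundness of a 4-message $\textsf{CVQC}$ is falsifiable.} We cast soundness of $\Pi_{\mathsf{M}}$ as a game between an efficient challenger and a QPT adversary. The adversary first announces an instance $x$ and then interacts with the challenger, which plays the classical PPT verifier of $\Pi_{\mathsf{M}}$ for four messages. The adversary wins if the verifier accepts and $x\notin L$.

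The only difficulty is that deciding $x\notin L$ is not efficient for a general $\QMA$ language. I would handle this in one of two ways.
\begin{itemize}
\item Fix the instance non-uniformly: quantify over a fixed sequence of NO instances $x_n$ hardwired into the challenger, so the challenger only runs the verifier.
\item Alternatively, use the $\QMA\text{-}\QCMA$ gap problem's promise structure, where NO instances are sampled efficiently by the challenger.
\end{itemize}
Since the separation in \cref{thm:main} presumably concerns the same soundness notion used for $\NICV$, I would choose whichever formalization of falsifiability the main theorem uses. I would check that the challenger runs in polynomial time, which holds because Mahadev's verifier is PPT. I would also check that the winning threshold is negligible, which is the standard form of a falsifiable assumption. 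This formalization of the non-efficiently-checkable NO condition is the only potentially delicate step.

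\textbf{Step 2: the assumption holds.} Under LWE, Mahadev's theorem states that $\Pi_{\mathsf{M}}$ is sound against QPT provers. Hence $\mathsf{A}$ is true, so option 1 of \cref{inf thm: main} is excluded.

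\textbf{Step 3: conclude.} Apply \cref{inf thm: main} with the subexponential $\QMA\text{-}\QCMA$ gap problem and any $\NICV$ $\Pi$ for $L$. Since $\mathsf{A}$ is true, the theorem leaves only option 2: there is no quantum black-box reduction proving soundness of $\Pi$ from $\mathsf{A}$.

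A black-box reduction of $\NICV$ to 4-message $\textsf{CVQC}$ means a reduction that uses any cheating $\NICV$ prover as an oracle to break the soundness of the 4-message protocol. That is exactly a black-box reduction from the falsifiable assumption $\mathsf{A}$, so no such reduction exists, which proves the corollary.
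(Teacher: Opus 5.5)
Your Steps 2 and 3 match the paper's one-line argument. Step 1, however, is not resolved, even though you correctly flag it as the delicate point, and as you set it up it fails.

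You instantiate the 4-message protocol for the $\QMA$ language $L$ itself. For that choice the soundness game is \emph{not} falsifiable, because the challenger would have to decide $x\notin L$. This is exactly why the paper's separation concerns $\QMA$ rather than $\BQP$, and why $\NICV$ itself is non-falsifiable. So choosing ``the same soundness notion as for $\NICV$'' makes the main theorem inapplicable.

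Neither workaround repairs this:
\begin{itemize}
\item The gap-problem definition only guarantees a not-necessarily-efficient \textsf{SampNo}. Efficient no-sampling is an incidental feature of the oracle construction, as the footnote there says, so the challenger cannot sample no instances.
\item Hardwiring a fixed sequence $x_n$ of no instances (which also needs a non-uniform challenger) gives an assumption about one predetermined sequence of false statements. It says nothing about reductions that choose the false statement adaptively, e.g.\ by forwarding the $\overline{x}$ output by the $\NICV$ adversary. For such reductions the paper's argument genuinely yields no contradiction: in the simulated world $\Sigma^{\Sim}$ gets accepted on \emph{yes} instances, which violates nothing. Only a challenger able to test $x\notin L$ could tell the two worlds apart.
\end{itemize}

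The missing idea is to use Mahadev's protocol in its original $\BQP$ form. The falsifiable assumption need not concern the same language as the $\NICV$. For a $\BQP$ language $L'$, the challenger proceeds as follows:
\begin{itemize}
\item it lets the adversary announce $x$;
\item it plays the PPT verifier for the four messages;
\item it decides $x\notin L'$ itself, using the $\BQP$ decider with error amplified to negligible;
\item it outputs $1$ iff the verifier accepted and $x\notin L'$.
\end{itemize}
This is a QPT challenger, so soundness of the 4-message protocol is a genuine falsifiable assumption $(\mathcal{C},c)$; the paper says exactly this in its limitations section. The assumption holds under LWE: soundness for each fixed $x\notin L'$ against non-uniform provers extends, by a standard averaging argument, to an instance chosen by the prover before the first verifier message. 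The main theorem quantifies over all falsifiable assumptions, so it then yields the corollary.

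If you insist on the $\QMA$ version, the only route through the main theorem is to compose with Mahadev's security reduction and apply the theorem to LWE directly. That would require checking that her proof is itself a quantum black-box reduction that remains valid for adversarially chosen false instances.
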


A technical component of our separation is the notion of $\QMA \text{-} \QCMA$ gap problems. Informally, this is a $\QMA$ problem with the following properties:
\begin{itemize}
    \item Yes instances, together with valid witnesses, can be efficiently sampled.
    \item No instances can be (possibly inefficiently) sampled.
    \item No QPT adversary with access to a $\PQCMA$ oracle can distinguish between yes and no instances.
\end{itemize}
Note that, in this work, a $\QCMA$ oracle refers to an oracle that solves any Promise $\QCMA$ problem (see \cref{sec:lang}). In the subexponential version of a $\QMA \text{-} \QCMA$ gap problem, indistinguishability holds even against subexponential-time adversaries. The existence of such problems is deeply connected to the longstanding open question of whether $\QMA$ contains problems that do not belong to $\QCMA$—a fundamental question explored in several studies \cite{AK07,LLP+23,NN23,LMY24,Z24,BHN+25,BHV26}. Such works demonstrate separations between $\QMA$ and $\QCMA$ relative to various oracles, thus providing supporting evidence for the existence of a $\QMA \text{-} \QCMA$ gap problem. 

That being said, a $\QMA \text{-} \QCMA$ gap problem requires slightly more than $\QMA\neq \QCMA$, so we provide further evidence by giving a construction in this work. 

\begin{theorem}[Informal of \cref{result 2}]
There exists a subexponential $\QMA \text{-} \QCMA$ gap problem relative to a quantum unitary oracle. 
\end{theorem}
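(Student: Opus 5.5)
The plan is to build an oracle relative to which a $\QMA$ problem has efficiently samplable yes-instances with witnesses and samplable no-instances, yet no subexponential-time quantum algorithm with a $\QCMA$ oracle can tell them apart. I would start from the known oracle separations of $\QMA$ and $\QCMA$ \cite{AK07,LLP+23,NN23,LMY24,Z24,BHN+25,BHV26}. Concretely, the oracle consists of a family of unitaries $\{U_x\}_x$ indexed by $n$-bit strings. For each $x$, a secret bit $b_x$ decides the form of $U_x$:
\[
U_x=\begin{cases} I-2\ket{\psi_x}\bra{\psi_x} & \text{if } b_x=1 \text{ (a Haar-random state } \ket{\psi_x}\text{ is marked)},\\ I & \text{if } b_x=0. \end{cases}
\]
The language is $L=\{x : b_x=1\}$. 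The $\QMA$ verifier takes a witness $\ket{\psi_x}$, applies a controlled $U_x$, and checks for the phase flip via a Hadamard test. This gives completeness $1$. Soundness holds because, for $U_x=I$, no phase appears at all.

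Efficient sampling of yes-instances together with witnesses needs an extra oracle $S$. On input a seed $r$, $S$ outputs $x=h(r)$ and a copy of $\ket{\psi_x}$, where $h$ is a random injection into a sparse set of yes-instances. I would implement $S$ as a unitary that prepares $\ket{x}\ket{\psi_x}$ from $\ket{r}\ket{0}$. No-instances are sampled inefficiently, as a uniformly random $x$ outside the image of $h$ with $b_x=0$. Sparsity, say $2^{n/2}$ yes-instances among $2^n$ strings, keeps a random $x$ from hitting $h$'s image. To prevent trivially recognizing the image of $h$, I would make $h$'s range hidden, so that membership in the image is itself hard to test without inverting $S$.

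For indistinguishability against a QPT adversary with a $\QCMA$ oracle (the $\QCMA$ oracle is relative to the same quantum oracle), I would proceed in three steps.
\begin{enumerate}
\item \emph{Classical advice.} A $\QCMA$ query uses a polynomial-length classical witness, so I would replace each $\QCMA$ query with an algorithm holding polynomially many bits of advice about the oracle. This uses the standard argument that a classical witness can be guessed at exponential loss, or handled via the presampling / bit-fixing technique for quantum oracles with advice (as in \cite{LLP+23,LMY24}). This reduces the problem to: subexponential-query algorithms with $\mathrm{poly}$-bit (or subexponential-bit) advice cannot distinguish a fresh yes-instance $x=h(r)$ from a random no-instance.
\item \emph{Hybrid on the challenge.} Given $x$ but no copy of $\ket{\psi_x}$, the adversary must decide whether $U_x$ reflects about a Haar-random state or is the identity. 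Distinguishing a reflection about a random state from the identity requires $\Omega(2^{n/2})$ queries (BBBV / Grover lower bound). Advice about the whole oracle cannot help for a fresh random $x$, because advice of length $a$ can fix only about $a$ instances. Presampling bounds the advantage by roughly $O(\sqrt{aT^2/2^{n}})$.
\item \emph{Queries to $S$.} I would argue that the adversary, by querying $S$, cannot obtain $\ket{\psi_x}$ for the challenge $x$ unless it finds the seed $r$. Finding $r$ is a preimage search against a random injection, which again costs exponentially many queries. Parameters are set so that subexponential $T$ and advice give negligible advantage.
\end{enumerate}

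The main obstacle is step 1. A $\QCMA$ oracle can be queried adaptively on instances depending on the challenge $x$, and its witnesses may encode information about $\ket{\psi_x}$ only through classical strings. Rigorously turning this adaptive access into a bounded-advice argument for unitary oracles, including the Haar-random states and the sampler $S$, is delicate. I would first try the approach of replacing the Haar states with their $t$-design approximations and applying a unitary-oracle presampling lemma, and then union bound over the polynomially many adaptive queries.
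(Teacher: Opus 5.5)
Your construction fails at the sampler $S$, and the failure is fatal rather than merely delicate. Because $S$ sends the basis state $\ket{r}\ket{0}$ to $\ket{h(r)}\ket{\psi_{h(r)}}$, every sampled yes-instance $x$ has a short classical certificate: a seed $r$ with $h(r)=x$. Consider the verifier $V^{S}(x,r)$ that queries $S$ once on $\ket{r}\ket{0}$, measures the instance register, and compares the result with $x$. It accepts with probability $1$ if $x$ is in the image of $h$, and with probability $0$ otherwise.

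So every $x$ in the support of $\mathcal{L}_n$ or $\overline{\mathcal{L}}_n$ is a promise input of this $\textsf{PromiseQCMA}$ problem. Every oracle in $\mathcal{O}^{\mathcal{T}}_{\textsf{PromiseQCMA}}$ must answer it correctly, and a single classical query $(V,x)$ separates yes from no with advantage $1$. Your step 3 claim, that finding $r$ is an exponentially expensive preimage search, holds only for algorithms without the $\QCMA$ oracle. That search is exactly what the oracle performs for free, and hiding the range of $h$ does not change this.

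The same point shows that step 1 targets the wrong model. $\QCMA$ queries can be made on inputs that depend on the challenge, so the effective classical witness may depend on $x$ and on $\ket{\psi_x}$. Challenge-independent advice with presampling does not capture this. Guessing a polynomial-length witness also fails: it costs a factor $2^{|w|}$, which overwhelms any bound of order $2^{-n/2}$.

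Your reflection oracle and Hadamard-test verifier are fine; they are the paper's $U_n$. What must change is the yes-sampler, which must give no classical handle on individual instances. The paper uses a seedless unitary $\textsf{GenYes}_n$ that swaps $\ket{0}$ with $\frac{1}{\sqrt{|L_n|}}\sum_{x\in L_n}\ket{x}\ket{\psi_x}^{\otimes t}\ket{1}$, for a random sparse $L_n$ of size about $2^{n/2}$. It produces all $t$ copies at once, since the same $x$ cannot be obtained twice. An analogous $\textsf{GenNo}_n$ handles $\overline{L}_n$.

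The security proof is a hybrid argument:
\begin{enumerate}
\item Remove the challenge $y$ from $\textsf{GenYes}_n$'s superposition. A single $x$ carries amplitude only $|L_n|^{-1/2}$, so the per-query operator-norm cost is exponentially small. Afterwards the remaining oracles are independent of $\ket{\psi_y}$.
\item Replace the reflection about $\ket{\psi_y}$ by the identity using \cref{lem:oracle}. There the witness is maximized \emph{inside} the expectation over $\ket{\psi_y}$, so it may depend on the challenge.
\item Move $y$ into $\textsf{GenNo}_n$'s superposition, switch the challenge to a no-instance, and undo the changes.
\end{enumerate}

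The step from classical witnesses to adaptive $\QCMA$ access goes through \cref{thm:oracle}, not through advice. One fixes the oracle on non-promise inputs to ``run the verifier $1000n$ times on the best witness and threshold at $1/2$''. Queries are then replaced one at a time, and each step is bounded via \cref{lem:bb} with $(\tilde V,\tilde x,\tilde w)$ used as the witness. Markov's inequality and Borel--Cantelli then give the bound with probability $1$ over the oracle.
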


\subsection{Limitations and Open Problems}

Our separation holds for $\NICV$ of $\textsf{QMA}$, and not $\textsf{BQP}$. Indeed, it is not possible to separate $\textsf{CVQC}$ for $\BQP$ from falsifiable assumptions: the primitive itself is a falsifiable assumption as checking whether an element belongs to a $\BQP$ language can be done in QPT. 

That said, we expect the proof technique to adapt relatively straightforwardly to yield a conditional impossibility of \emph{classical} reductions of $\NICV$ for $\BQP$ to falsifiable assumptions with \emph{classical} challengers/verifiers. A natural approach is to use a suitable gap problem that remains secure against $\textsf{NP}$ distinguishers---that is, against classical polynomial-time algorithms with access to an \textsf{NP} oracle. Provided there exists a $\BQP$ language that is hard for such $\textsf{NP}$ distinguishers, then an analogous argument should establish the desired separation. Formalizing this approach may be an interesting direction of future work. 


Furthermore, $\NICV$ are naturally required to satisfy adaptive soundness, where the cheating prover may choose the false statement after seeing the public key $\pk$. In contrast, under the weaker static soundness notion, the false statement is chosen before the verifier sends its public key. The latter notion is not covered by our impossibility result. Extending our separation to rule out static soundness remains an interesting open problem (or, alternatively, constructing such arguments would also be significant).


\subsection{Related Work}

\subsubsection{Black-Box Separation for \textsf{SNARG}s.}

Our separation shares structural similarities with the seminal result of Gentry and Wichs \cite{GW11}, who established a black-box separation between succinct non-interactive arguments (\textsf{SNARG}s) and falsifiable assumptions. A \textsf{SNARG} is a non-interactive proof system for \textsf{NP} statements in which the message lengths are \emph{short}--traditionally at most polylogarithmic in the instance and witness length. Gentry and Wichs introduced the notion of \emph{subexponential subset membership problems} for \textsf{NP}, requiring an \textsf{NP} language with efficiently samplable yes instances (with witnesses) and not-necessarily-efficiently samplable no instances such that the two distributions are indistinguishable against subexponential-time adversaries. Their work showed that no black-box reduction can base the soundness of \textsf{SNARG}s on falsifiable assumptions assuming the existence of such problems. This separation was later extended to the quantum setting, ruling out even quantum black-box reductions \cite{ADS+24}. 


\subsubsection{Separating $\QMA$ and $\QCMA$.}

The existence of $\QMA \text{-} \QCMA$ gap problems is closely tied to the broader question of separating $\QMA$ from $\QCMA$—a central open problem in quantum complexity theory. Aaronson and Kuperberg \cite{AK07} first established a quantum unitary oracle relative to which $\QMA\neq \QCMA$. Ideally, such separations should rely on classical oracles, leading to several follow-up works moving towards this goal \cite{LLP+23,NN23,LMY24,Z24,BHN+25}. 

Our construction of a $\QMA \text{-} \QCMA$ gap problem is based on a quantum unitary oracle, resembling Aaronson and Kuperberg's oracle. However, our approach requires additional structure: we need a $\QMA$ language where yes and no instances are indistinguishable against $\QCMA$ oracle-aided distinguishers, rather than merely undecidable by $\QCMA$ algorithms.  We believe that techniques from more recent oracle separations \cite{LLP+23,NN23,LMY24,Z24,BHN+25} may also be adapted to construct $\QMA \text{-} \QCMA$ gap problems in alternative oracle models. Exploring this direction is a potential avenue for future research.

\subsection{The Gentry-Wichs Separation.} 
Since our separation bears some similarities to the Gentry-Wichs separation of \textsf{SNARG}s from falsifiable assumptions \cite{GW11}, we briefly review their construction.

Let $L$ be an \textsf{NP} language with a subexponential gap problem. Assume that there exists a black-box reduction $\Sigma$ from a \textsf{SNARG} to a falsifiable assumption. Then, by definition, $\Sigma$ must succeed in breaking the assumption when given oracle access to an adversary breaking the \textsf{SNARG}.

At a high level, the separation proceeds in four steps:

\begin{enumerate}
    \item \textit{The Leakage Lemma.} If two distributions $\Lg$ and $\overline{\Lg}$ are indistinguishable, then even when augmented with auxiliary strings of short length, their extensions remain indistinguishable. The shortness is key: it ensures that brute-force search suffices to recover auxiliary strings, so this information should not allow distinguishing yes and no instances of the language. This result is known as the leakage lemma. 
    

    \item \textit{Constructing an adversary.} 
    Define $\Lg$ and $\overline{\Lg}$ as yes/no instance distributions of $L$. Extend $\Lg$ to $\Lg^*$ by pairing instances with their honest \textsf{SNARG} proofs. The leakage lemma implies the existence of a distribution $\overline{\Lg}^*$ over no instances paired with valid-looking proofs that is indistinguishable from $\Lg^*$. Note that these proofs must pass the verification algorithm of the $\textsf{SNARG}$ scheme, otherwise the two distributions, ${\Lg}^*$ and $\overline{\Lg}^*$, can easily be distinguished. Let $\overline{\Pv}$ be the possibly-inefficient algorithm that samples from $\overline{\Lg}^*$. 
    

    \item \textit{Simulating the adversary.} Let $\Sim$ be the algorithm that samples yes instances and honestly proves them. In other words, $\Sim$ samples from the distribution $\Lg^*$. Use sample indistinguishability of $\Lg^*$ and $\overline{\Lg}^*$ to argue oracle indistinguishability of $\Sim$ and $\overline{\Pv}$. 
    
    \item \textit{Breaking the reduction.} Given that $\overline{\Pv}$ produces valid proofs for no instances, it constitutes an inefficient attack against \textsf{SNARG} soundness. Thus, $\Sigma^{\overline{\Pv}}$ constitutes an attack against the assumption. Because $\Sigma$ cannot distinguish oracle access to $\overline{\Pv}$ from $\Sim$, it must succeed in breaking the assumption in both cases. But since $\Sim$ is efficient, $\Sigma^{\Sim}$ yields an efficient attack on the assumption—a contradiction.
\end{enumerate}

\subsection{Technical Overview of Our Separation.}

We now describe our separation of $\NICV$ from falsifiable assumptions. Let $L$ be a \textsf{QMA} language with a subexponential $\QMA\text{-}\QCMA$ gap problem. Assume that there exists a quantum black-box reduction $\Sigma$ from a $\NICV$ to a falsifiable assumption. 

\subsubsection{Modified leakage lemma.}
Let $\Lg$ and $\overline{\Lg}$ be yes and no instance distributions of $L$. We aim to establish an analogous result to the leakage lemma, showing that $\Lg$ and $\overline{\Lg}$ remain indistinguishable in the presence of auxiliary classical information, reflecting the possible proofs in a $\NICV$. However, in our case, the  $\NICV$ proofs are not necessarily short; so the resulting space of possible proofs is exponentially large. This renders brute-force search computationally intractable, meaning the leakage lemma does not apply.  


To address this challenge, we augment the adversary with access to a $\PQCMA$ oracle. This provides a carefully calibrated level of computational power: it is sufficient to facilitate the search without compromising the hardness of the underlying $\QMA$ problem. More specifically, the adversary must be able to search through the space of possible proofs (as in the Gentry-Wichs scheme), without gaining the capacity to break the indistinguishability between $\Lg$ and $\overline{\Lg}$. The ability to ``search through classical proofs'' is similar to the ability to solve $\PQCMA$ problems since this involves finding a classical witness, so this oracle is the natural choice. 

Roughly, a $\PQCMA$ oracle takes an input $(V,x)$, consisting of a QPT algorithm $V$ and an input $x$, and outputs 1 if there exists a classical witness $w$ such that $\Pr[V(x,w)=1]\geq 2/3$ and outputs $0$ if for any string $\widetilde{w}$ of appropriate length, $\Pr[V(x,\widetilde{w})=1]\leq 1/3$. Our key insight is that finding the correct auxiliary information can be modeled as a classical witness search,  making it reducible to a $\PQCMA$ problem. On the other hand, solving a $\QMA$ problem requires a \emph{quantum} witness; thus, $\QMA$ may remain hard even for an adversary with $\PQCMA$ oracle access. We formalize this by asserting that the $\QMA$ language has a $\QMA \text{-} \QCMA$ gap problem.

Leveraging this oracle requires a more sophisticated approach than the Gentry-Wichs brute-force method, as casting the search for auxiliary information into a $\PQCMA$ framework is non-trivial. 

With this idea, we establish our version of the leakage lemma. 

\begin{theorem}[Informal of \cref{result 1}]
Let $p$ be a polynomial and let $L$ be a language with a $\QMA \text{-} \QCMA$ gap problem $(\mathcal{L},\overline{\mathcal{L}})$ (yes, no instance distributions). Assume that there exists a distribution $\mathcal{L}^*$ on pairs $(x,\pi)$ where $x$ is distributed according to $\mathcal{L}$ and $\pi$ is some arbitrary string of length at most $p(\lvert x\rvert)$. 
Then there exists a distribution $\overline{\Lg}^*$ on pairs $(\overline{x},\overline{\pi})$ where $\overline{x}$ is distributed according to 
$\overline{\Lg}$ and $\overline{\pi}$ has length at most $p(\lvert \overline{x}\rvert)$ such that $\Lg^*$ and $\overline{\Lg}^*$ are computationally indistinguishable.
\end{theorem}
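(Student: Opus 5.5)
We follow the Gentry--Wichs proof of the leakage lemma, which goes through the min-max theorem and a boosting argument. We replace the brute-force step over auxiliary strings with a call to the $\PQCMA$ oracle. The indistinguishability we aim for is against (subexponential-size) QPT distinguishers. The distinguishers we build may additionally query the $\PQCMA$ oracle, which the gap problem still tolerates.

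We argue by contradiction. Suppose every distribution $\overline{\Lg}^*$ whose first marginal is $\overline{\Lg}$ and whose second component has length at most $p(\lvert \overline{x}\rvert)$ is distinguished from $\Lg^*$ by some efficient distinguisher with advantage $\epsilon$. We view this as a zero-sum game. One player picks a (randomized) map $\overline{x}\mapsto\overline{\pi}$, and the other picks a distinguisher $D$ from a bounded class (QPT circuits of some size $s$). Von Neumann's min-max theorem, applied to convex combinations of distinguishers, then gives a single distribution $\mathbf{D}$ over size-$s$ distinguishers. This $\mathbf{D}$ satisfies $\mathbb{E}[D(\Lg^*)]-\mathbb{E}[D(\overline{x},\pi)]\ge\epsilon$ for every choice of $\pi=\pi(\overline{x})$ simultaneously.

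Next we make this universal distinguisher efficient. We sample $k=\mathrm{poly}(n,1/\epsilon)$ distinguishers $D_1,\dots,D_k$ from $\mathbf{D}$ and let $\widehat{D}$ be their average. A Chernoff/union bound over the relevant inputs would normally require a union bound over all $2^{p}$ proofs. To avoid this, we keep the average-case statement: we fix a $\widehat{D}$ that works against the best-response map. Alternatively we apply the standard non-uniform boosting/hardcore argument, obtaining a circuit of size $\mathrm{poly}(s,1/\epsilon)$ whose guarantee holds for the optimal $\pi$. This is the point where subexponential hardness is consumed, since the $\widehat D$ is non-uniform but of bounded size.

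Then we build a $\PQCMA$-aided distinguisher $B$ for $\Lg$ versus $\overline{\Lg}$. On input $x$, $B$ asks the oracle whether there exists a classical string $\pi$ of length at most $p(\lvert x\rvert)$ such that $\Pr[\widehat D(x,\pi)=1]\ge\tau$, for a threshold $\tau$ chosen between the two averages. This is a $\PQCMA$ instance: the witness is the classical $\pi$, and the verifier runs $\widehat D$. Because of the promise gap, we run the query over a random threshold $\tau$ from a grid of width $\epsilon/10$, so that with good probability the instance is promise-respecting or either answer is harmless. Alternatively we use the oracle plus binary search to find a near-maximizing $\pi$ and estimate $\widehat D(x,\pi)$ directly. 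On yes instances the honest $\pi$ from $\Lg^*$ achieves a high value on average. On no instances even the best $\pi$, which is the adversarial choice in the game, achieves value at least $\epsilon$ lower on average. So $B$ distinguishes with advantage $\Omega(\epsilon)$, contradicting the $\QMA\text{-}\QCMA$ gap property.

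We expect the main obstacle to be this final step. Maximizing over $\pi$ gives only a one-sided comparison: on yes instances the max exceeds the honest $\pi$, which helps, but we need the max value to be separated, not just the average. We would handle this by defining the game's payoff directly as $\max_\pi\widehat D(\overline x,\pi)$ versus $\mathbb{E}\widehat D(\Lg^*)$ and thresholding with an averaging argument. We must also handle the promise-gap ambiguity of the $\PQCMA$ oracle with the randomized-threshold trick, and check that the min-max step yields a distinguisher of size small enough for the subexponential assumption.
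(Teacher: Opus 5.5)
Your outline matches the paper's proof: contradiction, min-max to a single distribution $\mathbf{D}$ over distinguishers, sparsification to polynomially many samples, then threshold queries to the $\PQCMA$ oracle to estimate the extremal value over $\pi$. You use $\max_\pi$ with distinguishers favoring yes instances; the paper uses the mirror image, $\textsf{Val}_{\min}$ with a linear scan over thresholds $i/q$. The step that fails as you wrote it is the sparsification. The union bound over all $\pi$ is not an obstacle to avoid; it is exactly what is needed, and your ``average-case'' substitute does not work.

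Write $\textsf{Val}(x,\pi)=\mathbb{E}_{D\leftarrow\mathbf{D}}[\Pr[D(x,\pi)=1]]$. The map you want $\widehat D$ to beat, $\overline{x}\mapsto\arg\max_\pi\widehat D(\overline{x},\pi)$, is chosen after $\widehat D$ and can exploit its sampling error. In general $\mathbb{E}_{D_1,\dots,D_k}[\max_\pi\widehat D(\overline x,\pi)]\ge\max_\pi\textsf{Val}(\overline x,\pi)$. With $2^{p}$ candidates, this upward bias on no instances can be of order $\epsilon$, which wipes out the separation, unless you take $k\gtrsim p/\epsilon^2$ and use the union bound. Yes instances need no uniformity, since the honest $\pi$ lower-bounds the max; only the upper bound on the no side requires it. Fixing $\widehat D$ against the best response to $\mathbf{D}$ does not help either, because the oracle computes $\max_\pi\widehat D$, not $\widehat D$ at the $\mathbf{D}$-optimal $\pi$. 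The boosting/hardcore alternative is not specified enough to check.

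The fix is cheap, and your $k=\mathrm{poly}(n,1/\epsilon)$ already allows it. For each fixed $(x,\pi)$, Hoeffding gives $\Pr[|\widehat D(x,\pi)-\textsf{Val}(x,\pi)|\ge\epsilon/8]\le 2e^{-k\epsilon^2/32}$. Taking $k=\Theta((p+\log(1/\epsilon))/\epsilon^2)$ makes this at most $\epsilon\cdot 2^{-p-4}$. A union bound over the fewer than $2^{p+1}$ strings $\pi$ then gives, for every $x$, $|\max_\pi\widehat D(x,\pi)-\max_\pi\textsf{Val}(x,\pi)|\le\epsilon/8$, except with probability $\epsilon/8$ over the samples. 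This is why the paper sets $q=\frac{1000}{\epsilon^{*2}}(p+\ln(16/\epsilon^*))$.

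With that in place, the obstacle you flag at the end disappears. The min-max guarantee holds for every map $\pi(\cdot)$, in particular the pointwise maximizer, so $\mathbb{E}_{x}[\max_\pi\textsf{Val}(x,\pi)]-\mathbb{E}_{\overline x}[\max_\pi\textsf{Val}(\overline x,\pi)]\ge\epsilon$. Separation in expectation is all you need once $B$ outputs $1$ with probability equal to its estimate of the max. Your uniformly random grid threshold does exactly this; the paper's ``output $1$ with probability $\rho_x$'' is the same device. Both your binary search and the paper's linear scan tolerate arbitrary answers on non-promise inputs, which the quantifier over $\mathcal{O}_{\textsf{PromiseQCMA}}$ in the gap definition requires.
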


\subsubsection{Upgrading sample-indistinguishability to oracle indistinguishability.}
We apply our modified leakage lemma to the distribution $\Lg^*$, which consists of yes instances of a $\QMA$ language alongside honestly generated $\NICV$ proofs. The lemma then guarantees the existence of a distribution $\overline{\Lg}^*$ consisting of pairs $(\overline{x},\overline{\pi})$ that is indistinguishable from $\Lg^*$. 

In the Gentry-Wichs framework, the subsequent step involves constructing a prover $\overline{\Pv}$ and a simulator $\Sim$ that sample from $\Lg^*$ and $\overline{\Lg}^*$, respectively, to argue that the reductions $\Sigma^{\overline{\Pv}}$ and $\Sigma^{\Sim}$ are indistinguishable. However, since we are trying to rule out \emph{quantum} black-box reductions, we must consider reductions $\Sigma$ that possess oracle access to unitary versions of $\overline{\Pv}$ and $\Sim$. This is problematic since our modified leakage lemma only shows that a classical sample from $\Lg^*$ and from $\overline{\Lg}^*$ are indistinguishable. Therefore, we need to upgrade this sample indistinguishability to ensure that even quantum oracle access is insufficient to distinguish the two distributions. 

A similar issue was faced in \cite{ADS+24} when generalizing Gentry-Wichs to rule out quantum black-box reductions for $\textsf{SNARG}$s to falsifiable assumptions. However, their context was simpler: the prover in a $\textsf{SNARG}$ is a polynomial-time classical algorithm. This allows for the simulation of quantum oracle access using a set of classical samples. In our case, the honest prover in a $\NICV$ is a \emph{quantum} algorithm. Hence, it is not clear how to simulate oracle access to a quantum prover using classical samples. To resolve this issue, we employ the compressed oracle technique from \cite{TZ25,Z19}, allowing us to build a (stateful) QPT algorithm $\Sim$ that is quantum oracle indistinguishable from $\overline{\Pv}$. 

\subsubsection{Separating $\NICV$ from falsifiable assumptions.}

Finally, we follow quantum version of Gentry-Wichs \cite{ADS+24} to separate $\NICV$ from falsifiable assumptions as follows. By the previous argument, we showed that no QPT algorithm can distinguish quantum oracle access to $\overline{\Pv}$ and $\Sim$. Since the assumed black-box reduction $\Sigma$ is a QPT algorithm, the output of $\Sigma^{\overline{\Pv}}$ and $\Sigma^{\Sim}$ should be indistinguishable. 

On the one hand, since $\overline{\Pv}$ produces valid proofs for no instances, thus breaking $\NICV$ soundness, $\Sigma^{\overline{\Pv}}$ successfully breaks the falsifiable assumption. On the other hand, $\Sigma^{\Sim}$ is indistinguishable from $\Sigma^{\overline{\Pv}}$ and, thus, also breaks the falsifiable assumption by indistinguishability. This gives a contradiction since $\Sigma^{\Sim}$ is efficient, given that $\Sim$ is efficient, thus, implying the existence of an efficient attack against the falsifiable assumption.   

\subsection{Building a $\QMA \text{-} \QCMA$ gap problem}

Notice that the attack presented above only applies to $\QMA$ languages with a $\QMA \text{-} \QCMA$ gap problem. In order to support the existence of such a problem, we give a construction relative to a quantum unitary oracle. We use a result from \cite{AK07} (\cref{lem:oracle} in our paper), which states that no QPT adversary with the aid of a classical witness can distinguish oracle access to the identity unitary from a unitary that adds a phase $-1$ to some random Haar state and acts as the identity on all other orthogonal queries. 

We construct a (subexponential) $\QMA \text{-} \QCMA$ gap problem as follows. We sample, for each $n\in \mathbb{N}$, a language $L_n\subseteq\{0,1\}^n$ uniformly at random, and define $L=\cup_n L_n$. Then, for each element $x$ in the language, we sample a Haar random state $\ket{\psi_x}$. The oracle $U_L$ is defined to apply a phase of $(-1)$ on input states of the form $\ket{x}\ket{\psi_x}$ for $x\in L$, while acting as the identity on all orthogonal inputs. Therefore, the only way to detect whether $x\in L$ is to possess the specific state $\ket{\psi_x}$, which intuitively by the Haar-randomness and \cref{lem:oracle} appears completely random to any adversary with classical advice in the form of a $\PQCMA$ oracle. 

While the intuition is straightforward, proving it requires addressing some technical hurdles that go beyond the scope of \cite{AK07}. Our case deals with a distinguisher that has oracle access to a $\QCMA$ solver instead of a classical witness, as well as oracles that sample from yes and no distributions. As such, the adversary potentially has more power than in \cite{AK07}. Specifically, the $\QCMA$ oracle allows for adaptive access to a $\QCMA$ oracle, which is possibly stronger than obtaining a classical witness as in \cite{AK07}. Furthermore, the sampling oracles leak information which may aid in the attack.

To resolve these issues, we first present a result (\cref{thm:oracle}) showing that if yes and no instances of a language are indistinguishable against distinguishers with classical witnesses, then the language remains indistinguishable against distinguishers with $\QCMA$ oracle access. This allows us to reduce the complexity of the $\QCMA$ oracle to a more manageable classical witness model. We then employ a hybrid argument to systematically decouple the sampling oracles from the secret states. Finally, by applying the reduced result from \cite{AK07}, we prove the existence of a $\QMA \text{-} \QCMA$ gap problem relative to our quantum unitary oracle.




\section{Preliminaries}
\label{sec:prelim}

\subsection{Notation}
\label{sec:notation}

We often denote sets as $X,Y$, distributions with bold such as $\textbf{H}, \textbf{D}$, and algorithms as $\mathcal{A},\mathcal{D}$. We implicitly assume that all sets and distributions in this work are parameterized by some integer $n$, typically a security parameter. We write $x\leftarrow \textbf{D}$ to mean that $x$ is sampled according to the distribution $\textbf{D}$. If $X$ is a set, then $x\leftarrow X$ means that $x$ is chosen uniformly at random from the set. 
Let $[n]\coloneqq \{1,2,\ldots,n\}$ for every $n\in \NN$. Furthermore, let $\negl[n]$ denote any function that is asymptotically smaller than the inverse of any polynomial. 
If $\Dis$ is a distribution over a set $Y$ and $X$ is a set, the notion $f\leftarrow \Dis^X$ denotes sampling a function $f$ mapping $X$ to $Y$ such that for every $x\in X$, $f(x)$ is sampled according to the distribution $\Dis$. 

We follow the standard notation of quantum information~\cite{NC00}. $I$ is the two-dimensional identity operator. For the notational simplicity, we often denote $I^{\otimes n}$ by $I$ if the dimension is clear from the context.
Let $\mu_n$ denote the {Haar measure} over $n$-qubit pure-states. 
We say that a quantum algorithm $\adv$ is \emph{$s$-time}, for some function $s:\mathbb{N}\rightarrow \mathbb{N}$, if for any input $x$, $\adv(x)$ has run-time at most $s(\lvert x\rvert)$. If $s$ is a polynomial, then we say that $\adv$ is a \emph{QPT} algorithm.
An algorithm $\adv$ is \emph{non-uniform} if it is initialized with a classical (possibly randomized) advice and is said to be \emph{uniform} if no advice is given. Adversaries in the security definitions are assumed to be non-uniform unless mentioned otherwise. 
For a unitary $U$, we write $\adv^U(x)$ to mean that the algorithm $\adv$ has quantum query access to $U$. For an integer $q$, we write $\adv^{qU}$ to mean that $\adv$ only has access to $q$ queries to $U$. Similarly, for a function $F$, $\adv^F$ denotes quantum query access to a unitary implementation $U_F$ of $F$ defined as the map $U_F:|x\rangle|y\rangle\to|x\rangle|F(x)\oplus y\rangle$,
and let $\adv^{cF}$ denote \emph{classical} oracle access to $F$.

\subsection{Compressed Oracle}
\label{sec:compressed}

We recall a tool used to simulate oracle access to a certain set of classical functions. Consider a set of distributions denoted by $\{\Dis_x\}_{x \in \{0,1\}^*}$. Let $\Dis := \bigotimes_x \Dis_x$. 

An oracle $O$ is considered to be drawn from $\Dis$, i.e. $O\gets \Dis$, if every row $x$ in its associated truth table is sampled independently according to $\Dis_x$. 

The following result describes how to simulate access to an oracle sampled from $\Dis$ and is given as Theorem 7.1 in \cite{TZ25}. Note that Theorem 7.1 is more general as it involves a state $\ket{\psi}$ which is used to sample from $\Dis$. However, we do not require such a state for our applications.

\begin{theorem}\label{thm:compressed}
    Let $q=q(n)$ and $m=m(n)$ be polynomials on the security parameter $n\in \mathbb{N}$. Let $\Dis\coloneqq \{\Dis_x\}_{x\in \{0,1\}^m}$ be a tuple of distributions such that for any $x\in \{0,1\}^m$, $\Dis_x$ can be sampled  in QPT in $m$. Then, there exists a stateful QPT unitary oracle $\Sim_\Dis$ such that for any $q$-query quantum algorithm $\adv$, 
\begin{align}
\Pr[\mathcal{A}^{\Sim_\Dis}(1^n) = 1] = \Pr_{O \leftarrow \Dis} [\mathcal{A}^O (1^n)= 1].
\end{align}
\end{theorem}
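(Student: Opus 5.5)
This statement is Theorem 7.1 of \cite{TZ25}, specialized to the case where no auxiliary sampling state is needed. The plan is therefore to reconstruct the compressed-oracle simulation in the style of Zhandry \cite{Z19}, generalized from uniform rows to independent rows with row distributions $\Dis_x$.

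\textbf{Step 1: purify the random oracle.} Since each $\Dis_x$ is QPT-samplable, fix a sampler $\mathsf{Samp}_x$ using $r=r(m)$ random coins. Define the purified row state
\[
\ket{\phi_x}\coloneqq 2^{-r/2}\sum_{\rho}\ket{\rho}\ket{\mathsf{Samp}_x(\rho)}.
\]
If $\mathsf{Samp}_x$ is itself quantum, I would first purify its internal measurement into a unitary $V_x$ with $V_x\ket{0}=\ket{\phi_x}$. The full oracle is then purified as $\bigotimes_x \ket{\phi_x}$. A query applies $\ket{x,y}\ket{\dots,(\rho_x,z_x),\dots}\mapsto\ket{x,y\oplus z_x}\ket{\dots}$, reading only the value part of row $x$. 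Because each row register is never touched except through its value, tracing out the database register gives exactly the mixed oracle $O\gets\Dis$. The output distribution of any $\adv$ is therefore identical in the two worlds.

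\textbf{Step 2: compress.} Conjugate each row register by $V_x$, so that the untouched state $\ket{\phi_x}$ becomes the ``empty'' symbol $\ket{0}=\ket{\bot}$. Each query then becomes
\[
\mathsf{Dec}\circ \mathsf{StdQuery}\circ \mathsf{Enc},
\qquad \text{with }\mathsf{Enc}=V_x \text{ and } \mathsf{Dec}=V_x^\dagger \text{ on the queried row}.
\]
This is applied coherently, controlled on the query address $x$. After $q$ queries, at most $q$ rows can be non-$\bot$ in every branch of the superposition. So the database can be stored as a sorted list of at most $q$ pairs $(x,\text{row})$, which has polynomial size.

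\textbf{Step 3: efficiency.} The stateful simulator $\Sim_\Dis$ holds this list. On each query it locates or inserts the entry for address $x$ coherently, applies $V_x$, XORs the value into the query's output register, applies $V_x^\dagger$, and removes the entry if it has returned to $\bot$. Locating, inserting and deleting can be done reversibly on sorted lists in $\mathrm{poly}(m,q)$ time.

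\textbf{Main obstacle.} The hard part is implementing $V_x$ exactly and making the compressed representation a faithful isometry. If $\mathsf{Samp}_x$ is QPT but its output distribution is only approximately achievable from a fixed-length uniform register, then exact equality in the statement requires the purification to be an exact unitary image of $\ket{0}$. I would handle this by taking the QPT sampler's own circuit, with measurements deferred, as $V_x$. The remaining task is to verify that the controlled insertion and deletion of $\bot$ entries commutes correctly with the encoding. That check is the bookkeeping done in \cite{TZ25}, and I would cite it there rather than redo it.
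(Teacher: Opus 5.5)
Your proposal is correct and takes essentially the paper's route: the paper does not prove this statement itself but imports it as Theorem~7.1 of \cite{TZ25}, dropping the auxiliary sampling state. Your reconstruction has four parts. You purify each row through the deferred-measurement circuit $V_x$. You rotate so that untouched rows read $\ket{0}$. You note that each query touches a single row, so at most $q$ rows are non-empty. You store those rows in a sorted list with coherent insert and delete. This is the compressed-oracle argument of \cite{Z19,TZ25} on which that citation rests.

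You use one assumption implicitly: that the controlled operation $\sum_x |x\rangle\langle x|\otimes V_x$ (and its inverse) is efficiently implementable. That holds if the $\Dis_x$ are sampled by a single uniform QPT algorithm taking $x$ as input. It does hold in the paper's application to $\mathcal{L}^*_{m,\mathsf{pk}}$.
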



\subsection{Quantum Complexity Classes}
\label{sec:lang}

We recall the definitions of the complexity classes, Quantum-Merlin-Arthur $(\QMA)$ and Quantum-Classical-Merlin-Arthur $(\QCMA)$. 

\begin{definition}[$\QMA$]
    A language $L\subseteq \{0,1\}^*$ is in $\QMA$ if there exists a QPT algorithm $\mathcal{Q}$ and a polynomial $p$ such that
    for any $x\in\{0,1\}^*$,
    \begin{enumerate}
        \item If $x\in L$, then there exists a $p(\lvert x\rvert)$-qubit state $\ket{w_x}$ such that 
        \begin{align}
        \Pr[1\gets\mathcal{Q}(x,\ket{w_x})]\geq \frac{2}{3}. 
        \end{align}
        \item If $x\notin L$, then for any $p(|x|)$-qubit state $\ket{w}$,  
        \begin{align}
        \Pr[1\gets\mathcal{Q}(x,\ket{w})]\leq \frac{1}{3}. 
        \end{align}
    \end{enumerate}
\end{definition}

\begin{definition}[$\QMA$ Relation]
Let $L$ be a language in $\QMA$ and let $\mathcal{Q}$ be its verification algorithm. 
A $\QMA$ relation $\mathcal{R}_{L,\mathcal{Q}}$ is the set of pairs 
$(x,\ket{w})$ satisfying $\Pr[1\gets\mathcal{Q}(x,\ket{w})]\ge\frac{2}{3}$.
\end{definition}


\begin{definition}[$\QCMA$]
    A language $L\subseteq\{0,1\}^*$ is in $\QCMA$ if there exists a QPT algorithm $\mathcal{Q}$ and a polynomial $p$ such that
    for any $x\in\{0,1\}^*$,
    \begin{enumerate}
        \item If $x\in L$, then there exists a $p(\lvert x\rvert)$-bit string ${w_x}$ 
        such that 
        \begin{align}
        \Pr[1\gets\mathcal{Q}(x,{w_x})]\geq \frac{2}{3}.  
        \end{align}
        \item If $x\notin L$, 
        then for any $p(\lvert x\rvert)$-bit string $w$, 
        \begin{align}
        \Pr[1\gets\mathcal{Q}(x,{w})]\leq \frac{1}{3}. 
        \end{align}
    \end{enumerate}
\end{definition}

We also recall the notion of $\textsf{PromiseQCMA}$ problems.

\begin{definition}[$\mathsf{PromiseQCMA}$]\label{def:promise_QCMA}
    A promise problem $(L_Y,L_N)$ is in $\mathsf{PromiseQCMA}$ if there exists a QPT algorithm $\mathcal{Q}$ and a polynomial $p$ such that for any $x\in L_Y\cup L_N$,
    \begin{enumerate}
        \item If $x\in L_Y$, then there exists a $p(\lvert x\rvert)$-bit string ${w_x}$ 
        such that 
        \begin{align}
        \Pr[1\gets\mathcal{Q}(x,{w_x})]\geq \frac{2}{3}.  
        \end{align}
        \item If $x\in L_N$, 
        then for any $p(\lvert x\rvert)$-bit string $w$, 
        \begin{align}
        \Pr[1\gets\mathcal{Q}(x,{w})]\leq \frac{1}{3}. 
        \end{align}
    \end{enumerate}
     \end{definition}


In this work, we sometimes need to provide oracle access to a \textsf{PromiseQCMA} oracle. Notice that defining \textsf{PromiseQCMA} oracle is ambiguous since it is not clear how the oracle should evaluate non-promise inputs, i.e. inputs that do not belong to either $L_Y$ or $L_N$ in a promise problem $(L_Y,L_N)$. Care is needed since the behavior on these inputs may leak extra information that strengthens the power of the oracle. 

To address this issue, we define $\mathcal{O}_{\textsf{PromiseQCMA}}$ to be the set of all possible algorithms that solve $\textsf{PromiseQCMA}$ problems. An algorithm solves $\textsf{PromiseQCMA}$ problems if for any \textsf{PromiseQCMA} problem $(L_Y,L_N)$ with corresponding verification algorithm $\mathcal{Q}$, the algorithm sends the input $(\mathcal{Q},x)$ with $x\in L_Y$ to 1 and $(\mathcal{Q},x)$ with $x\in L_N$ to 0. However, there is no restriction on the behavior on non-promise inputs. 

In the case $\adv^{c\QCMA}$, with $\QCMA\in \mathcal{O}_{\textsf{PromiseQCMA}}$, performs a task such as distinguishing between two distributions, indistinguishability requires that  $\adv^{c\QCMA}$ cannot distinguish for at least a single oracle $\QCMA\in \mathcal{O}_{\textsf{PromiseQCMA}}$. In this way, $\adv$ is a ``valid'' distinguisher only if it can use the oracle's power in solving promise inputs to distinguish rather than its behavior on non-promise inputs (see \cref{sec:subset}). It may be unclear how to define oracle access to a quantum algorithm, however, in our work, we only consider \emph{classical} oracle access $\QCMA$, so this is not an issue. Note that allowing a form of quantum access would strengthen the assumption and weaken our impossibility result. 

\begin{definition}[\textsf{PromiseQCMA} oracles]
The  set $\mathcal{O}_{\textsf{PromiseQCMA}}$ consists of any (computationally-unbounded) algorithm $O$ that satisfies the following condition: 

For any QPT algorithm $\mathcal{Q}$ and for some polynomial $p$ dependent on $\mathcal{Q}$, for any $x\in \{0,1\}^*$, 
\begin{align}
 O(\mathcal{Q},x )=
 \begin{cases}
 1 & \text{if  } \ \exists w_x \in \{0,1\}^{p(\lvert x\rvert )} : \ \Pr[1\gets\mathcal{Q}(x,{w}_x)]\geq 2/3\\
       0 & \text{if  }\ \forall w \in \{0,1\}^{p(|x|)} :\  \Pr[1\gets\mathcal{Q}(x,{w})]\leq 1/3.
 \end{cases}
\end{align}  
\end{definition}

We will also require $\textsf{PromiseQCMA}$ oracle access relative to another oracle $\mathcal{T}$, which we define as follows. 

\begin{definition}[\textsf{PromiseQCMA} oracles relative to another oracle]
For any quantum unitary oracle $\mathcal{T}$, the  set $\mathcal{O}_{\textsf{PromiseQCMA}}^\mathcal{T}$ consists of any (computationally-unbounded) algorithm $O$ that satisfies the following condition: 

For any QPT oracle-aided algorithm $\mathcal{Q}^\mathcal{T}$ and  for some polynomial $p$ dependent on $\mathcal{Q}$, 
for any $x\in \{0,1\}^{*}$, 
\begin{align}
 O(\mathcal{Q},x )=
 \begin{cases}
 1 & \text{if  } \ \exists w_x \in \{0,1\}^{p(\lvert x\rvert )}: \ \Pr[1\gets\mathcal{Q}^\mathcal{T}(x,{w})]\geq 2/3\\
       0 & \text{if  }\ \forall w \in \{0,1\}^{p(|x|)}: \ \Pr[1\gets\mathcal{Q}^\mathcal{T}(x,{w})]\leq 1/3.
 \end{cases}
\end{align}  
\end{definition}



\subsection{Indistinguishable Distributions}
\label{sec:oracle-indis}

We define notions of indistinguishability between distributions. 

\begin{definition}
Let $s$ and $\epsilon $ be functions on the security parameter $n\in \mathbb{N}$. 
We say that two distributions $\textbf{D}^0$ and $\textbf{D}^1$ 
are \emph{$(s,\epsilon)$-sample-indistinguishable} if for every $s$-time quantum algorithm $\D$,
\begin{align}
    \left| \Pr_{x\leftarrow \Dis_n^0}\left[\D(x)=1\right]-\Pr_{x\leftarrow \Dis_n^1}\left[\D ( x)=1\right]\right|\leq \epsilon(n)
\end{align}
for all sufficiently large $n\in\mathbb{N}$.
\end{definition}

We also define the notion of quantum-oracle-indistinguishability for two tuples of distributions. See \cref{sec:compressed} for how we define sampling a function $O$ from a tuple of distributions $\mathcal{D}$.

\begin{definition}[Quantum-Oracle-Indistinguishability]
\label{def:quantum-oracle-indis}
    Let $k,s, q, \epsilon$ be functions on the security parameter $n$. Let $\Dis^0\coloneqq (\Dis^0_x)_{x\in [k]}$ and  $\Dis^1\coloneqq (\Dis^1_x)_{x\in [k]}$ be two tuples of distributions.
 We say that $\Dis^0$ and $\Dis^1$ are $(s,q,\epsilon)$-\emph{quantum-oracle-indistinguishable} if for every $q$-query, $s$-time quantum algorithm $\D$, 
 \begin{align}
    \left| \Pr_{O\gets \Dis^0}[\D^{q(n)O}(1^n)=1]-\Pr_{O\gets \Dis^1}[\D^{q(n)O}(1^n)=1]\right|< \epsilon(n)
\end{align}
{for all sufficiently large $n\in\mathbb{N}$.} 
\end{definition}

We recall how to upgrade sample-indistinguishability to oracle-indistinguishability from \cite{ADS+24}.

\begin{lemma}[Theorem 17 in \cite{ADS+24}]
\label{lem:oracle indis}
 Let $k,p,q',s,\epsilon$ be functions on the security parameter $n\in \mathbb{N}$.  
 Let $\Dis^0\coloneqq (\Dis^0_x)_{x\in [k]}$ and 
 $\Dis^1\coloneqq (\Dis^1_x)_{x\in [k]}$ 
 be two tuples of distributions on a set $Y$ consisting of strings of length at most $p(n)$.
 
 Assume tha for any $x\in [k]$,\mor{If $k$ is a function of $n$, which $n$? $\textbf{D}^0_x$ and $\textbf{D}^1_x$ are sample indistinguishable means $n$ is scaled, so for any $x\in[k]$ for a fixed $n$ is strange.}\mo{I am not sure I understand, but this just means that no adv can distinguish these two distributions with better than $\epsilon(n)$ probability.}
 \takashi{I believe Tomoyuki means the following: When we say "$\textbf{D}^0_x$ and $\textbf{D}^1_x$ are $(s,\epsilon)$-sample-indistinguishable", $\textbf{D}^0_x$ and $\textbf{D}^1_x$ should be treated as a sequence of distributions parameterized by $n$. But before that, we fix one index $x\in[k]$, and then it's unclear how $\textbf{D}^0_x$ and $\textbf{D}^1_x$ are defined as a sequence of distributions. An answer to this question would be that, "$x\in[k]$" actually means "$x_n\in[k(n)]$ for all $n\in\mathbb{N}$", and similarly $\textbf{D}^0_x$ and $\textbf{D}^1_x$ actually mean $\{\textbf{D}^0_{x_n}\}_n$ and $\{\textbf{D}^1_{x_n}\}_n$. This is surely confusing notation, but I believe this is widely accepted convention in cryptography.
 }\mor{Yes, that is my concern.}
 the distributions $\textbf{D}^0_{x}$ and $\textbf{D}^1_x$ are $(s,\epsilon)$-sample-indistinguishable. Then, $\Dis^0$ and $\Dis^1$ are \emph{$(s',q',\epsilon')$-quantum-oracle-indistinguishable}, for any functions $s'$ and $\epsilon'$ satisfying:
     \begin{align}
         s(n)&\geq s'(n)+\frac{q'(n)^3k(n)^3p(n)}{\epsilon(n)},\\
         \epsilon(n)&\leq O\left(\frac{\epsilon'(n)^2}{k(n)^2q'(n)^3}\right).
     \end{align}
     for all $n\in \mathbb{N}$. 
\end{lemma}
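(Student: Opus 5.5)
We prove Lemma~\ref{lem:oracle indis} by a hybrid over the $k$ rows of the truth table, converting each single-row oracle replacement into a sample distinguisher via a polynomial approximation of the oracle. Fix a $q'$-query, $s'$-time distinguisher $\D$ with advantage at least $\epsilon'$ between $O\gets\Dis^0$ and $O\gets\Dis^1$.

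\textbf{Hybrids.} For $i\in\{0,\dots,k\}$ let $\Hy_i$ be the distribution where rows $x\le i$ are drawn from $\Dis^1_x$ and rows $x>i$ from $\Dis^0_x$. Then $\Hy_0=\Dis^0$ and $\Hy_k=\Dis^1$, so some $i$ has
\[
\bigl|\Pr_{\Hy_{i-1}}[\D=1]-\Pr_{\Hy_i}[\D=1]\bigr|\ge \epsilon'/k .
\]
This rules out a naive reduction. We cannot hard-wire the remaining $k-1$ rows, because they are exponentially many potentially inefficient samples, and the row distributions need not be efficiently samplable. So the other rows must be simulated efficiently.

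\textbf{Simulating the other rows.} We replace them with a small-range or $t$-wise-independent style simulation. By Zhandry's result, a $q'$-query algorithm cannot distinguish a function whose rows are i.i.d.\ from $\Dis_x$ from one built as follows: draw a random $2q'$-wise independent index map $x\mapsto j\in[r]$, take $r$ pre-sampled values per row class, and let the advantage loss be $O(q'^3/r)$. Since rows have distinct distributions, we instead fix, non-uniformly, the best choice of the other rows by averaging. We then argue via a one-way-to-hiding or compressed-oracle style bound that the $q'$-query distinguisher only needs $O(q'^3/\epsilon)$-size partial information.

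Concretely, I would pick $r\approx q'^3k^2/\epsilon'^2$ classical samples and the advice describing a polynomial-degree approximation of the oracle. This gives a sample distinguisher for row $i$ of running time $s'+q'^3k^3p/\epsilon$, namely the cost of writing the truth-table approximation with strings of length $p$. Its advantage is $\epsilon'/k - O(\sqrt{k^2q'^3\epsilon})$ after accounting for approximation error.

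\textbf{Contradiction.} Setting parameters as in the statement makes this advantage exceed $\epsilon$, contradicting the $(s,\epsilon)$-sample-indistinguishability of row $i$. The main obstacle is the middle step. Quantum queries touch all rows in superposition, and embedding a single classical challenge sample into the oracle must be done with loss polynomial in $q'$. I would first try the small-range-distribution technique: replace each row's distribution by a uniform choice among $r$ samples, which is $O(q'^3/r)$-close. The square-root loss then comes from converting the error to advantage via a Cauchy--Schwarz/hybrid argument on query weights.
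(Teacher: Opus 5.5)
The paper does not prove this lemma; it imports it as Theorem~17 of \cite{ADS+24}, so your proposal has to stand on its own, and as written it does not. Your hybrid over the $k$ rows is fine, but the middle step, which is the entire content, is never carried out. You cycle through small-range distributions (which require i.i.d.\ rows from a single distribution, as you yourself note fails here), an unspecified one-way-to-hiding or compressed-oracle bound, and a ``polynomial-degree approximation of the oracle''. You then assert the running time $s'+q'^3k^3p/\epsilon$ and the loss $O(\sqrt{k^2q'^3\epsilon})$ by matching the statement rather than deriving them. Worse, your numbers do not close. With $\epsilon\le O(\epsilon'^2/(k^2q'^3))$ your loss term is $O(\epsilon')$, which swamps the hybrid advantage $\epsilon'/k$ once $k$ exceeds a constant. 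So the contradiction step fails even on your own accounting.

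The missing observation is that no simulation is needed, and your reason for rejecting hard-wiring is wrong. There are only $k$ rows, each of length at most $p$; the allowed overhead $q'^3k^3p/\epsilon$ already exceeds $kp$; and distinguishers are non-uniform. Since rows are independent and $\mathbf{H}_{i-1},\mathbf{H}_i$ differ only in row $i$, write the gap as an average over the other rows $O_{-i}$. By averaging, some fixed $O^*_{-i}$ keeps gap at least $\epsilon'/k$ between $y\gets\Dis^0_i$ and $y\gets\Dis^1_i$. Given one sample $y$, the oracle with row $i$ set to $y$ and the remaining rows set to $O^*_{-i}$ is a fixed classical function. The map $|x\rangle|z\rangle\mapsto|x\rangle|z\oplus O(x)\rangle$ is then implemented exactly by a lookup circuit of size $O(k(p+\log k))$, so superposition queries cost nothing; the ``main obstacle'' you identify does not exist. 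This gives a sample distinguisher running in time $s'+O(q'k(p+\log k))\le s'+q'^3k^3p/\epsilon$ with advantage $\epsilon'/k>\epsilon$, contradicting the hypothesis. Choose the index $i=i(n)$ and the advice afresh for each $n$; this also resolves the uniformity concern in the margin comments. This proves the lemma with better parameters than stated. You actually wrote the key sentence (``fix, non-uniformly, the best choice of the other rows by averaging'') but did not recognize that it finishes the proof on its own.
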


We will also use the following result known as the Borel-Cantelli Lemma \cite{B909,C917,S16}. 

\begin{lemma}[Borel-Cantelli Lemma]
\label{lem:BC}
    If the sum of probabilities of events $\{E_n\}_{n\in \mathbb{N}}$ is finite, 
    i.e., $\sum_{n=1}^\infty \Pr[E_n]<\infty$, then the probability that infinitely many of these events occur is 0.
\end{lemma}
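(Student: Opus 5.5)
The plan is to write the event ``infinitely many $E_n$ occur'' as a limit superior of sets and bound its probability by tails of the convergent series $\sum_n \Pr[E_n]$. Concretely, the event that infinitely many of the $E_n$ occur is
\begin{align}
E_\infty \coloneqq \bigcap_{N=1}^{\infty}\bigcup_{n\geq N} E_n .
\end{align}
This identity holds because an outcome lies in infinitely many $E_n$ exactly when, for every $N$, it lies in some $E_n$ with $n\geq N$.

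Next, fix an arbitrary $N\in\mathbb{N}$. Since $E_\infty\subseteq \bigcup_{n\geq N}E_n$, monotonicity of probability and countable subadditivity (the union bound) give
\begin{align}
\Pr[E_\infty]\leq \Pr\Big[\bigcup_{n\geq N}E_n\Big]\leq \sum_{n\geq N}\Pr[E_n].
\end{align}
The right-hand side is the tail of the series $\sum_{n=1}^\infty \Pr[E_n]$. By hypothesis this series converges, so its tails tend to $0$ as $N\to\infty$. The left-hand side does not depend on $N$. Letting $N\to\infty$ therefore yields $\Pr[E_\infty]\leq 0$, that is, $\Pr[E_\infty]=0$.

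The only step needing care is the set-theoretic identification of ``infinitely many occur'' with $\bigcap_N\bigcup_{n\geq N}E_n$, together with the use of countable (not just finite) subadditivity. Both are standard measure-theoretic facts, so I do not expect a genuine obstacle. No independence of the events $E_n$ is required.
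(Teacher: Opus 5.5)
The paper does not prove this lemma; it quotes it as a standard result with citations, so there is no proof of its own to compare against. Your argument is the standard proof of the first Borel--Cantelli lemma and is correct and complete: you write the event as $\bigcap_{N}\bigcup_{n\geq N}E_n$, bound it by the tail sums via countable subadditivity, and let $N\to\infty$. As you note, it needs no independence, and it works on any probability space, including the infinite product space over $L$ and $\mathcal{T}$ where the paper applies it.
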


\subsection{Non-Interactive Classical Verification of Quantum Computation}

The goal of a \emph{non-interactive classical verification of quantum computation} ($\NICV$) is for a quantum prover to convince a verifier of the validity of a certain quantum computation (such as membership of an element in a $\QMA$ language) non-interactively with classical communication.  Traditionally, the verifier is classical since the goal is to enable classical users to delegate quantum computations. However, we note that our black-box separation also applies to quantum verifiers, as long as the communication is classical. Since we only study {non-interactive} protocols, our definition is restricted to this case.   

\begin{definition}[$\NICV$]
\label{def:NICV}
    Let 
    $L\subseteq \{0,1\}^*$ be a language in $\QMA$ with a verification algorithm $\mathcal{Q}$ and a relation $\mathcal{R}_{L,\mathcal{Q}}$.
    A \emph{non-interactive classical verification of quantum computation} ($\NICV$) scheme
    for $L$ and $\mathcal{Q}$
    is a set $(\mathcal{G},\mathcal{P},\mathcal{V})$ of QPT algorithms that satisfy the following properties:
    \begin{itemize}
        \item \emph{Completeness:} There exists a polynomial $t=t(n)$ such that for all $(x,\ket{w_x})\in \mathcal{R}_{L,\mathcal{Q}}$, 
        \begin{align}
            \Pr\left[ \begin{tabular}{c|c}
 \multirow{2}{*}{$1\gets\mathcal{V}(\sk, x,\pi) $} &  $\ (\pk,\sk)\ \leftarrow \mathcal{G}(1^n)$ \\ 
 & $\ \pi\ \leftarrow \mathcal{P}\left(\pk,x,\ket{w_x}^{\otimes t(n)}\right)$\\
 \end{tabular}\right] \geq 1-\negl[n].
        \end{align}

        \item \emph{Soundness:} For any QPT algorithm $\overline{\mathcal{P}}$,  \begin{align}
            \Pr\left[ \begin{tabular}{c|c}
 \multirow{2}{*}{$1\gets\mathcal{V}(\sk, \overline{x},\overline{\pi}) \wedge \overline{x}\notin L \ $} &  $\ (\pk,\sk)\ \leftarrow \mathcal{G}(1^n)$ \\ 
 & $\ (\overline{x},\overline{\pi})\ \leftarrow \overline{\mathcal{P}}(1^n,\pk)$\\
 \end{tabular}\right] \leq \negl[n].
        \end{align}
    \end{itemize}
\end{definition}

In fact, our impossibility result also excludes $\NICV$ schemes with a more relaxed notion of soundness, where an adversary succeeds with at most a constant $p<1$ probability.

\subsection{Falsifiable Assumptions}

We recall the notion of falsifiable assumptions \cite{N03,GW11,ADS+24}. However, unlike the works \cite{N03,GW11,ADS+24}, we allow for a \emph{quantum} challenger to consider a more general definition.


\begin{definition}[Falsifiable Cryptographic Assumptions]
A \emph{falsifiable cryptographic assumption} 
is a pair
$(\mathcal{C},c)$ 
of a QPT algorithm $\mathcal{C}$ (challenger)
and a constant $c \in [0, 1)$. 
$\mathcal{C}$ interacts with an adversary over a quantum channel and then outputs $1/0$.
The assumption is said to be true if, for any QPT algorithm $\adv$,
\begin{align}
    \Pr\left[ 1\gets  \langle \mathcal{C}(1^n),\adv(1^n)\rangle \right]\leq c+\negl[n],
\end{align}
where $1\gets\langle\mathcal{C}(1^n),\adv(1^n)\rangle$ means that $\mathcal{C}$ outputs $1$ after the interaction
with $\adv$.

\end{definition}


\subsection{Quantum Black-Box Reductions}

Quantum black-box reductions are explored in several works \cite{LP24,TZ25,CM24,CCS24}. We formulate quantum black-box reductions specifically showing the soundness of a $\NICV$ based on a falsifiable cryptographic assumption. Our definition is similar to the definition of black-box reduction for proofs of quantumness in \cite{TZ25}. 


\begin{definition}
Let $L\subseteq\{0,1\}^*$ be a language in $\QMA$ {with a verification algorithm $\mathcal{Q}$}.
Let $\Pi=(\mathcal{G},\mathcal{P},\mathcal{V})$ be a $\NICV$ for $(L,\mathcal{Q})$.
    We say that a classical deterministic algorithm  $\overline{\mathcal{P}}$ is a \emph{$\Pi$-adversary} \footnote{We can generalize this definition by allowing quantum $\Pi$-adversaries, but this can only weaken our impossibility result. } if there exists a polynomial $p$ such that
    \begin{align}
            \Pr\left[ \begin{tabular}{c|c}
 \multirow{2}{*}{$1\gets\mathcal{V}(\sk, \overline{x},\overline{\pi}) \wedge \overline{x}\notin L \ $} &  $\ (\pk,\sk)\ \leftarrow \mathcal{G}(1^n)$ \\ 
 & $\ (\overline{x},\overline{\pi})\ \leftarrow \overline{\mathcal{P}}(1^n,\pk)$\\
 \end{tabular}\right] \geq \frac{1}{p(n)}
        \end{align}
    for infinitely many $n\in \mathbb{N}$. In this case, we say $\overline{\Pv}$ has advantage $1/p$. 
    
    Similarly, we say that a set of algorithms $\overline{\textbf{P}}$  is a $\Pi$-adversary if there exists a polynomial $p$ such that
    \begin{align}
            \Pr_{\overline{\Pv}\gets \overline{\textbf{P}}}\left[ \begin{tabular}{c|c}
 \multirow{2}{*}{$1\gets\mathcal{V}(\sk, \overline{x},\overline{\pi}) \wedge \overline{x}\notin L \ $} &  $\ (\pk,\sk)\ \leftarrow \mathcal{G}(1^n)$ \\ 
 & $\ (\overline{x},\overline{\pi})\ \leftarrow \overline{\mathcal{P}}(1^n,\pk)$\\
 \end{tabular}\right] \geq \frac{1}{p(n)}
        \end{align}
    for infinitely many $n\in \mathbb{N}$. 
\end{definition}

\begin{definition}
\label{def:bb reduction}
Let $\Pi=(\mathcal{G},\mathcal{P},\mathcal{V})$ be a $\NICV$. 
    A \emph{quantum black-box reduction} showing the soundness of $\Pi$ from a falsifiable cryptographic assumption $(\mathcal{C},c)$ is a QPT algorithm 
    $\Sigma^{(\cdot)}$ such that for any polynomial $p$, there exists a polynomial $p'$ such that for any (even inefficient) $\Pi$-adversary $\overline{\Pv}$ with advantage $1/p$, $\Sigma^{{\overline{\Pv}}}$ breaks the assumption with advantage $1/p'$ i.e.
    \begin{align}
    \Pr[1\gets\langle \mathcal{C}(1^n),\Sigma^{{\overline{\Pv}}}(1^n)\rangle]\geq c+\frac{1}{p'(n)},
\end{align}
for infinitely  many $n\in \mathbb{N}$.
\end{definition}



\section{$\QMA \text{-} \QCMA$ Gap Problems Definition}
\label{sec:subset}

We define the notion of \emph{$\QMA \text{-} \QCMA$ gap problems}. We will assume the existence of these problems in our separation (\cref{thm:main}). We support the existence of these problems by providing a construction relative to an oracle (\cref{result 2}). 

\begin{definition}[$\QMA \text{-} \QCMA$ gap problems]
\label{def:subset_membership_problem}    
    Let $s:\mathbb{N}\rightarrow \mathbb{N}$ and $\epsilon:\mathbb{N}\rightarrow [0,1]$ be functions and $t$ be a positive polynomial. A $(t,s,\epsilon)$-$\QMA \text{-} \QCMA$ gap problem of a language $L$ in $\QMA$ with verification algorithm $\mathcal{Q}$ and a $\QMA$ relation $\mathcal{R}_{L,\mathcal{Q}}$ is a pair $(\mathcal{L},\overline{\mathcal{L}})$ satisfying the following conditions:
    \begin{enumerate}
    \item ${\mathcal{L}}\coloneqq\{\mathcal{L}_n\}_{n\in \mathbb{N}}$ and  $\overline{\mathcal{L}}\coloneqq \{\overline{\mathcal{L}}_n\}_{n\in \mathbb{N}}$ are distribution ensembles, where $\mathcal{L}_n$ is a 
     distribution over $L_n\coloneqq L\cap\{0,1\}^n$ and $\overline{\mathcal{L}}_n$ is a  distribution over $\overline{L}_n\coloneqq \{0,1\}^n\setminus L_n$ for each $n\in \mathbb{N}$.

        \item There exists a QPT algorithm $\textsf{SampYes}$ that takes as input $1^n$ and outputs $(x,\ket{w_x}^{\otimes t(n)})$, where $(x,\ket{w_x})\in \mathcal{R}_{L,\mathcal{Q}}$, such that the projection to the first coordinate of the output is $\mathcal{L}_n$.

        \item There exists a not-necessarily-efficient algorithm $\textsf{SampNo}$ that takes as input $1^n$ and samples from the distribution $\overline{\mathcal{L}}_n$.

    \item For any $s(n)$-time distinguisher $\adv$, there exists an oracle $\QCMA \in \mathcal{O}_{\textsf{PromiseQCMA}}$, such that
        \begin{align}
            \left| \Pr_{x\leftarrow {\mathcal{L}}_n}[\adv^{c\QCMA}(x)=1]-\Pr_{x\leftarrow \overline{\mathcal{L}}_n}[\adv^{c\QCMA}(x)=1]\right| \leq \epsilon(n)
        \end{align}
        for all sufficiently large $n\in \mathbb{N}$. Recall, $c\QCMA$ denotes classical oracle access to $\QCMA$. 
        We say that the problem is \emph{subexponentially hard} if there exists a constant $\delta>0$, such that $s(n)=2^{O(n^\delta)}$ and $\epsilon(n)=1/2^{\Omega(n^\delta)}$.
            \end{enumerate}
\end{definition}

We will support the existence of a $\QMA \text{-} \QCMA$ gap problem in \cref{sec:separating qma and qcma} by giving a construction relative to a quantum unitary oracle. To do this, we define what it means to show existence relative to an oracle. 

\begin{definition}[$\QMA \text{-} \QCMA$ gap problems Relative to Oracles]
    Let $s:\mathbb{N}\rightarrow \mathbb{N}$ and $\epsilon:\mathbb{N}\rightarrow [0,1]$ be functions and $t$ be a polynomial. 
    Let $\mathcal{T}$ be a quantum unitary oracle sampled from some distribution of unitaries. 
    A $(t,s,\epsilon)$-$\QMA \text{-} \QCMA$ gap problem relative to $\mathcal{T}$ of a language $L\in \QMA^\mathcal{T}$ with verification algorithm $\mathcal{Q}^{\mathcal{T}}$,
    and a $\QMA$ relation $\mathcal{R}_{L,\mathcal{Q}^{\mathcal{T}}}$ is a pair $(\mathcal{L},\overline{\mathcal{L}})$ satisfying the following conditions:
    \begin{enumerate}
        \item ${\mathcal{L}}\coloneqq\{\mathcal{L}_n\}_{n\in \mathbb{N}}$ and  $\overline{\mathcal{L}}\coloneqq \{\overline{\mathcal{L}}_n\}_{n\in \mathbb{N}}$ are distribution ensembles, where $\mathcal{L}_n$ is a distribution over $L_n\coloneqq L\cap\{0,1\}^n$ and $\overline{\mathcal{L}}_n$ is a distribution over $\overline{L}_n\coloneqq \{0,1\}^n\setminus L_n$ for each $n\in \mathbb{N}$.

        \item There exists a QPT oracle-access algorithm $\textsf{SampYes}^\mathcal{T}(1^n)$ that outputs $(x,\ket{w_x}^{\otimes t(n)})$ where $(x,\ket{w_x})\in \mathcal{R}_{L,\mathcal{Q}^{\mathcal{T}}}$, such that the projection  to the first coordinate of the output is $\mathcal{L}_n$.

        \item There exists a not-necessarily-efficient \footnote{Our construction of a $\QMA \text{-} \QCMA$ gap problem (\cref{con:oracles}) actually has an efficient sampling procedure for no instances, however, this is not required for the main result.} oracle-access algorithm $\textsf{SampNo}^\mathcal{T}$ that takes as input $1^n$ and samples from the distribution $\overline{\mathcal{L}}_n$. 

        \item For any $s(n)$-time quantum algorithm $\adv$, there exists an oracle $\QCMA \in \mathcal{O}_{\textsf{PromiseQCMA}}^\mathcal{T}$ such that with probability 1 over the choice of oracle $\mathcal{T}$,
        \begin{align}
            \left| \Pr_{x\leftarrow {\mathcal{L}}_n}[\adv^{c\QCMA,\mathcal{T}}(x)=1]
            -\Pr_{x\leftarrow \overline{\mathcal{L}}_n}[\adv^{c\QCMA,\mathcal{T}}( x)=1]\right| \leq \epsilon(n)
        \end{align}
        for all sufficiently large $n\in \mathbb{N}$.
        
    \end{enumerate}
\end{definition}

\section{Indistinguishability with Classical Auxiliary Information}
\label{sec:ind with classical info}

In this section, we show that for any $\QMA$ language $L$ with a $\QMA \text{-} \QCMA$ gap problem, any polynomially-bounded classical auxiliary information does not break the indistinguishability of the language.  


\begin{theorem}
\label{result 1}
    Let $p$ be a polynomial and let $L$ be a language with a $(t,s,\epsilon)$-$\QMA \text{-} \QCMA$ gap problem classified by the pair of distribution ensembles $(\mathcal{L},\overline{\mathcal{L}})$ with ${\mathcal{L}}\coloneqq\{\mathcal{L}_n\}_{n\in \mathbb{N}}$ and  $\overline{\mathcal{L}}\coloneqq \{\overline{\mathcal{L}}_n\}_{n\in \mathbb{N}}$. Assume {for each $n\in\mathbb{N}$,} there exists a distribution $\mathcal{L}_n^*$ on pairs $(x,\pi)$ where $x$ is distributed according to $\mathcal{L}_n$ and $\pi$ is some arbitrary string of length at most $p(n)$.   Then there exists a distribution $\overline{\Lg}_n^*$ on pairs $(\overline{x},\overline{\pi})$ where $\overline{x}$ is distributed according to $\overline{\Lg}_n$ and $\overline{\pi}$ has length at most $p(n)$ such that 
    $\Lg^*\coloneqq\{\Lg^*_n\}_{n\in\mathbb{N}}$ and $\overline{\Lg}^*\coloneqq\{\overline{\Lg}^*_n\}_{n\in\mathbb{N}}$ are $(s^*, \epsilon^*)$-sample-indistinguishable as long as $\epsilon^*(n)\leq 8\epsilon(n)$ and $s(n)\geq {n\cdot s^*(n)\cdot q(n)^3}$ for all $n\in \mathbb{N}$, where $q(n)\coloneqq\frac{1000}{\epsilon^*(n)^2}\left(p(n)+\ln\left(\frac{16}{\epsilon^*(n)}\right)\right)$.  
\end{theorem}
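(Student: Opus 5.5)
The plan is to follow the Gentry--Wichs leakage-lemma argument by contradiction, with a min-max step followed by a search step. The search step is where we replace brute force over $\pi\in\{0,1\}^{\le p(n)}$ by queries to a $\PQCMA$ oracle.

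\emph{Step 1: min-max and sparsification.} Suppose the conclusion fails. Then for infinitely many $n$, every distribution $\overline{\Lg}^*_n$ of the stated form (first coordinate distributed as $\overline{\Lg}_n$, second coordinate a string of length at most $p(n)$) is distinguished from $\Lg^*_n$ by some $s^*$-time algorithm with advantage $>\epsilon^*(n)$. Such a $\overline{\Lg}^*_n$ is specified by a randomized map $\overline{x}\mapsto\overline{\pi}$. View this as a zero-sum game: one player picks this map, the other picks an $s^*$-time distinguisher $\D$ (closing the distinguisher class under complement removes the absolute value). Von Neumann's min-max theorem then gives a single distribution $\mathbf{D}$ over $s^*$-time distinguishers such that, for every map $\overline{x}\mapsto\overline{\pi}$,
\begin{align}
\mathbb{E}_{\D\gets\mathbf{D}}\Big[\Pr_{\Lg^*_n}[\D(x,\pi)=1]-\Pr_{\overline{x}\gets\overline{\Lg}_n}[\D(\overline{x},\overline{\pi})=1]\Big]>\epsilon^*(n).
\end{align}
Next sample $\D_1,\dots,\D_q$ from $\mathbf{D}$ with $q=q(n)$ as in the statement, and let $\widehat{\D}(x,\pi)$ run a uniformly chosen $\D_i$. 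A Chernoff bound, union-bounded over the at most $2^{p(n)+1}$ possible values of $\pi$, shows that with positive probability the quantity $f(x,\pi)\coloneqq\Pr[\widehat{\D}(x,\pi)=1]$ is within $\epsilon^*/16$ of $\mathbb{E}_{\mathbf{D}}$ simultaneously for all $(x,\pi)$. This is exactly where the term $p(n)+\ln(16/\epsilon^*)$ in $q$ comes from. Fix such a tuple $\D_1,\dots,\D_q$ as non-uniform advice. Taking the worst-case map $\overline{\pi}(\overline{x})\coloneqq\arg\max_\pi f(\overline{x},\pi)$, we obtain
\begin{align}
\mathbb{E}_{\Lg^*_n}[f(x,\pi)]-\mathbb{E}_{\overline{\Lg}_n}\big[\max_\pi f(\overline{x},\pi)\big]\ge \tfrac{7}{8}\epsilon^*(n).
\end{align}

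\emph{Step 2: a $\PQCMA$-aided distinguisher for $(\Lg,\overline{\Lg})$.} Define $g(x)\coloneqq\max_\pi f(x,\pi)$. For yes instances, $\mathbb{E}_{\Lg_n}[g(x)]\ge\mathbb{E}_{\Lg^*_n}[f(x,\pi)]$, so $g$ has expectation at least $\tfrac78\epsilon^*$ larger on $\Lg_n$ than on $\overline{\Lg}_n$. The distinguisher $\adv^{c\QCMA}(x)$ estimates $g(x)$ to additive accuracy about $\epsilon^*/8$ and outputs $1$ with probability equal to the estimate. To estimate $g(x)$, take thresholds $\tau$ on a grid of spacing about $\epsilon^*/16$. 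For each $\tau$, form the promise problem ``$\exists\pi: f(x,\pi)\ge\tau$'' versus ``$\forall\pi: f(x,\pi)\le\tau-\epsilon^*/16$''. Its verifier $\mathcal{Q}_\tau$ takes the classical witness $\pi$, runs $\widehat{\D}(x,\pi)$ about $O(q^2)$ times to estimate $f(x,\pi)$ to within the gap, and compares the estimate with $\tau$. This amplifies the $1/q$-size gap to the $2/3$ versus $1/3$ gap required for $\PQCMA$. Then $\adv$ binary searches over $\tau$ using classical queries to the oracle. The total cost is $n\cdot s^*\cdot q^3$ or less, which is within the budget $s$.

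\emph{Main obstacle.} The delicate step is making Step 2 valid for \emph{every} $\QCMA\in\mathcal{O}_{\textsf{PromiseQCMA}}$, because the gap-problem assumption only supplies some oracle that defeats $\adv$. When $(\mathcal{Q}_\tau,x)$ violates the promise, the oracle's answer is arbitrary. I will handle this by arguing that the answer is always correct at every threshold lying outside the window $(g(x)-\epsilon^*/16,\,g(x)]$. Consequently, any answers are consistent with a value within about two grid steps of $g(x)$, so the binary search still returns $g(x)\pm\epsilon^*/8$ regardless of the oracle's behavior off the promise. One must also check that $\mathcal{Q}_\tau$ is a fixed QPT algorithm with the advice hardwired, so that the definition of $\mathcal{O}_{\textsf{PromiseQCMA}}$ applies to it. Combining the accuracy bound with Step 1 gives advantage at least $\tfrac78\epsilon^*-\tfrac{2}{8}\epsilon^*-(\text{sampling errors})>\epsilon^*/8\ge\epsilon$ for every such oracle. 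This contradicts condition 4 of the $(t,s,\epsilon)$-gap problem. The constants, including the $1000$ in $q$, are then fixed by bookkeeping the Chernoff and estimation errors.
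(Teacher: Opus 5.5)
Your proposal is correct and is essentially the paper's proof: a min-max step, sparsifying $\mathbf{D}$ by sampling with a Chernoff bound union-bounded over $\pi$, threshold promise problems whose classical witness is $\pi$, a threshold search that is robust to arbitrary answers off the promise for every oracle in $\mathcal{O}_{\textsf{PromiseQCMA}}$, outputting $1$ with probability equal to the estimate, and hardwiring the samples as advice. The remaining differences are cosmetic: the paper scans thresholds $i/q$ linearly and amplifies by averaging single runs of $q^3$ sampled distinguishers (Chebyshev) instead of repeating $\widehat{\D}$, and it uses the opposite sign convention, with $\min_\pi$ where you use $\max_\pi$.

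Two small points need fixing, neither affecting the argument. First, a union bound over $\pi$ only gives the $\epsilon^*/16$ approximation for all $\pi$ at each fixed $x$, not simultaneously for all $(x,\pi)$ with this $q$. So obtain the $\tfrac78\epsilon^*$ bound by averaging over the sampled tuple and then fixing a good one, as the paper does. Second, the window where the oracle's answer is unconstrained is $(g(x),g(x)+\epsilon^*/16)$, not $(g(x)-\epsilon^*/16,g(x)]$; your binary-search conclusion still holds with the corrected window.
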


\begin{proof}
Assume for contradiction that the lemma is false. Then, there does not exist a distribution ensemble $\overline{\Lg}^*$ that is $(s^*,\epsilon^*)$-sample-indistinguishable from $\Lg^*$.

Let $\textsf{time}(m)$ be the set of quantum algorithms with run-time {at most} $m$ and let $\textsf{dist}(m)$ be the set of distributions over the set $\textsf{time}(m)$. Let $\textsf{Dist}(\overline{\mathcal{L}}_n)$ be the set of joint distributions on pairs $(\overline{x},\overline{\pi})$ with component $\overline{x}$ distributed according to $\overline{\mathcal{L}}_n$ and second component is of length at most $p(n)$.

{For infinitely many $n\in\mathbb{N}$,} we have the following bound
\begin{align}
     \epsilon^*(n)&\leq \underset{\overline{\Lg}^*_n\in \textsf{Dist}(\overline{\Lg}_n)}{\min} \underset{\mathcal{D}\in \textsf{time}(s^*(n))}{\max}\left|\Pr_{(\overline{x},\overline{\pi})\leftarrow \overline{\Lg}_n^*}[\mathcal{D}(\overline{x},\overline{\pi})=1]-\Pr_{({x},{\pi})\leftarrow {\Lg}_n^*}[\mathcal{D}({x},{\pi})=1]\right|\\
     \label{eq:o1}
     &\leq \underset{\overline{\Lg}^*_n\in \textsf{Dist}(\overline{\Lg}_n)}{\min} \underset{\mathcal{D}\in \textsf{time}(s^*(n)+1)}{\max}
     \left[\Pr_{(\overline{x},\overline{\pi})\leftarrow \overline{\Lg}_n^*}[\mathcal{D}(\overline{x},\overline{\pi})=1]-\Pr_{({x},{\pi})\leftarrow {\Lg}_n^*}[\mathcal{D}({x},{\pi})=1]\right]\\
     \label{eq:41}
        &= \underset{\overline{\Lg}^*_n\in \textsf{Dist}(\overline{\Lg}_n)}{\min} \underset{\mathcal{D}\in \textsf{time}(s^*(n)+1)}{\max}\underset{(\overline{x},\overline{\pi})\leftarrow \overline{\Lg}_n^*}{\mathbb{E}}\left[\mathcal{D}(\overline{x},\overline{\pi})-\Pr_{({x},{\pi})\leftarrow {\Lg}_n^*}[\mathcal{D}({x},{\pi})=1]\right]\\
                &\leq \underset{\overline{\Lg}^*_n\in \textsf{Dist}(\overline{\Lg}_n)}{\min} \underset{\Dis_n\in \textsf{dist}(s^*(n)+1)}{\max}\underset{\begin{subarray}{c}(\overline{x},\overline{\pi})\leftarrow \overline{\Lg}_n^*\\ \mathcal{D}\leftarrow \Dis_n \end{subarray}}{\mathbb{E}}\left[\mathcal{D}(\overline{x},\overline{\pi})-\Pr_{({x},{\pi})\leftarrow {\Lg}_n^*}[\mathcal{D}({x},{\pi})=1]\right]\\ \label{eq:o2}
                  &=\underset{\Dis_n\in \textsf{dist}(s^*+1)}{\max}\underset{\overline{\Lg}^*_n\in \textsf{Dist}(\overline{\Lg}_n)}{\min} \underset{\begin{subarray}{c}(\overline{x},\overline{\pi})\leftarrow \overline{\Lg}_n^*\\ \mathcal{D}\leftarrow \Dis_n \end{subarray}}{\mathbb{E}}\left[ \mathcal{D}(\overline{x},\overline{\pi})-\Pr_{({x},{\pi})\leftarrow {\Lg}_n^*}[\mathcal{D}({x},{\pi})=1]\right]\\ \label{eq:o3}
     &= \underset{\Dis_n\in \textsf{dist}(s^*+1)}{\max}\underset{\overline{\Lg}^*_n\in \textsf{Dist}(\overline{\Lg}_n)}{\min} 
     \left\{\underset{\begin{subarray}{c}(\overline{x},\overline{\pi})\leftarrow \overline{\Lg}_n^*\\ \mathcal{D}\leftarrow \Dis_n \end{subarray}}{\mathbb{E}}[\mathcal{D}(\overline{x},\overline{\pi})]-\underset{\begin{subarray}{c}({x},{\pi})\leftarrow {\Lg}_n^*\\ \mathcal{D}\leftarrow \Dis_n\end{subarray}}{\mathbb{E}}[\mathcal{D}({x},{\pi})]\right\}.
\end{align}

\cref{eq:o1} follows by possibly negating the output of the distinguisher to make the difference positive. \cref{eq:41} follows since $\underset{(\overline{x},\overline{\pi})\leftarrow \overline{\Lg}_n^*}{\mathbb{E}}[\mathcal{D}(\overline{x},\overline{\pi})]=\linebreak \Pr_{(\overline{x},\overline{\pi})\leftarrow \overline{\Lg}_n^*}[\mathcal{D}(\overline{x},\overline{\pi})=1]$. \cref{eq:o2} follows from the min-max theorem \cite{v28}. 

Let $\Dis_n$ be a distribution that maximizes \cref{eq:o3}, and let $\overline{\Lg}_n^*$ be a corresponding minimizing distribution. Define the following terms
\begin{align}
    \textsf{Val}(x,\pi)&\coloneqq \underset{\mathcal{D}\leftarrow \Dis_n}{\mathbb{E}}[\mathcal{D}(x,\pi)] \\
    \textsf{Val}_{min}(x)&\coloneqq \underset{\pi}{\min}\textsf{Val}(x,\pi)\\
    \overline{\rho}_n&\coloneqq \underset{\overline{x}\leftarrow \overline{\Lg}_n}{\mathbb{E}}[\textsf{Val}_{min}(\overline{x})]
    =\underset{\begin{subarray}{c}(\overline{x},\overline{\pi})\leftarrow \overline{\Lg}_n^*\\ \mathcal{D}\leftarrow \Dis_n\end{subarray}}{\mathbb{E}}[\mathcal{D}(\overline{x},\overline{\pi})]\label{eq:bar_rho}\\
    \rho_n&\coloneqq \underset{{x}\leftarrow {\Lg}_n}{\mathbb{E}}[\textsf{Val}_{min}(x)]\leq \underset{\begin{subarray}{c}({x},{\pi})\leftarrow {\Lg}_n^*\\ \mathcal{D}\leftarrow \Dis_n\end{subarray}}{\mathbb{E}}[\mathcal{D}({x},{\pi})]\label{eq:rho}
\end{align}

By definition of $\overline{\Lg}_n^*$, any pair $(\overline{x},\overline{\pi})$ in the support of $\overline{\Lg}_n^*$ satisfies $\textsf{Val}(\overline{x},\overline{\pi})=\textsf{Val}_{min}(\overline{x})$. Next, by \cref{eq:rho,eq:bar_rho,eq:o3}, we get $    \overline{\rho}_n-\rho_n\geq \epsilon^*$.

We will use the distinguishers $\Dis_n$ to construct a distinguisher $\tilde{\mathcal{D}}^{c\QCMA}$ that breaks indistinguishability of the gap problem for any algorithm $\QCMA \in \mathcal{O}_{\textsf{PromiseQCMA}}$. 

Set $q=q(n)=\frac{1000}{\epsilon^*(n)^2}\left(p(n)+\ln\left(\frac{16}{\epsilon^*(n)}\right)\right)$. For any input $x\in \{0,1\}^n$, the algorithm $\tilde{\mathcal{D}}^{c\PQCMA}(x)$ is defined as follows. 

   \smallskip \noindent\fbox{%
  \parbox{\textwidth}{%
\textbf{Algorithm} $\tilde{\mathcal{D}}^{c\PQCMA}(x):$
{\small
\begin{enumerate}
    \item Sample $q^3$ distinguishers $\mathcal{D}_1,\ldots,\mathcal{D}_{q^3}$ from $\Dis_n$. \textit{(This may not be efficiently samplable, but this choice will later be fixed as advice.)}  Let $\mathcal{D}_{(q^3)}(x,\pi)$ be the algorithm that computes $p_1 \coloneqq \frac{1}{q^3}\sum_{i\in [q^3]}\mathcal{D}_i(x,\pi)$ and returns 1 with probability $p_1$ and 0 otherwise.
\item For any $t\in [0,1]$, define the following algorithm:
\begin{align}
    {V}^{t}_n(y,\pi)=
    \begin{cases}
       1 & \mathcal{D}_{(q^3)}(y,\pi)\leq t \\
       0 & \text{otherwise}.
    \end{cases}
\end{align}

\item For $i=1,\ldots,q:$ 
    \begin{enumerate}
        \item Query the oracle to compute $\PQCMA(V_n^{\frac{i}{q}},x)$.
        \item If the response is 1, set the variable $\rho_x$ to $\frac{i}{q}$ and abort loop.
    \end{enumerate} 
    \item Output 1 with probability $\rho_x$.
\end{enumerate}
}}}

Sample $\mathcal{D}_1,\ldots,\mathcal{D}_{q^3}$ from $\Dis_n$. Define the following terms:
\begin{align}
    t_{x,\pi}&\coloneqq t_{x,\pi}(\D_1,\ldots,\D_{q^3})= \mathbb{E}(\mathcal{D}_{(q^3)}(x,\pi))\\
    t_x&\coloneqq t_x(\D_1,\ldots,\D_{q^3})=\min_{\pi}t_{x,\pi}(\D_1,\ldots,\D_{q^3}).
\end{align}
Let $\pi^*_x$ be a value that achieves this minimum. 

By the Chernoff bound, we have
\begin{align}
    \Pr_{\D_1,\ldots,\D_{q^3}\gets \Dis_n^{\otimes q^3}}\left[\lvert t_{x,\pi}-\textsf{Val}(x,\pi)\rvert \geq \frac{\epsilon^*}{8}\right]\leq 2e^{-q(n)\epsilon^*(n)^2/128}\leq 2^{-p(n)}\frac{\epsilon^*}{8}.
\end{align}

Using a union bound, this gives
\begin{align}
    \Pr_{\D_1,\ldots,\D_{q^3}\gets \Dis_n^{\otimes q^3}}\left[\exists \pi\in \{0,1\}^{p(n)} \text{ s.t. } \lvert t_{x,\pi}-\textsf{Val}(x,\pi)\rvert\geq \frac{\epsilon^*}{8}\right]\leq \frac{\epsilon^*}{8}.
\end{align}
Therefore, 
\begin{align}
\label{eq:1}
     \Pr_{\D_1,\ldots,\D_{q^3}\gets \Dis_n^{\otimes q^3}}\left[\left| t_{x}-\textsf{Val}_{min}(x)\right| \geq \frac{\epsilon^*}{8}\right]\leq \frac{\epsilon^*}{8}.
\end{align}

\begin{lemma}
\label{lem:average}
    Let $x\in \{0,1\}^n$. For any evaluation $\tilde{\mathcal{D}}^{c\PQCMA}(x)$,
    \begin{align}
    \lvert t_x-\rho_x\rvert\leq 2/q.
\end{align}
\end{lemma}

\begin{proof}
We analyze a single evaluation of $\tilde{\mathcal{D}}^{c\PQCMA}(x)$. The first step is to sample distinguishers $\mathcal{D}_1,\ldots,\mathcal{D}_{q^3}$. These distinguishers can now be considered fixed for the rest of the proof.  

For any pair $(x,\pi)$, $\mathcal{D}_{(q^3)}(x,\pi)$ can be viewed as a random variable that is the average of $q^3$ independent random variables $\mathcal{D}_1(x,\pi),\ldots, \mathcal{D}_{q^3}(x,\pi)$. The variance of each of these variables is bounded by 1, so
\begin{align}
    \textsf{Var}\left(\mathcal{D}_{(q^3)}(x,\pi)\right)&=\textsf{Var}\left(\frac{1}{q^3}\sum_{i\in [q^3]}\mathcal{D}_i(x,\pi)\right)
    =\frac{1}{q^6}\textsf{Var}\left(\sum_{i\in [q^3]}\mathcal{D}_i(x,\pi)\right)\\
    &=\frac{1}{q^6}\sum_{i\in [q^3]}\textsf{Var}(\mathcal{D}_i(x,\pi)) \leq \frac{1}{q^3}.
\end{align}
By Chebyshev's inequality, we get
\begin{align}
\label{eq:0}
        \Pr\left[\left| t_{x,\pi}-D_{(q^3)}(x,\pi)\right| \geq \frac{1}{16q}\right]\leq \frac{256}{q}.
\end{align}

There exists an integer $i_x\in [q]$  such that $\frac{i_x-1}{q}\leq t_x<\frac{i_x}{q}$.

Given that $t_{x,\pi^*_x}=t_x$, we have
\begin{align}
\label{eq:2}
    \Pr\left[\mathcal{D}_{(q^3)}(x,\pi^*_x)\leq \frac{(i_x+1)}{q}\right]\geq 1-\frac{256}{q}>2/3.
\end{align}
This means that $\PQCMA(V_n^{\frac{i_x+1}{q}},x)=1$. Furthermore, by the minimality of $\pi^*_x$, for any $\pi\in \{0,1\}^{p(n)}$ and $j<i_x-2$: 
\begin{align}
\label{eq:5}
    \Pr\left[\mathcal{D}_{(q^3)}(x,\pi)\leq \frac{j}{q}\right]<1/3
\end{align}
meaning that $\PQCMA(V_n^{\frac{j}{q}},x)=0$. In other words, \cref{eq:2,eq:5} imply that ${\rho}_x\in \{\frac{i_x-1}{q},\frac{i_x}{q},\frac{i_x+1}{q}\}$, so
\begin{align}
\label{eq:3}
    \lvert t_x-\rho_x\rvert\leq 2/q.
\end{align} 
    \qed
\end{proof}

By \cref{lem:average}, \cref{eq:1}, and the triangle inequality, we get
\begin{align}
         \Pr_{\D_1,\ldots,\D_{q^3}\gets \Dis_n^{\otimes q^3}}&\left[\lvert \rho_{x}-\textsf{Val}_{min}(x)\rvert \geq \frac{\epsilon^*}{8}+\frac{2}{q}\right]\\
         &\leq \Pr_{\D_1,\ldots,\D_{q^3}\gets \Dis_n^{\otimes q^3}}\left[\lvert \rho_{x}-t_x\rvert +\lvert t_x-\textsf{Val}_{min}(x)\rvert \geq \frac{\epsilon^*}{8}+\frac{2}{q}\right]\\ 
         &\leq \Pr_{\D_1,\ldots,\D_{q^3}\gets \Dis_n^{\otimes q^3}}\left[\lvert t_x-\textsf{Val}_{min}(x)\rvert \geq \frac{\epsilon^*}{8}\right]\leq \frac{\epsilon^*}{8}.
\end{align}

Let $E_x$ be the event that $\lvert \rho_{x}-\textsf{Val}_{min}(x)\rvert \leq \frac{\epsilon^*}{8}+\frac{2}{q}$. We get the following bounds:
\begin{align}
    \Pr_{x\leftarrow \mathcal{L}_n}[\tilde{\mathcal{D}}^{c\PQCMA}(x)=1]&=\sum_{x\in L_n}\Pr[\tilde{\mathcal{D}}^{c\PQCMA}(x)=1]\Pr[x\leftarrow \mathcal{L}_n]\\
    &\leq 
\sum_{x\in {L}_n}\left(\Pr[\tilde{\mathcal{D}}^{c\PQCMA}(x)=1|E_x  ]+\frac{\epsilon^*}{8} \right)\Pr[x\leftarrow \mathcal{L}_n]\\
&\leq \sum_{x\in {L}_n}\left(\textsf{Val}_{min}(x)+\frac{\epsilon^*}{4} +\frac{2}{q}\right)\Pr[x\leftarrow \mathcal{L}_n]=\rho_n+\frac{\epsilon^*}{4}+\frac{2}{q}.
\end{align}

\begin{align}
        \Pr_{\overline{x}\leftarrow \overline{\mathcal{L}}_n}[\tilde{\mathcal{D}}^{c\PQCMA}(\overline{x})=1]&=\sum_{\overline{x}\in \overline{L}_n}\Pr[\tilde{\mathcal{D}}^{c\PQCMA}(\overline{x})=1]\Pr[\overline{x}\leftarrow \overline{\mathcal{L}}_n]\\
    &\geq \sum_{\overline{x}\in \overline{L}_n}\left(\Pr[\tilde{\mathcal{D}}^{c\PQCMA}(\overline{x})=1|E_{\overline{x}}  ]-\frac{\epsilon^*}{8} \right)\Pr[\overline{x}\leftarrow \overline{\mathcal{L}}_n]\\
&\geq \sum_{\overline{x}\in \overline{L}_n}\left(\textsf{Val}_{min}(\overline{x})-\frac{\epsilon^*}{4} -\frac{2}{q}\right)\Pr[\overline{x}\leftarrow \overline{\mathcal{L}}_n]=\overline{\rho}_n-\frac{\epsilon^*}{4}-\frac{2}{q}.
\end{align}

So the advantage of $\tilde{\mathcal{D}}^{c\PQCMA}$ is
\begin{align}
    \Pr_{\overline{x}\leftarrow \overline{\mathcal{L}}_n}[\tilde{\mathcal{D}}^{c\PQCMA}(\overline{x})=1]- \Pr_{x\leftarrow \mathcal{L}_n}[\tilde{\mathcal{D}}^{c\PQCMA}(x)=1]\geq \overline{\rho}_n-\rho_n-\frac{\epsilon^*}{2}-\frac{4}{q}> \frac{\epsilon^*}{8}=\epsilon.
\end{align}

Our distinguisher needs to sample from an arbitrarily complex distribution $\Dis_n$ of circuits. However,  we can fix the choice of $\D_1,\ldots, \D_{q^3}$  by maximizing the distinguishing advantage and providing the description of these optimal circuits as advice. Therefore, we obtain a distinguisher with run-time $s^*(n)O(q^3(n))$, which is smaller than $s(n)$ for large enough $n$, with distinguishing advantage $\epsilon$. This contradicts the indistinguishability of the $(t,s,\epsilon)$-$\QMA \text{-} \QCMA$ gap problem of $L$. Therefore, there must exist a distribution $\overline{\Lg}^*$ that is $(s^*,\epsilon^*)$-sample-indistinguishable from $\Lg^*$.
\qed
\end{proof}

\section{Fake Proofs for No Instances}
\label{sec:fake proofs}

In this section, we show that for any {$\QMA$} language with a subexponential $\QMA\text{-}\PQCMA${ gap problem} and a corresponding $\NICV$, there exists a (computationally-unbounded) algorithm $\overline{\Pv}$ that generates ``fake proofs'' that are indistinguishable from honestly generated proofs.

\begin{theorem}
\label{result 3}
Let $q=q(n)$ be a polynomial on the security parameter $n\in \mathbb{N}$. Let $L$ be a $\QMA$ language with a subexponential $\QMA \text{-} \QCMA$ gap problem and let $\Pi=(\G,\Pv,\V)$ be a $\NICV$ for $L$. 
Then, there exists a set of algorithms $\overline{\textbf{P}}$ satisfying the following conditions:
\begin{itemize}
    \item $\overline{\textbf{P}}$ is a $\Pi$-adversary. Specifically: 
    \begin{align}
            \Pr_{\overline{\Pv}\gets \overline{\textbf{P}}}\left[ \begin{tabular}{c|c}
 \multirow{2}{*}{$1\gets\mathcal{V}(\sk, \overline{x},\overline{\pi}) \wedge \overline{x}\notin L \ $} &  $\ (\pk,\sk)\ \leftarrow \mathcal{G}(1^n)$ \\ 
 & $\ (\overline{x},\overline{\pi})\ \leftarrow \overline{\mathcal{P}}(1^n,\pk)$\\
 \end{tabular}\right] \geq 1-\negl[n].
 \label{toran}
        \end{align}
        \item Oracle access to $\overline{\Pv}\gets \overline{\textbf{P}}$ is efficiently simulatable. In particular, there exists an efficient (stateful) algorithm $\Sim_n$ such that for every $q$-query QPT distinguisher $\mathcal{D}$: 
        \begin{align}
             \left| \Pr_{\overline{\Pv}\gets \overline{\textbf{P}}}[\mathcal{D}^{q{\overline{\mathcal{P}}}}(1^n)=1]-\Pr[\mathcal{D}^{q\mathcal{S}_n}(1^n)=1]\right|\leq \negl.
        \end{align} 
        Here, the distinguisher can query on any input length.
\end{itemize}
\end{theorem}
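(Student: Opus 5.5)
We follow the Gentry--Wichs template, combined with \cref{result 1}, \cref{lem:oracle indis} and \cref{thm:compressed}.

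\textbf{Step 1: fake proofs for each key.} Fix $n$ and a public key $\pk$ in the support of $\G(1^n)$. Define $\Lg^*_{n,\pk}$ as the distribution of $(x,\pi)$ obtained by running $(x,\ket{w_x}^{\otimes t})\gets\textsf{SampYes}(1^n)$ and then $\pi\gets\Pv(\pk,x,\ket{w_x}^{\otimes t})$. Its first coordinate is $\Lg_n$, and $|\pi|\le p(n)$ for the polynomial running time $p$ of $\Pv$. Apply \cref{result 1} with the subexponential parameters $s=2^{O(n^\delta)}$ and $\epsilon=2^{-\Omega(n^\delta)}$, choosing $\epsilon^*=8\epsilon$, which is still subexponentially small. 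Since $q(n)=\mathrm{poly}(n)/\epsilon^{*2}$, the constraint $s\ge n s^* q^3$ leaves room for $s^*=2^{\Omega(n^{\delta'})}$ for some $\delta'\le\delta$; this may require shrinking the constant in $\epsilon$ relative to $s$, and the parameter matching needs care here. \cref{result 1} then yields a distribution $\overline{\Lg}^*_{n,\pk}$ on pairs $(\overline{x},\overline{\pi})$ with $\overline{x}\gets\overline{\Lg}_n$, which is $(s^*,\epsilon^*)$-indistinguishable from $\Lg^*_{n,\pk}$.

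Define $\overline{\textbf{P}}$ as follows: sample a function that, independently for each input $(1^n,\pk)$, outputs a sample of $\overline{\Lg}^*_{n,\pk}$. Each $\overline{\Pv}$ is then a deterministic function, matching the definition of $\Pi$-adversary.

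\textbf{Step 2: soundness violation.} Because $\overline{x}\notin L$ always, we only need $\Pr[\V(\sk,\overline{x},\overline{\pi})=1]\ge 1-\negl$. The subtlety is that $\V$ uses $\sk$, while the indistinguishability from \cref{result 1} is only for a fixed $\pk$ and is a statement about distributions. We handle this by applying \cref{result 1} to a modified setup in which $\sk$ is hardwired as non-uniform advice: for each $(\pk,\sk)$, build $\overline{\Lg}^*$ against the joint family of distinguishers that includes $\V(\sk,\cdot)$. Equivalently, we run \cref{result 1} with distributions indexed by the full key pair $(\pk,\sk)$ but choose $\overline{\pi}$ depending only on $\pk$. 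Concretely, we define $\Lg^*_{n,\pk}$ as above and note that the distinguisher $\mathcal{D}_{\pk}(x,\pi)=\V(\sk,x,\pi)$, with $\sk$ drawn conditioned on $\pk$, is a polynomial-time distinguisher if we allow it randomness from the conditional distribution of $\sk$ given $\pk$ as nonuniform advice, i.e.\ a distribution over circuits, which the min-max argument already handles. Completeness gives acceptance probability $1-\negl$ on $\Lg^*_{n,\pk}$, so on $\overline{\Lg}^*_{n,\pk}$ we get acceptance at least $1-\negl-\epsilon^*$. This establishes \cref{toran}. Making this conditioning on $\pk$ rigorous (advice of subexponential size is acceptable because $s^*$ is subexponential) is the first technical point.

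\textbf{Step 3: efficient simulation.} Define $\Sim_n$ as the compressed-oracle simulator of \cref{thm:compressed} for the tuple of distributions $(\Lg^*_{n,\pk})_{\pk}$, each of which is QPT samplable via $\textsf{SampYes}$ followed by $\Pv$. Then $\Sim_n$ perfectly simulates an oracle drawn from $\bigotimes_{\pk}\Lg^*_{n,\pk}$. Next, apply \cref{lem:oracle indis} to the tuples $(\Lg^*_{n,\pk})_{\pk}$ and $(\overline{\Lg}^*_{n,\pk})_{\pk}$, indexed by $\pk\in[k]$ with $k=2^{\mathrm{poly}(n)}$. The conditions $s^*\ge s'+q'^3k^3p/\epsilon^*$ and $\epsilon^*\le O(\epsilon'^2/(k^2q'^3))$ force $k$ to be subexponential. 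We therefore pad: choose the gap-problem instance length $m=n^{c}$ for a large constant $c$ so that $2^{m^{\delta'}}$ dominates $k^3$ and $k^2$, and use instances of length $m$ when answering queries at security parameter $n$. This yields $\epsilon'=\negl$ against $q$-query QPT distinguishers at each fixed $n$. Queries on other input lengths are handled by a hybrid over the polynomially many lengths the distinguisher can touch.

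The main obstacle is Step 2 together with the parameter juggling in Step 3: ensuring that the secret-key-dependent verifier counts as a legitimate distinguisher in \cref{result 1}, and that the subexponential hardness survives the key-space blow-up in \cref{lem:oracle indis}. If neither can be arranged directly, the fallback is to first make $\pk$ range over a polynomial-length random string and rely on padding.
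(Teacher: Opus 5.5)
Your proof has a genuine gap in the sentence you treat as routine: ``queries on other input lengths are handled by a hybrid over the polynomially many lengths.''

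\textbf{Why the hybrid over lengths fails.} The statement lets a distinguisher running on $1^n$ query $(1^m,\pk)$ for \emph{any} $m$, including constant $m$. Suppose $\Sim_n$ answers such queries by compressed-oracle simulation of the honest distributions $\Lg^*_{m,\pk}$, which is what your hybrid implicitly assumes. Fix a small constant $m_0$. A QPT distinguisher can hardwire membership in $L$ for the constantly many instances that can appear in answers to $(1^{m_0},\cdot)$. It then queries $(1^{m_0},\pk)$ and tells the yes instance returned by $\Sim_n$ from the no instance returned by $\overline{\Pv}$ with advantage essentially $1$.

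Equivalently, the hybrid step at length $m$ only costs the indistinguishability bound available at length $m$. That bound is negligible in $m$, but not in $n$ once $m\ll n$. Your padding does not help, because the padded instance length is still a function of the query length $m$, not of $n$. You also never say how the reduction in a hybrid step answers the inefficient $\overline{\Pv}$ on the other lengths.

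\textbf{The missing idea.} The paper uses a threshold $m^*(n)=\lfloor\log^{1/d}n\rfloor$, where $n^d$ bounds $|\pk|$ and all run-times.
\begin{itemize}
\item For $m<m^*$, $\Sim_n$ answers from a hardwired table of genuine samples of $\overline{\Lg}^*_{m,\pk}$. The table has at most $\sum_{m<m^*}2^{m^d}\le m^*\cdot n$ entries, so these lengths are simulated perfectly at polynomial cost.
\item For $m\ge m^*$, $\Sim_n$ uses \cref{thm:compressed}.
\end{itemize}
In the hybrid step at length $j\ge m^*$, the reduction hardwires fake answers for all lengths below $j$ (advice of size $2^{O(j^d)}$) and runs $\Sim_n$ for lengths above $j$. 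This orientation is forced: hardwiring fake answers for longer lengths would exceed the time bound $s(j)$ under which length-$j$ oracle-indistinguishability holds.

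The $\mathrm{poly}(n)$ cost of $\D$ and of $\Sim_n$, and the query bound $q(n)$, all fit inside $2^{O(j^d)}$ precisely because $j\ge m^*$. The per-step loss is $2^{-\Omega(j^{d+1})}\le 2^{-\Omega(\log^{(d+1)/d}n)}$, which is negligible in $n$. So summing over $\mathrm{poly}(n)$ lengths works.

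\textbf{Step~2 is not a real obstacle.} For each fixed $\sk$, $\V(\sk,\cdot,\cdot)$ is itself a polynomial-size distinguisher. Sample-indistinguishability of $\Lg^*_{n,\pk}$ and $\overline{\Lg}^*_{n,\pk}$ therefore applies to it directly, and you then average over $(\pk,\sk)\gets\G(1^n)$.
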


\begin{proof}
The proof is through multiple steps which are done separately. 

\paragraph{Defining the distribution over yes instances and proofs.}

Let $t=t(n)$ be the polynomial in $n$ that represents the required number of witness copies in the $\NICV$ scheme $\Pi$ as described in \cref{def:NICV}. Given all the algorithms of $\Pi$ are QPT, there exists a large enough integer $d>4$ such that the run-time of $(\sk,\textsf{pk})\leftarrow \mathcal{G}(1^n)$, $(x,\ket{w_x}^{\otimes t(n)})\leftarrow \textsf{SampYes}(1^n)$,
and $\pi\leftarrow \mathcal{P}(\pk, x,\ket{w_x}^{\otimes t(n)})$ are all bounded by  $O(n^d)$. 

By our assumption, there exists a subexponential $\QMA \text{-} \QCMA$ gap problem $(\mathcal{L},\overline{\mathcal{L}})$ over $L$. We can amplify this through a standard complexity leveraging argument  to build a $(t^*,s^*,\epsilon^*)$-$\QMA \text{-} \QCMA$ gap problem with $t^*(n)\in \poly[n]$, $s^*(n)=2^{cn^{d+3}}$, and $\epsilon^*(n)=2^{-n^{d+3}}$ and $c$ is some constant that will be chosen later. 

For any $(\sk,\pk)\leftarrow \mathcal{G}(1^n)$, let $\mathcal{L}^*_{n,\pk}$ be the distribution over pairs $(x,\pi)$ generated by running $(x,\ket{w_x}^{\otimes t^*(n)})\leftarrow \textsf{SampYes}(1^n)$, and $\pi\leftarrow \mathcal{P}(\pk,x,\ket{w_x}^{\otimes t^*(n)})$. 

\paragraph{{Constructing} a $\Pi$-adversary.}

We now construct an inefficient $\Pi$-adversary set $\overline{\textbf{P}}$. Note that if a problem is $(t^*,s^*,\epsilon^*)$-$\QMA \text{-} \QCMA$ gap problem, then it is also $(t^*,s^*,\tilde{\epsilon})$-$\QMA \text{-} \QCMA$ gap problem for $\tilde{\epsilon}\geq \epsilon^*$. Therefore, setting $\tilde{\epsilon}(n)=2^{-n^{d+2}}$ and applying \cref{result 1}, for any $(\sk,\pk)\leftarrow \mathcal{G}(1^n)$, there exists a distribution $\overline{\mathcal{L}}^*_{n,\pk}$ consisting of pairs $(\overline{x},\overline{\pi})$, such that ${\mathcal{L}}^*_{n,\pk}$ and $\overline{\mathcal{L}}^*_{n,\pk}$ are $(s',\epsilon')$-sample-indistinguishable, where $\epsilon'(n)=8\tilde{\epsilon}=8\cdot 2^{-n^{d+2}}$ and $s'(n)=\frac{s^*(n)}{q(n)^3\cdot n}= 2^{\Omega(n^{d+2})}$ as long as $c$ is set to be sufficiently large. 

Let $ \overline{\mathcal{L}}^*\coloneqq \{ \overline{\mathcal{L}}^*_{n,\pk}\}_{n,\pk}$. We define the set $\overline{\textbf{P}}$ as the distribution of functions obtained from sampling from $ \overline{\Lg}^*$. Therefore, $\overline{\Pv}\gets \overline{\textbf{P}}$ maps an input $(1^n,\pk)$ to an element from $\overline{\Lg}_{n,\pk}^*$.

Given the indistinguishability between $ \overline{\mathcal{L}}^*$ and $ {\mathcal{L}}^*$, we have: 
\begin{align}
    \Pr_{\overline{\Pv}\gets \overline{\Lg}^*}\left[\mathcal{V}(\sk, \overline{x},\overline{\pi})=1: \begin{matrix}(\pk,\sk)\ \leftarrow \mathcal{G}(1^n) \\ 
 (\overline{x},\overline{\pi})\ \leftarrow \overline{\mathcal{P}}(1^n,\pk)\\
 \end{matrix} \right] 
 &\geq \\ \Pr\left[\mathcal{V}(\sk, {x},{\pi})=1: \begin{matrix}(\pk,\sk)\ \leftarrow \mathcal{G}(1^n) \\ 
 ({x},{\pi})\ \leftarrow \mathcal{L}^*_{n,\pk}\\
 \end{matrix} \right]  -\epsilon'
 &\geq 1-\negl[n].
\end{align}

Therefore, we have demonstrated the first point of the theorem.


\paragraph{Constructing the simulator $\mathcal{S}_n$.}

We will now describe how to construct an efficient (stateful) simulator that, later, we argue is quantum-oracle-indistinguishable to sampling $\overline{\mathcal{P}}\gets \overline{\textbf{P}}$. We define $\mathcal{S}_n$ according to a threshold $m^*(n)=\lfloor \log^{1/d}(n)\rfloor$.

For $m\ge m^*$, define $\Sim_n$ as the stateful QPT algorithm in \cref{thm:compressed} that simulates an oracle $O\gets \{\mathcal{L}^*_{m,\pk}\}_{m\ge m^*,\pk\in \{0,1\}^{m^d}}$ relative to $q$-query adversaries.  


For $m<m^*$, $\mathcal{S}_n$ responds using a table $\mathcal{T}_n$ that is given as non-uniform advice. The table $\mathcal{T}_n$ consists of values generated by  $\overline{\Lg}^*$. Specifically, the table consists of tuples of the form $(m,\pk,\overline{x}_{m,\pk},\overline{\pi}_{m,\pk})$, one for each triple $(m,\pk)$ where $m\in [m^*]$ and $\pk\in \{0,1\}^{m^d}$. 
Noting that $(\overline{x}_{m,\pk},\overline{\pi}_{m,\pk})$ are of length $O(m(n)^d)$, the size of the table can be bounded by:
\begin{align}
    \left| \mathcal{T}_n\right| & =\sum_{m=1}^{m^*} 2^{m^d}\cdot O(m(n)^d)\\
    &\leq m^*(n)\cdot 2^{m^*(n)^d} \cdot O(m^*(n)^d)\\
    &\leq \poly[n].
\end{align}
$\Sim_n$ uses the table to respond to queries as follows. On input $(1^m,{\pk})$ with $m<m^*$, $\mathcal{S}_n$ outputs ${(\overline{x}_{m,\pk},\overline{\pi}_{m,\pk})}$. 


Notice that the circuit run-time is $\lvert \mathcal{S}_n\rvert \leq \poly[n]\cdot \lvert \mathcal{T}_n\rvert\leq \poly[n]$. 


\paragraph{Oracle-Indistinguishability.}

Finally, we show that $\mathcal{S}_n$ and $\overline{\mathcal{P}}\gets \overline{\textbf{P}}$ are quantum-oracle-indistinguishable. 

Set $\delta=\delta(n)\coloneqq 2^{-\log^{\frac{d+0.5}{d}}(n)}$, which is a negligible function. Assume for contradiction that there exists a QPT, $q$-query distinguisher $\D$ such that:
\begin{align}
   \left| \Pr_{\overline{\Pv}\gets \overline{\textbf{P}}}[\mathcal{D}^{q{\overline{\mathcal{P}}}}(1^n)=1]-\Pr[\mathcal{D}^{q\mathcal{S}_n}(1^n)=1]\right| >\delta
\end{align}
for infinitely many $n\in \mathbb{N}$.

Since $\mathcal{D}$ is QPT algorithm there exists a max security parameter $m_{\textsf{max}}(n)\in \poly[n]$ such that any query by $\D$ is of length at most $m_{\max}(n)$. 

We now consider a set of hybrids $\Hy_{1},\ldots, \Hy_{m_{\max}(n)}$, where $\Hy_j$ is defined as follows:

\smallskip \noindent\fbox{%
  \parbox{0.95\textwidth}{%
$\Hy_j$:
\item Define a stateful algorithm $O_j$ as follows. For every $m\in \{1,\ldots, m_\textsf{max}(n)\}$ and $\pk\in \{0,1\}^{m^d}$:
    \begin{itemize}    
        \item If $m<j$, on input $(1^m,\pk)$, $O_j$ runs  $\mathcal{S}_n(1^m,\pk)$. 
        \item If $m\geq j$, sample $(x_{m,\pk},\pi_{m,\pk})\gets \overline{\mathcal{L}}^*_{m,\pk}$ and set  $O_j(1^m,\pk)\coloneqq (x_{m,\pk},\pi_{m,\pk})$. 
    \end{itemize}
    \item Run $\D^{q{O}_j}(1^n)$ and output the result.}}
\smallskip

\begin{lemma}
\label{lem:small-hybrids}
    For $m<m^*$, hybrids $\Hy_m$ and $\Hy_{m+1}$ are indistinguishable. 
\end{lemma}

\begin{proof}
  This is clear because the output of $\Sim_n$ on any input $(1^m,\pk)$ with $m<m^*$ is sampled from $\overline{\mathcal{L}}^*_{m,\pk}$. 
    \qed
\end{proof}

Now we analyze the distinguishing advantage among the rest of the hybrids. For any $j\in [m_{\max}(n)]$, let
\begin{align}
    \epsilon_j\coloneqq \left| \Pr[\Hy_{j}=1]-\Pr[\Hy_{j-1}=1] \right|.
\end{align}
Set  $j^*=\textsf{arg\ max}_j\epsilon_j$. Note that $\Hy_1$ has the same output distribution as $\mathcal{D}^{{\overline{\mathcal{P}}}}(1^n)$ over the distribution $\overline{\Pv}\gets \overline{\textbf{P}}$, and $\Hy_{m_{\max}}$ has the same output distribution as $\mathcal{D}^{\mathcal{S}_n}(1^n)$. Therefore, we have
\begin{align}
    \delta&<\left| \Pr[\Hy_{m_{\max}(n)}=1]-\Pr[\Hy_{1}=1] \right|\\
         &\leq \sum_{j=2}^{m_{\max}(n)}\left| \Pr[\Hy_{j}=1]-\Pr[\Hy_{j-1}=1] \right|\\
        &\leq \sum_{j=m^*(n)+1}^{m_{\max}(n)}\left| \Pr[\Hy_{j}=1]-\Pr[\Hy_{j-1}=1] \right|\label{eq:disappear}\\
        &\leq (m_{\max}(n)-m^*(n)+1)\cdot \epsilon_{j^*}.
\end{align}

Note that \cref{eq:disappear} is obtained because the hybrids $\Hy_j$ and $\Hy_{j+1}$ have the same distribution when $j<m^*(n)$. This gives $\epsilon_{j^*}\geq \frac{\delta}{(m_{\max}(n)-m^*(n)+1)}$. 

For $m^*(n)\leq j\leq m_{\max}(n)$, we define a distinguisher $\D_j^O(1^n)$, given oracle access to some oracle $O$, as follows. $\D_j$ receives as advice a sample $(x_{m,\pk},\pi_{m,\pk})\gets \overline{\mathcal{L}}^*_{m,\pk}$ for each $m<j$ and  $\pk\in \{0,1\}^{m^d}$. Then, it constructs the following algorithm:
\begin{align}
    O_{\D_j}(1^m,\pk)\coloneqq 
    \begin{cases}
        (x_{m,\pk},\pi_{m,\pk}) & \ m<j,\\
        O(\pk) & \ m=j,\\
        \mathcal{S}_n(1^m,\pk) & \ m>j.
    \end{cases}
\end{align}
$\D_j^O(1^n)$ runs $\D^{O_{\D_j}}(1^n)$ and outputs the result. 

All in all, $\D_j^O(1^n)$ simulates hybrid $\Hy_{j+1}$ when $O=\Sim_n(1^j,\cdot)$ and simulates $\Hy_{j}$ when $O\gets \overline{\mathcal{L}}^*_{j}$, where $\overline{\mathcal{L}}^*_{j}\coloneqq \{\overline{\mathcal{L}}^*_{j,\pk}\}_{\pk\in \{0,1\}^{j^d}}$. Therefore, we have 
\begin{align}
     \left| \Pr[\D_{j}^{\Sim_n(1^j,\cdot)}(1^n)=1]-\Pr_{O\gets \overline{\mathcal{L}}^*_{j}}[\D_{j}^{O}(1^n)=1] \right|=
      \left| \Pr[\Hy_{j}=1]-\Pr[\Hy_{j-1}=1] \right|=\epsilon_{j}.
\end{align}
Regarding the size of $\D_j$, taking into account the size of $\D$ and the advice, we get that $\lvert \D_j\rvert \leq 2^{O(j^d)}$. 


Recall that for $m>m^*$ and $\pk\in \{0,1\}^{m^d}$, we have ${\mathcal{L}}^*_{m,\pk}$ and $\overline{\mathcal{L}}^*_{m,\pk}$ are $(s'(m),\epsilon'(m))$-sample-indistinguishable. Notice that $q(n)\in 2^{O(m^d)}$ and the size of the distributions ${\mathcal{L}}^*_{m,\pk},\overline{\mathcal{L}}^*_{m,\pk}$ is bounded by $2^{2m^d}$. Therefore, applying  \cref{lem:oracle indis}, we  obtain $(s(m),q(n),\epsilon(m))$-quantum-oracle-indistinguishability with $s(m)\coloneqq O(2^{m^{d+1}})$ and $\epsilon(m)\coloneqq O(2^{-m^{d+1}})$ between the functions sampled from these distributions. In other words, we have that for any $s(m)$-time, $q(n)$-query quantum algorithm $\adv$,
\begin{align}
    \left| \Pr_{O\gets \overline{\mathcal{L}}^*_{m}}\left[ \adv^{qO}(1^m)=1\right]- \Pr_{O\gets {\mathcal{L}}^*_{m}}\left[ \adv^{qO}(1^m)=1\right]\right| \leq \epsilon(m).
\end{align}

By \cref{thm:compressed} and the definition of $\Sim_n$, for any $q$-query quantum algorithm $\adv$, 
\begin{align}
    \Pr_{O\gets {\mathcal{L}}^*_{m}}\left[\adv^{qO} (1^n)=1\right]=\Pr\left[\adv^{q\Sim_n(1^m,\cdot)}(1^n)=1\right].
\end{align}
Therefore, we have that for any  $s(m)$-time, $q(n)$-query quantum algorithm $\adv$,
\begin{align}
      \left| \Pr_{{O\gets \overline{\mathcal{L}}^*_{m}}}\left[ \adv^{qO}(1^m)=1\right]- \Pr\left[ \adv^{q\Sim_n(1^m,\cdot)}(1^m)=1\right]\right| \leq \epsilon(m).
\end{align}


However, we have constructed an algorithm $\D_{m}$ for some $m\geq m^*(n)$ of size $2^{O(m^d)}$, which is smaller than $s(m)$ for large enough $n$, that distinguishes oracle access to $O\gets \overline{\mathcal{L}}^*_{m}$ and $\Sim_n(1^m,\cdot)$ with advantage satisfying (for large enough $n$): 
\begin{align}
    \epsilon_{m}\geq \frac{\delta}{(m_{\max}(n)-m^*(n)+1)}=\frac{2^{-\log^{\frac{d+0.5}{d}}(n)}}{(m_{\max}(n)-m^*(n)+1)}> \epsilon(m).
\end{align} 
where the last inequality holds because $m\geq m^*(n)=\lfloor \log^{1/d}(n)\rfloor$, so for large enough $n$,
\begin{align}
    \epsilon(m)\leq \epsilon(m^*)< 2^{- \log^{(d+1)/d}(n)}<\frac{2^{-\log^{\frac{d+0.5}{d}}(n)}}{\poly[n]}
\end{align} 
In other words, for some $m\geq m^*$, $\D_{m}$ breaks $(s(m),q(n),\epsilon(m))$-quantum-oracle-indistinguishability, yielding a contradiction. Therefore, we conclude that for any QPT, $q$-query distinguisher $\D$:
\begin{align}
   \left| \Pr_{\overline{\Pv}\gets \overline{\textbf{P}}}[\mathcal{D}^{q{\overline{\mathcal{P}}}}(1^n)=1]-\Pr[\mathcal{D}^{q\mathcal{S}_n}(1^n)=1]\right| <\delta
\end{align}
for large enough $n\in \mathbb{N}$.

\qed
\end{proof}

\section{Main Result}
\label{sec:main result}
We are now ready to show the main result of this work about the impossibility of quantum black-box reduction of $\NICV$ to falsifiable cryptographic assumptions.

\begin{theorem}
\label{thm:main}
Let $L$ be a language in $\QMA$ with the verification algorithm $\mathcal{Q}$.
Assume that $L$ and $\mathcal{Q}$ have a subexponential $\QMA \text{-} \QCMA$ gap problem.
Let $\Pi=(\G,\Pv,\V)$ be a $\NICV$ for $L$ and $\mathcal{Q}$. Then, for any falsifiable cryptographic assumption $(\mathcal{C},c)$, one of the following statements hold:
\begin{enumerate}
    \item The assumption $(\mathcal{C},c)$ is false.
    \item There is no quantum black-box reduction showing the soundness of $\Pi$ from the assumption $(\mathcal{C},c)$. 
\end{enumerate}
\end{theorem}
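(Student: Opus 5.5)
We argue by contradiction. Suppose the assumption $(\mathcal{C},c)$ is true and there is a quantum black-box reduction $\Sigma$ showing the soundness of $\Pi$ from $(\mathcal{C},c)$ in the sense of \cref{def:bb reduction}. We will use \cref{result 3} to obtain an efficient adversary against $(\mathcal{C},c)$.

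First, apply \cref{result 3} to $L$ and $\Pi$. The polynomial $q(n)$ there is taken to be an upper bound on the number of oracle queries $\Sigma$ makes on security parameter $n$; this is polynomial because $\Sigma$ is QPT. \cref{result 3} gives a set $\overline{\textbf{P}}$ whose members win the soundness game with probability $1-\negl[n]$, together with an efficient stateful simulator $\Sim_n$.

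The first technical point is that \cref{def:bb reduction} is phrased for a single $\Pi$-adversary $\overline{\Pv}$, not for a set. Averaging over $\overline{\textbf{P}}$ gives, for each $n$, a set of $\overline{\Pv}$ of probability at least $1/2$ on which the soundness-breaking advantage is at least $1/2$. The reduction, however, needs a single deterministic $\overline{\Pv}$ that has advantage $\ge 1/2$ for infinitely many $n$ simultaneously, while the indistinguishability guarantee of \cref{result 3} holds only on average over $\overline{\Pv}\gets\overline{\textbf{P}}$.

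To reconcile these, I would argue as follows. Sample $\overline{\Pv}\gets\overline{\textbf{P}}$; each input length is sampled independently. Let $E_n$ be the event that $\overline{\Pv}$ restricted to length $n$ has advantage below $1/2$. By Markov's inequality, $\Pr[E_n]\le 2\cdot\negl[n]$.
\begin{itemize}
\item If these probabilities are summable (we may strengthen the negligible bound to, say, $n^{-2}$), then \cref{lem:BC} shows that with probability $1$ only finitely many $E_n$ occur.
\item Hence with probability $1$ the sampled $\overline{\Pv}$ is a $\Pi$-adversary with advantage $1/2$ for all large $n$.
\end{itemize}
For every such $\overline{\Pv}$ the reduction guarantees $\Pr[1\gets\langle\mathcal{C}(1^n),\Sigma^{\overline{\Pv}}(1^n)\rangle]\ge c+1/p'(n)$ infinitely often, where $p'$ depends only on the advantage polynomial $p=2$. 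Averaging over $\overline{\Pv}$ gives the same bound for $\mathbb{E}_{\overline{\Pv}}\Pr[1\gets\langle\mathcal{C},\Sigma^{\overline{\Pv}}\rangle]$, though only "infinitely often", and possibly along $\overline{\Pv}$-dependent subsequences.

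This is the step I expect to be the main obstacle: exchanging "for almost every $\overline{\Pv}$, infinitely many $n$" with "infinitely many $n$ on average". My first attempt would be a second Borel--Cantelli / Fatou-type argument. Let $a_n=\Pr_{\overline{\Pv}}[\text{success at }n\ge c+1/p']$. If $a_n$ were eventually $\le 1/(2p')$, then the average success probability would be too small. More carefully, if $\mathbb{E}_{\overline{\Pv}}$ of the success probability were below $c+1/(2p'(n))$ for all large $n$, then reverse-Markov bounds on $a_n$ would make the events summable after passing to a sparse subsequence, which contradicts the almost-sure "infinitely often" guarantee. A fallback is to strengthen \cref{result 3} so that its indistinguishability holds for almost every fixed $\overline{\Pv}$, again via Borel--Cantelli with summable $\delta$.

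Next, consider the combined algorithm $\mathcal{B}\coloneqq\langle\mathcal{C}(1^n),\Sigma^{(\cdot)}(1^n)\rangle$. It is a $q$-query QPT distinguisher with oracle access to $\overline{\Pv}$: it runs the challenger and the reduction together and outputs $\mathcal{C}$'s decision. Note that the reduction may query $\overline{\Pv}$ on arbitrary lengths, which \cref{result 3} allows. By the oracle-indistinguishability part of \cref{result 3},
\begin{align}
\Pr[1\gets\langle\mathcal{C}(1^n),\Sigma^{\Sim_n}(1^n)\rangle]\ge \mathbb{E}_{\overline{\Pv}}\Pr[1\gets\langle\mathcal{C}(1^n),\Sigma^{\overline{\Pv}}(1^n)\rangle]-\negl[n]\ge c+\frac{1}{2p'(n)}
\end{align}
for infinitely many $n$.

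Finally, $\Sigma^{\Sim_n}$ is a QPT algorithm. It is non-uniform because $\Sim_n$ carries the polynomial-size advice table, which is allowed since adversaries are non-uniform by convention. It therefore breaks $(\mathcal{C},c)$ with advantage $1/(2p')$, contradicting the truth of the assumption. Hence either the assumption is false or no such reduction $\Sigma$ exists.
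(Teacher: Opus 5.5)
Your outline is exactly the paper's: invoke \cref{result 3} with $q$ the query bound of $\Sigma$, treat $\langle\mathcal{C}(1^n),\Sigma^{(\cdot)}(1^n)\rangle$ as a $q$-query QPT distinguisher, and conclude that the non-uniform QPT algorithm $\Sigma^{\Sim_n}$ breaks $(\mathcal{C},c)$. The gap is the extra step you insert and flag yourself. Write $X_n(\overline{\Pv})\coloneqq\Pr[1\gets\langle\mathcal{C}(1^n),\Sigma^{\overline{\Pv}}(1^n)\rangle]$. You need to pass from ``for almost every deterministic $\overline{\Pv}$, $X_n(\overline{\Pv})\ge c+1/p'(n)$ for infinitely many $n$'' to ``$\mathbb{E}_{\overline{\Pv}}[X_n]\ge c+1/(2p'(n))$ for infinitely many $n$''. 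Neither of your arguments proves this, and the implication is false for general random variables. Take $X_n=c+1/p'(n)$ on independent events $A_n$ with $\Pr[A_n]=1/n$, and $X_n=0$ otherwise. By the second Borel--Cantelli lemma, $A_n$ occurs infinitely often almost surely, yet $\mathbb{E}[X_n]\to 0$.

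Concretely, from $\mathbb{E}[X_n]<c+1/(2p'(n))$, Markov only gives $\Pr[A_n]<(c+1/(2p'))/(c+1/p')$. That bound is at least $1/2$, so it is not summable. Summability along a sparse subsequence would also contradict nothing, since the infinitely many good $n$ for a given $\overline{\Pv}$ may avoid that subsequence. Your fallback (indistinguishability for almost every fixed $\overline{\Pv}$) is false as stated. A fixed $\overline{\Pv}$ returns fixed strings, so consider a distinguisher that classically queries one fixed $\pk$ and outputs a predicate of the returned instance that is roughly balanced under $\mathcal{L}_n$. It accepts with probability $0$ or $1$ for every fixed $\overline{\Pv}$, but with probability about $1/2$ under $\Sim_n$. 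There is no concentration around the average to exploit.

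The missing idea is that the paper's definitions make this detour unnecessary. A \emph{set} $\overline{\textbf{P}}$ of algorithms is itself a $\Pi$-adversary, with advantage measured on average over $\overline{\Pv}\gets\overline{\textbf{P}}$, and \cref{def:bb reduction} quantifies over all $\Pi$-adversaries. The averaged advantage of $\overline{\textbf{P}}$ from \cref{result 3} is $1-\negl[n]\ge 1/2$. The reduction hypothesis, applied to $\overline{\textbf{P}}$ itself, therefore gives $\Pr_{\overline{\Pv}\gets\overline{\textbf{P}}}[1\gets\langle\mathcal{C}(1^n),\Sigma^{\overline{\Pv}}(1^n)\rangle]\ge c+1/p'(n)$ for infinitely many $n$ directly. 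This is precisely the line the paper writes, and the average-case indistinguishability of \cref{result 3} then applies with no exchange of quantifiers. With that substitution your final two paragraphs are correct as written. Your use of ``infinitely many $n$'' and your remark about the non-uniform advice table are slightly more careful than the paper's ``for large enough $n$''. Under your stricter reading, where $\Sigma$ is only guaranteed to work against individual deterministic adversaries, the paper supplies no bridge. You should adopt the set-based definition rather than try to repair that reading.
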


\begin{proof}
    Since $\Pi$ satisfies the conditions of \cref{result 3}, there exists an (inefficient) $\Pi$-adversary set $\overline{\textbf{P}}$ such that  
     \begin{align}
            \Pr_{\overline{\Pv}\gets \overline{\textbf{P}}}\left[ \begin{tabular}{c|c}
 \multirow{2}{*}{$1\gets\mathcal{V}(\sk, \overline{x},\overline{\pi}) \wedge \overline{x}\notin L \ $} &  $\ (\pk,\sk)\ \leftarrow \mathcal{G}(1^n)$ \\ 
 & $\ (\overline{x},\overline{\pi})\ \leftarrow \overline{\mathcal{P}}(1^n,\pk)$\\
 \end{tabular}\right] \geq 1-\negl[n].
        \end{align}
        
Assume that statement $(2)$ is not true, i.e. there exists a quantum black-box reduction $\Sigma$ that establishes the soundness of $\Pi$ based on $(\mathcal{C},c)$. In other words, there exists a polynomial $p$ such that, 
    \begin{align}
       \Pr_{\overline{\Pv}\gets \overline{\textbf{P}}}[\langle \mathcal{C}(1^n),\Sigma^{{\overline{\mathcal{P}}}}(1^n)\rangle=1]\geq c+\frac{1}{p(n)}.
    \end{align}

    
By \cref{result 3}, there also exists a (stateful) QPT algorithm $\Sim_n$ such that $\overline{\mathcal{P}}\gets \overline{\textbf{P}}$ and $\Sim_n$ are oracle indistinguishable with respect to QPT adversaries. Given that $\langle \mathcal{C},\Sigma^{(\cdot)}\rangle $ is QPT, for large enough $n\in \mathbb{N}$,     
\begin{align}
    \Pr[\langle \mathcal{C}(1^n),\Sigma^{{\Sim_n }}(1^n)\rangle =1]&\geq  \Pr_{\overline{\Pv}\gets \overline{\textbf{P}}}[\langle \mathcal{C}(1^n),\Sigma^{{\overline{\mathcal{P}}}}(1^n)\rangle=1]-\negl \\
    &\geq \left(c+\frac{1}{p(n)}\right)-\negl \geq c+\frac{1}{2p(n)}. 
\end{align}
Therefore, the QPT algorithm $\Sigma^{{\Sim_n}}$ breaks $(\mathcal{C},c)$, so statement $(1)$ is true.
\qed
\end{proof}

\section{Oracle Construction of \textsf{QMA-QCMA}-Gap Problem}
\label{sec:separating qma and qcma}

In this section, we show the existence of a $\QMA \text{-} \QCMA$ gap problem relative to a quantum unitary oracle. First, we state the following useful result from \cite{AK07}. 

\begin{lemma}[Theorem 3.3 in \cite{AK07}]
\label{lem:oracle}
Let $m$ and $q$ be functions on the security parameter $n$. For an $n$-qubit state $\ket{\psi}$, let $U_{{\psi}}$ be the $n$-qubit unitary that maps $U_\psi |\psi\rangle = -|\psi\rangle$, and $U_\psi |\phi\rangle = |\phi\rangle$ whenever $\langle \phi | \psi \rangle = 0$. Let $I$ be the $n$-qubit identity unitary and let $Q$ be a quantum unitary oracle independent of $\ket{\psi}$. Then, for any $q$-query quantum algorithm $\adv$,
\begin{align}
 \underset{\ket{\psi}\gets \mu_n}{\mathbb{E}}\left[\max_{w\in \{0,1\}^{m}}  \left( \left|\Pr[\adv^{Q,U_{\ket{\psi}}}(1^n,w)=1]-\Pr[\adv^{Q,I}(1^n,w)=1]\right|\right)\right]
    \leq O\left( q\sqrt{\frac{m+1}{2^n}} \right).
\end{align}\end{lemma}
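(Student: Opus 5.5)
We would prove this in two stages: first bound the distinguishing advantage for a \emph{fixed} advice string $w$ on average over $\ket{\psi}\gets\mu_n$, and then upgrade to a bound on the maximum over all $2^m$ advice strings. The upgrade uses measure concentration on the sphere (Levy's lemma) together with a union bound. Throughout, $Q$ is independent of $\ket{\psi}$, so we condition on $Q$ and treat it as a fixed unitary that the algorithm may use freely; all randomness then sits in $\ket{\psi}$.

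\emph{Step 1 (fixed $w$, average over $\psi$).} Fix $w$ and define
\begin{align}
f_w(\psi)\coloneqq\Pr[\adv^{Q,U_{\ket{\psi}}}(1^n,w)=1], \qquad c_w\coloneqq\Pr[\adv^{Q,I}(1^n,w)=1],
\end{align}
where $c_w$ does not depend on $\psi$. We run the standard hybrid argument over the $q$ queries to the second oracle. Let $\ket{\phi_i}$ be the (purified) state just before the $i$-th query in the run with $I$; these states are independent of $\psi$. Since $U_{\ket\psi}-I=-2\ket\psi\bra\psi$ on the queried register, replacing $I$ by $U_{\ket\psi}$ at query $i$ moves the final state by at most $2\|(\ket\psi\bra\psi\otimes I)\ket{\phi_i}\|$. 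Hence
\begin{align}
|f_w(\psi)-c_w|\le \sum_{i=1}^{q} 4\,\|(\ket\psi\bra\psi\otimes I)\ket{\phi_i}\|.
\end{align}
For any fixed state, $\mathbb{E}_\psi\|(\ket\psi\bra\psi\otimes I)\ket{\phi_i}\|^2 = 2^{-n}$, because the reduced density matrix has trace one and $\mathbb{E}\ket\psi\bra\psi=I/2^n$. Jensen's inequality then gives $\mathbb{E}_\psi|f_w(\psi)-c_w|\le 4q/\sqrt{2^n}$.

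\emph{Step 2 (Lipschitz bound).} Let $g_w(\psi)\coloneqq |f_w(\psi)-c_w|$. We show $g_w$ is $O(q)$-Lipschitz in the Euclidean metric on the unit sphere of $\mathbb{C}^{2^n}\cong\mathbb{R}^{2^{n+1}}$. The operator-norm estimate
\begin{align}
\|U_{\ket\psi}-U_{\ket{\psi'}}\| = 2\,\|\ket\psi\bra\psi-\ket{\psi'}\bra{\psi'}\|\le 4\|\ket\psi-\ket{\psi'}\|
\end{align}
combines with a second hybrid over the $q$ queries to give $|f_w(\psi)-f_w(\psi')|\le 4q\|\ket\psi-\ket{\psi'}\|$. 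Taking the absolute value against the constant $c_w$ preserves this constant.

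\emph{Step 3 (concentration and union bound).} Levy's lemma for $\eta$-Lipschitz functions on the sphere of dimension $2^{n+1}-1$ gives
\begin{align}
\Pr_\psi\bigl[g_w(\psi)\ge \mathbb{E}g_w+t\bigr]\le 2\exp\bigl(-C\,2^n t^2/q^2\bigr)
\end{align}
for an absolute constant $C$. A union bound over the $2^m$ strings $w$ shows that, except with probability $2^{m+1}\exp(-C2^nt^2/q^2)$, every $g_w$ is at most $4q/\sqrt{2^n}+t$. Outside this event we use the trivial bound $g_w\le 1$. Choose $t=K q\sqrt{(m+1)/2^n}$ with $K$ a large enough constant, so that the failure probability is at most $2^{m+1}e^{-CK^2(m+1)}\le 2^{-(m+1)}$, which is also $O(q\sqrt{(m+1)/2^n})$ whenever the claimed bound is below $1$ (otherwise it is trivial). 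Hence
\begin{align}
\mathbb{E}_\psi\max_w g_w(\psi)\le \frac{4q}{\sqrt{2^n}}+Kq\sqrt{\frac{m+1}{2^n}}+2^{-(m+1)},
\end{align}
and the last term needs the small-quantity argument just given to be absorbed into $O(q\sqrt{(m+1)/2^n})$.

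The main obstacle is Step 2 combined with the concentration step. The Lipschitz constant must be linear in $q$ and independent of the dimension so that Levy's exponent scales as $2^n t^2/q^2$. Care is also needed with the real-versus-complex dimension of the sphere and with the global phase of $\ket\psi$, which is harmless since $U_{\ket\psi}$ depends only on $\ket\psi\bra\psi$. The remaining constant bookkeeping is routine.
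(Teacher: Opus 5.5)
Steps 1 and 2 are sound up to harmless constants, and the Levy-plus-union-bound tail estimate is fine. The gap is in the last step of Step 3. Your inequality $\mathbb{E}_\psi\max_w g_w(\psi)\le 4q/\sqrt{2^n}+Kq\sqrt{(m+1)/2^n}+2^{-(m+1)}$ is correct. However, the claim that $2^{-(m+1)}$ is $O(q\sqrt{(m+1)/2^n})$ whenever the latter is below $1$ is false. For $m=1$ and $q=1$ the target is $\Theta(2^{-n/2})$, while your residual is $1/4$.

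No constant $K$ rescues a single-threshold-plus-trivial-bound argument. The failure probability $2^{m+1}e^{-CK^2(m+1)}$ is a constant when $m$ is. Pushing it below the target requires adding roughly $\ln(2^{n/2}/q)$ to $m+1$ under the square root, which loses a factor of order $\sqrt{n/(m+1)}$ for small $m$ and polynomial $q$. In the paper's application of the lemma the witness length is exponential in $n$, so the extra $2^{-(m+1)}$ would be harmless there. The lemma, though, is stated for arbitrary $m$.

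The repair is to integrate your union-bounded tail instead of truncating it. Set $a\coloneqq\max_w\mathbb{E}_\psi g_w\le 4q/\sqrt{2^n}$ and $\sigma\coloneqq q/\sqrt{C2^n}$. Your Step 3 then gives $\Pr_\psi[\max_w g_w\ge a+t]\le\min\{1,2^{m+1}e^{-t^2/\sigma^2}\}$ for all $t\ge 0$. Split the integral at $t_0\coloneqq\sigma\sqrt{(m+1)\ln 2}$ and use $t^2-t_0^2\ge(t-t_0)^2$ for $t\ge t_0$:
\[
\mathbb{E}_\psi\max_w g_w\le a+\int_0^\infty\min\{1,2^{m+1}e^{-t^2/\sigma^2}\}\,dt\le a+t_0+\tfrac{\sqrt{\pi}}{2}\,\sigma=O\Bigl(q\sqrt{\tfrac{m+1}{2^n}}\Bigr).
\]

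With this fix, your argument is a valid route that differs from the one the paper relies on. The paper gives no proof beyond citing Theorem 3.3 of AK07 and remarking that its argument adapts. That argument uses the same hybrid as your Step 1, but in place of concentration plus a union bound it uses a geometric lemma. The lemma says that for any set $S$ of Haar measure $p$ and any density matrix $\rho$, $\mathbb{E}_{\psi\in S}\langle\psi|\rho|\psi\rangle=O((1+\log(1/p))/2^n)$. One applies it to each set of $\psi$ on which a given $w$ is the maximizer, then averages over these sets using concavity of the square root and the fact that a partition into $2^m$ parts has entropy at most $m$. This handles the maximum over $w$ directly, with no Lipschitz estimate in $\psi$ and no tail bookkeeping. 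Your route instead uses the off-the-shelf Levy inequality, at the cost of the second hybrid in Step 2 and the tail integration above.
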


\begin{remark}
    Theorem 3.3 in \cite{AK07} is slightly weaker than the version given above as \cref{lem:oracle} for two reasons. Firstly, in the proof of  Theorem 3.3, the witness $w$ is set as the string that maximizes $\adv$'s acceptance probability in the case it is given oracle access to $U_{\ket{\psi}}$. Their proof shows that $\adv$ cannot distinguish between $U_{\ket{\psi}}$ and $I$ with the aid of such a witness. However, in \cref{lem:oracle}, the witness is instead set to the string that maximizes the actual distinguishing probability. Nevertheless, the same proof of Theorem 3.3 carries through for this version of the witness. Secondly, we include another oracle $Q$, which can be included since it is independent of $\ket{\psi}$ and the adversary is treated as a black-box. 
\end{remark}

We will also use the following lemma which may be of independent interest and is proven in \cref{sec:emulation}. Essentially, this lemma shows  that giving $\QCMA$ oracle access can be emulated using a classical witness, even in the presence of other quantum unitary oracles. 

\begin{lemma}
\label{thm:oracle}
    Let $q$ be a function on the security parameter $n$ and let $\epsilon\leq 1$ be a function on $q$ and $n$. Let $Q$ be a quantum unitary oracle.  Let $U\gets \mathcal{U}$ and $U'\gets \mathcal{U}'$ be two $n$-qubit unitaries sampled from some distributions $\mathcal{U}, \mathcal{U}'$ of unitaries. Assume that the following classical witness aided indistinguishability condition is satisfied: For any  $q$-time quantum algorithm $\mathcal{B}$,
\begin{align}
\underset{\begin{subarray}{c} U\gets \mathcal{U}\\
U'\gets \mathcal{U}'\end{subarray}}{\mathbb{E}}\left[ \max_{w\in \{0,1\}^{q}}  \left( \left|\Pr[\mathcal{B}^{Q,U}(1^n,w)=1]-\Pr[\mathcal{B}^{Q,U'}(1^n,w)=1]\right|\right)\right]
    \leq \epsilon(n,q)
\end{align}
for large enough $n$. 
    Then, for any $U,U'$, there exist oracles $\QCMA_U\in \mathcal{O}_{\textsf{PromiseQCMA}}^{Q,U}$ and $\QCMA_{U'}\in \mathcal{O}_{\textsf{PromiseQCMA}}^{Q,U'}$, such that for any  $q$-time quantum algorithm $\adv$,
    \begin{align}
\underset{\begin{subarray}{c} U \gets \mathcal{U}\\
U'\gets \mathcal{U}'\end{subarray}}{\mathbb{E}}\left[   \left|\Pr[\adv^{Q,U,c\QCMA_U}(1^n)=1]-\Pr[\adv^{Q,U',c\QCMA_{U'}}(1^n)=1]\right|\right]\\
    \leq 2q\cdot \epsilon(n,1000\cdot n\cdot q)+4q\cdot O(2^{-n})
\end{align}
    for large enough $n$.
\end{lemma}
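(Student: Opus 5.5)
The strategy is to specify the two \textsf{PromiseQCMA} oracles so that their answers on promise inputs are forced, while their answers on non-promise inputs are chosen to carry as little information about the hidden unitary as possible. Then we replace the $q$ adaptive oracle queries, one at a time, by a classical witness. Concretely, fix $U$ (respectively $U'$). For a query $(\mathcal{Q},x)$ let $a_U(\mathcal{Q},x)\coloneqq \max_{w}\Pr[1\gets\mathcal{Q}^{Q,U}(x,w)]$. We define $\QCMA_U(\mathcal{Q},x)=1$ iff $a_U(\mathcal{Q},x)\ge 1/2$, which is a legal member of $\mathcal{O}^{Q,U}_{\textsf{PromiseQCMA}}$. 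The threshold $1/2$ is a placeholder. To avoid threshold-sensitivity, it may be better to pick the threshold at random from a grid of $\approx n$ values inside $(1/3,2/3)$. The $O(2^{-n})$ and $1000\cdot n$ factors in the bound suggest amplification plus a small failure term of this kind. With either choice, all answers on non-promise inputs are determined by the acceptance probabilities alone.

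The proof is a hybrid over the $q$ oracle queries of $\adv$. In hybrid $H_i$, the first $i$ $\QCMA$-queries are answered by $\QCMA_U$ and the rest by $\QCMA_{U'}$, while the $U$/$U'$ oracle switches at the corresponding point. More cleanly, we compare the real $U$-world and the real $U'$-world one query at a time, and at each step we isolate a single oracle answer as the only $U$-dependent bit. For each step we build a $q$-time algorithm $\mathcal{B}$ (polynomially related to $\adv$, whence the $1000\cdot n\cdot q$) that gets as witness a string $w$ encoding the relevant oracle answer together with a certifying classical witness $w^\star$ for a ``yes'' answer. $\mathcal{B}$ runs $\adv$. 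When $\adv$ issues the $\QCMA$ query $(\mathcal{Q},x)$, $\mathcal{B}$ does not call an oracle. Instead it estimates $\Pr[\mathcal{Q}^{Q,\cdot}(x,w^\star)=1]$ with $O(n)$ repetitions of $\mathcal{Q}$ using its own access to $Q$ and the hidden unitary. It answers $1$ iff the estimate clears the threshold, and $0$ if the witness claims ``no''. If the estimate disagrees with the claimed bit, $\mathcal{B}$ outputs a default value. By Chernoff, the $O(n)$ repetitions make a mis-estimate happen only with probability $O(2^{-n})$; this is where the $4q\cdot O(2^{-n})$ term comes from.

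The crux is the following. When the true answer under $U$ is $1$, the witness $w^\star$ certifies it in both worlds only if the acceptance probability is robust, and the hypothesis controls exactly this: since $\mathcal{B}$'s behavior under $U$ and $U'$ differs by at most $\epsilon$ on average even with the best witness, any witness that yields a high estimate under $U$ but a low one under $U'$ produces a distinguisher. When the true answer is $0$, $\mathcal{B}$ has no classical certificate of ``no''. In that case we apply the hypothesis in its contrapositive form: if under $U'$ some $w$ made $\mathcal{Q}$ accept while no $w$ does under $U$, then $\mathcal{B}$ with that $w$ distinguishes. The maximum over $w$ inside the expectation in the hypothesis is exactly what lets us quantify over all such possible witnesses simultaneously. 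Each of the $q$ queries therefore contributes at most $2\epsilon(n,1000nq)$ (one term per direction of disagreement) plus $4\cdot O(2^{-n})$. Telescoping over $q$ queries and taking expectation over $U,U'$ gives the stated bound.

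The main obstacle is the adaptivity together with the threshold ambiguity. Later queries depend on earlier answers, so the witness for step $i$ must fix the transcript of previous answers; that is a $q$-bit string, which the hypothesis allows. In addition, inputs whose acceptance probability sits near the threshold could flip between $U$ and $U'$ with only a small change in probability, and no single-witness test detects this. The first thing I would try is the random-threshold trick: a change of size $\Delta$ in acceptance probability flips the answer for only an $O(\Delta\cdot n)$ fraction of the grid thresholds. This converts small changes in probability into small changes in answer probability, at the cost of the $n$ factor.
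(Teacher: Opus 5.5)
Your proposal has a genuine gap. The threshold problem you flag at the end is never resolved, and the per-query bound depends on it.

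\textbf{The deterministic cutoff.} With the cutoff $[a_U(\mathcal{Q},x)\ge 1/2]$, a query whose best acceptance probability sits at the cutoff can get opposite answers under $U$ and $U'$, even though the hypothesis makes $a_U$ and $a_{U'}$ exponentially close. Your contrapositive step for the ``no'' case only yields a distinguisher with advantage $a_{U'}-a_U$. That can be arbitrarily small while the two oracle answers differ by a full bit, so the claimed $2\epsilon(n,1000nq)$ per query is not justified.

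\textbf{The grid fix is miscalculated.} An interval of length $\Delta$ can contain $3n\Delta+1$ points of a grid with spacing about $1/(3n)$. So the flip probability is $O(\Delta)+1/n$, not $O(\Delta n)$. The additive $1/n$ per query survives no matter how small $\Delta$ is, whereas the application (Claims~\ref{claim:1} and~\ref{claim:2}) needs $2^{-\Omega(n)}$.

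\textbf{The reduction that runs $\mathcal{A}$.} Hard-wiring the earlier answers into $\mathcal{B}$'s witness does not simulate anything. $\mathcal{A}$'s classical queries are measurement outcomes, random and correlated with $U$, so no fixed $q$-bit answer string is correct on the realized queries. Certificates $w^\star$ for every ``yes'' answer would not fit in $q$ bits anyway.

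\textbf{The missing idea (the paper's route).} Make the oracle's answer on non-promise inputs randomized, with acceptance probability that the hypothesis controls directly.
\begin{itemize}
\item Let $\mathcal{S}^{Q,U}(V,x,w)$ run $V^{Q,U}(x,w)$ $1000n$ times and output $1$ iff the average is at least $1/2$.
\item Let $\QCMA_U$ answer promise inputs correctly, and on a non-promise input $(V,x)$ output $\mathcal{S}^{Q,U}(V,x,\tilde w)$ for the maximizing $\tilde w$.
\item Hoeffding then gives $|\Pr[\QCMA_U(V,x)=1]-\max_w\Pr[\mathcal{S}^{Q,U}(V,x,w)=1]|\le O(2^{-n})$ for every input.
\item Using $|\max A-\max B|\le\max|A-B|$, the answer-probability difference at any single query is at most $\max_{(V,x,w)}|\Pr[\mathcal{S}^{Q,U}(V,x,w)=1]-\Pr[\mathcal{S}^{Q,U'}(V,x,w)=1]|+2\cdot O(2^{-n})$.
\item The expectation of that maximum is at most $\epsilon(n,1000nq)$, by the hypothesis applied to the single algorithm $\mathcal{S}$ with witness $(V,x,w)$. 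This is where $1000nq$ comes from.
\end{itemize}
Because the maximum sits inside the expectation, it does not matter that $\mathcal{A}$'s query depends on $U$. The hybrid (after splitting into $2q$ queries, each either to $\QCMA$ or to $(Q,U)$) only needs this pointwise bound on the swapped answer. It never needs a $\mathcal{B}$ that simulates $\mathcal{A}$ or carries a transcript.

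A uniformly random \emph{continuous} threshold, resampled at every query, would also repair your approach: the flip probability is at most $3\Delta$ and no amplification is needed. A grid of polynomially many thresholds does not work.
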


We are now ready to prove the existence of a $\QMA \text{-} \QCMA$ gap problem relative to a quantum unitary oracle. 
\begin{theorem}
\label{result 2}
    Let $0<\delta<\frac{1}{16}$ be a constant and $t=t(n)$ be a polynomial on the security parameter $n\in \mathbb{N}$. There exists a language in $\QMA$ and a corresponding $(t,2^{\delta n},2^{-\delta n})$-$\QMA \text{-} \QCMA$ gap problem relative to a quantum unitary oracle. 
\end{theorem}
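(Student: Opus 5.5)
We will use the oracle described in the technical overview. For each $n$ we sample $L_n\subseteq\{0,1\}^n$ with each $x$ included independently with probability $1/2$. For every $x\in L_n$ we sample a Haar-random state $\ket{\psi_x}\gets\mu_n$. The oracle $\mathcal{T}=U_L$ acts on $\ket{x}\ket{\phi}$ as $\ket{x}U_{\psi_x}\ket{\phi}$ when $x\in L$ (with $U_{\psi_x}$ as in \cref{lem:oracle}) and as the identity when $x\notin L$.

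The $\QMA$ verifier $\mathcal{Q}^{\mathcal{T}}(x,\ket{w})$ performs a Hadamard-test (controlled query) phase estimation of the $-1$ eigenvalue on $\ket{x}\ket{w}$. For $x\in L$ the witness $\ket{\psi_x}$ is accepted with probability $1$. For $x\notin L$ the oracle acts as the identity, so every witness is accepted with probability $0$.

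For sampling we add a second oracle component that makes $\textsf{SampYes}$ efficient: a unitary $G$ that on input $\ket{i}\ket{0}$, with $i\in\{0,1\}^n$, outputs $\ket{x_i}\ket{\psi_{x_i}}$. Here $x_i$ is a random member of $L_n$ (or a uniform random index into $L_n$). A similar map gives no-instances for $\textsf{SampNo}$. Calling $G$ $t$ times on the same $i$ yields $(x,\ket{\psi_x}^{\otimes t})$. Then $\mathcal{L}_n$ is uniform on $L_n$ and $\overline{\mathcal{L}}_n$ is uniform on $\overline{L}_n$.

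\textbf{Indistinguishability.} Fix an $s=2^{\delta n}$-time $\adv$. We run the following hybrid argument.
\begin{enumerate}
\item Sampling $x\gets\mathcal{L}_n$ versus $x\gets\overline{\mathcal{L}}_n$ is the same as sampling a uniform $x\in\{0,1\}^n$ and conditioning on the random bit $[x\in L]$.
\item We re-randomize the challenge $x$: we replace the oracle entry at $x$ by $U_{\psi_x}$ (yes case) or by $I$ (no case). Everything else in the oracle, including $L\setminus\{x\}$, the other states, and the $\textsf{SampYes}$/$\textsf{SampNo}$ oracles, is collected into an oracle $Q$ that is independent of $\ket{\psi_x}$.
\item We must deal with the sampling oracle $G$, which leaks $\ket{\psi_x}$ whenever it outputs $x$. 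Since $G$ picks among $2^{\Omega(n)}$ instances, a $2^{\delta n}$-query adversary queries an index $i$ with $x_i=x$ only with small amplitude. A standard query-magnitude / BBBV argument then lets us reprogram $G$ so that it never outputs $x$. This costs $O(2^{\delta n}\cdot 2^{-n/2})$.
\item After this decoupling, we apply \cref{thm:oracle} with $U=U_{\psi_x}$, $U'=I$ and the independent $Q$. Its hypothesis is supplied by \cref{lem:oracle}, with witness length $1000nq$, $q=2^{\delta n}$, and the register restricted to the $x$-slice.
\item This gives advantage
\[
2q\cdot O\!\left(q\sqrt{nq/2^n}\right)+O(q2^{-n})=O\!\left(\sqrt n\, 2^{(2.5\delta-1/2)n}\right),
\]
which is below $2^{-\delta n}$ because $\delta<1/16$.
\item The $\QCMA$ oracle produced by \cref{thm:oracle} depends on $U$. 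This is exactly what the definition allows, since the oracle may be chosen per oracle instance.
\item To get a bound holding with probability $1$ over $\mathcal{T}$, we use Markov's inequality on the expected advantage together with the Borel--Cantelli lemma (\cref{lem:BC}), losing a factor such as $2^{\delta n/2}$. For this we run the whole argument with a slightly larger internal $\delta$.
\item There are countably many algorithms $\adv$ (up to the choice of a fixed description), so a union bound over them preserves probability $1$.
\end{enumerate}

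\textbf{Main obstacle.} The hard step is the decoupling in step 3. The yes/no sampling oracles, and the adaptive $\QCMA$ queries whose verifiers can themselves call $\mathcal{T}$, must not carry information about $\ket{\psi_x}$. Otherwise $Q$ is not independent of $\ket{\psi_x}$ and \cref{lem:oracle} cannot be applied. I would first try defining the sampling oracles via a random injective indexing. I would then argue, in a hybrid, that the challenge $x$ can be replaced by a fresh instance absent from the sampling oracles' ranges, with loss bounded by the query magnitude on the single relevant index.
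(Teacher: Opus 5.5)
\textbf{Gap: the classically indexed sampler.} The problem sits exactly where you place the main obstacle: with the sampler $G$ as you define it, the statement is false for your oracle. Because $G$ is addressed by a classical string $i$, the predicate ``$x$ lies in the range of $G$'' is a $\textsf{PromiseQCMA}^{\mathcal{T}}$ problem whose witness is the index itself. The verifier $V^{\mathcal{T}}(x,i)$ queries $G$ on $\ket{i}\ket{0}$, measures the instance register, and accepts iff the outcome is $x$.

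\begin{itemize}
\item A challenge $x\gets\mathcal{L}_n$ is, by the definition of $\mathcal{L}_n$, an output of $\textsf{SampYes}$. So some $i$ with $x_i=x$ exists, and $V$ accepts with probability $1$.
\item For $x\in\overline{L}_n$, the range of $G$ is contained in $L_n$, so $V$ accepts with probability $0$ on every witness.
\end{itemize}

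Both are promise inputs, so every oracle in $\mathcal{O}^{\mathcal{T}}_{\textsf{PromiseQCMA}}$ must answer them correctly. A single classical query therefore distinguishes with advantage $1$, and the analogous query breaks a classically indexed no-sampler. This is also why step 3 cannot be rescued by a random injective indexing. A BBBV/query-magnitude bound averages over a location the algorithm does not know. The hypothesis of \cref{thm:oracle}, however, has the $\max_w$ inside the expectation over the oracle, and the witness can simply be the index $i$ with $x_i=x$.

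\textbf{The missing idea and remaining steps.} The samplers must give no classical handle on individual instances. The paper's $\textsf{GenYes}_n$ swaps $\ket{0}$ with $|L_n|^{-1/2}\sum_{x\in L_n}\ket{x}\ket{\psi_x}^{\otimes t}\ket{1}$, and $\textsf{GenNo}_n$ swaps $\ket{0}$ with the uniform superposition over $\overline{L}_n$. Sampling means querying on $\ket{0}$ and measuring.

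Deleting the challenge $y$ from the yes-superposition changes $\textsf{GenYes}_n$ by operator norm $O(|L_n|^{-1/2})$. This bound holds for \emph{every} witness, so it passes through \cref{thm:oracle}. After that, the rest of the oracle is independent of $\ket{\psi_y}$. Your step 4, which feeds \cref{lem:oracle} into \cref{thm:oracle} for $U_{\psi_y}$ versus $I$, then goes through as in the paper.

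You also need the symmetric step on the no side, which your step 3 omits. For one and the same $Q$ to appear in both worlds, $y$ must also be added to the no-superposition; this is the paper's $\textsf{GenNo}_n^y$ hybrid. After that change, the yes and no challenges are identically distributed relative to the modified oracle.

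Two smaller points:
\begin{itemize}
\item The membership oracle needs a built-in control qubit, as in the paper's $U_n$, for the Hadamard test to be implementable.
\item Under independent inclusion, the identity in your step 1 holds only up to an exponentially small statistical error.
\end{itemize}
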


\begin{proof}
We first define the language for the gap problem. For each $n\in \mathbb{N}$, let $L_n\subseteq\{0,1\}^n$ be a language of size $\lfloor 2^{n/2}\rfloor$ chosen uniformly at random and let $\overline{L}_n\coloneqq\{0,1\}^n\setminus L_n$. We also define $L\coloneqq \cup_n L_n$ and let $L(x)=1$ if $x\in L$ and $L(x)=0$ otherwise.

Now we define the quantum unitary oracles used to construct the gap problem for $L$ as follows. For simplicity, we let $\ket{0}$ denote the state $\ket{0^k}$ of appropriate dimension $k\in \mathbb{N}$. 

\begin{construct}
\label{con:oracles}
For each $n\in \mathbb{N}$ and $x\in \{0,1\}^n$, sample an $n$-qubit state $\ket{\psi_x}\leftarrow \mu_{n}$.
Let $\mathcal{T}\coloneqq(\textsf{GenYes},\textsf{GenNo},U)$, where
$\textsf{GenYes}\coloneqq\{\textsf{GenYes}_n\}_{n\in\mathbb{N}}$,
$\textsf{GenNo}\coloneqq\{\textsf{GenNo}_n\}_{n\in\mathbb{N}}$, 
and
$U\coloneqq\{U_n\}_{n\in\mathbb{N}}$. 
Each of them is defined as follows: 
    \begin{enumerate}
        \item $\textsf{GenYes}_n:$ A unitary on $(t+1)n+1$ qubits defined as follows:
        \begin{align}
        \textsf{GenYes}_n\coloneqq |\phi_n^{\textsf{Yes}}\rangle \langle 0|+|0\rangle\langle\phi_n^{\textsf{Yes}}|+
        I-|\phi_n^{\textsf{Yes}}\rangle\langle\phi_n^{\textsf{Yes}}|-|0\rangle\langle0|,
        \end{align}
        where $\ket{\phi_n^{\textsf{Yes}}}\coloneqq \frac{1}{\sqrt{|L_n|}}\sum_{x\in L_n}\ket{x}\ket{\psi_x}^{\otimes t(n)}\ket{1}$.
        In other words, $\textsf{GenYes}_n$ is a unitary that ``swaps'' $|0\rangle$ and $|\phi_n^{\textsf{Yes}}\rangle$.


        \item $\textsf{GenNo}_n:$ A unitary on $n+1$ qubits defined as follows:
 \begin{align}
        \textsf{GenNo}_n\coloneqq |\phi_n^{\textsf{No}}\rangle \langle0|+|0\rangle\langle\phi_n^{\textsf{No}}|+
        I-|\phi_n^{\textsf{No}}\rangle\langle\phi_n^{\textsf{No}}|-|0\rangle\langle0|,
        \end{align}
        where $\ket{\phi_n^{\textsf{No}}}\coloneqq\frac{1}{\sqrt{|\overline{L}_n|}}\sum_{x\in \overline{L}_n}\ket{x}\ket{1}$.
        In other words, $\textsf{GenNo}_n$ is a unitary that ``swaps'' $|0\rangle$ and $|\phi_n^{\textsf{No}}\rangle$.

\item $U_n$: A unitary on $2n+1$ qubits defined as follows. 
\begin{align}
U_n\coloneqq\sum_{\begin{subarray}{c}
    b\in \{0,1\}\\ x\in\{0,1\}^n
\end{subarray}}|b\rangle\langle b|\otimes |x\rangle\langle x| \otimes\left[(-1)^{L(x)\cdot b}|\psi_x\rangle\langle\psi_x|+(I-|\psi_x\rangle\langle\psi_x|)\right].
\end{align}
In other words, it adds the phase $(-1)^{L(x)\cdot b}$ only to $|\psi_x\rangle$.
    \end{enumerate}    
\end{construct}

We will now show that the oracle $\mathcal{T}$ gives a $\QMA \text{-} \QCMA$ gap problem for $L$. 

Firstly, we need to show that $L\in \QMA^\mathcal{T}$ meaning we need to describe an appropriate verification algorithm $\mathcal{Q}^{\mathcal{T}}$. For $x\in L_n$ with witness $\ket{\psi_x}$, $\mathcal{Q}^{\mathcal{T}}(x,\ket{\psi_x})$ is as follows. Prepare the state 
$\frac{1}{\sqrt{2}}(\ket{0}\ket{x}\ket{\psi_x}+\ket{1}\ket{x}\ket{\psi_x})$, apply $U_n$, apply a Hadamard gate on the first register, measure it, and accept if the result is 1. It is easy to check that this algorithm accepts with probability 1. Next, note that if $\overline{x}\in \overline{L}_n$, then for any $n$-qubit state $\ket{\overline{\psi}}$ and bit $b$, $U_n(\ket{b}\ket{\overline{x}}\ket{\overline{\psi}})=\ket{b}\ket{\overline{x}}\ket{\overline{\psi}}$. Therefore, implementing the same verification procedure on any $\overline{x}\in \overline{L}_n$, given any $n$-qubit state 
as witness, yields 0 with probability 1. Therefore, $\mathcal{Q}^{\mathcal{T}}$ acts as a valid verification algorithm, and we have $L\in \QMA^{\mathcal{T}}$. 

Furthermore, $\textsf{SampYes}$ and $\textsf{SampNo}$ algorithms can be easily constructed using $\textsf{GenYes}$ and $\textsf{GenNo}$, respectively. Specifically, we can define a sampling algorithm $\textsf{SampYes}^\mathcal{T}(1^n)$ that queries $\textsf{GenYes}_n(\ket{0})$, measures the first $n$ qubits of the response in the computational basis, obtains say $x$, collapses the rest of the result to $\ket{x}\ket{\psi_x}^{\otimes t(n)}\ket{1}$, and outputs $(x,\ket{\psi_x}^{\otimes t(n)})$. Similarly, define an algorithm $\textsf{SampNo}^\mathcal{T}(1^n)$ that queries $\textsf{GenNo}_n(\ket{0})$, measures the response in the computational basis, and outputs the the first $n$ bits. 

It remains to show security. Specifically, we will show the indistinguishability condition of the gap problem relative to $\mathcal{T}$. 

Let $\QCMA$ be the algorithm in $\mathcal{O}_{\textsf{PromiseQCMA}}^\mathcal{T}$ that is defined as follows on any non-promise input $(V,x)$. $\QCMA$ receives a witness $w$ that maximizes the probability $\Pr[ V^{\mathcal{T}}(x,w)=1]$, then computes $V^{\mathcal{T}}(x,w)$ $1000\cdot n$ times, and if the average of these computations is less than $1/2$, it outputs 0 and outputs 1 otherwise. Notice that this is the same type of oracle that can be emulated with a classical witness as given in the proof of \cref{thm:oracle}. 
Let $\adv^{c\QCMA,\mathcal{T}}$ be any $2^{\delta n}$-time quantum adversary. Clearly, $\adv$ queries the oracles at most $q\coloneqq 2^{\delta n}$ times.  


We commence with a hybrid argument.

\begin{itemize}
    \item $\Hy_0(n)$:  
    \begin{enumerate}
    \item Sample $L$ and $\mathcal{T}$ as described at the start of the proof.
        \item Sample $y\leftarrow L_n$.
        \item Run $ \adv^{c\QCMA, \mathcal{T}}(y)$, and output the result. 
    \end{enumerate}
    
    \item $\Hy_{1}(n)$: The same as $\Hy_0$, except that $\adv$'s oracle access is modified by replacing the oracle $\textsf{GenYes}_n$ in both the $\QCMA$ and $\mathcal{T}$ oracles (the $\QCMA$ oracle also contains $\textsf{GenYes}_n$) with a new unitary
    $\textsf{GenYes}_n^y$, which is the same as $\textsf{GenYes}_n$ except that $\ket{\phi_n^{\textsf{Yes}}}$ is replaced with the state 
    \begin{align}
        \ket{\phi_n^{\textsf{Yes},y}}\coloneqq \frac{1}{\sqrt{|L_n|-1}}\sum_{x\in L_n\setminus y}\ket{x}\ket{\psi_x}^{\otimes t}\ket{1}.
    \end{align}
    We denote the resulting oracles as $(\QCMA_{1,y},\mathcal{T}_{1,y})$.

        \item $\Hy_{2}(n)$: The same as $\Hy_1$, except that $\adv$'s oracle access is modified by replacing $U_n$ (in both the $\QCMA_{1,y}$ and $\mathcal{T}_{1,y}$ oracles) with a new unitary $U_n^y$, which is defined as follows:
\begin{align}
U_n^y&\coloneqq \sum_{
    b\in \{0,1\}}|b\rangle\langle b|\otimes \left(|y\rangle\langle y|\otimes I    
+ \sum_{x\in \{0,1\}^n\setminus y}|x\rangle\langle x|\otimes \left[(-1)^{L(x)\cdot b}|\psi_x\rangle\langle\psi_x|+(I-|\psi_x\rangle\langle\psi_x|)\right] \right)   .
\end{align}
 We denote the resulting oracles as $(\QCMA_{2,y},\mathcal{T}_{2,y})$.

\item $\Hy_3(n)$: The same as $\Hy_2$, except $\adv$'s oracle access is modified by replacing $\textsf{GenNo}_n$ (in both the $\QCMA_{2,y}$ and $\mathcal{T}_{2,y}$ oracles) with a new unitary $\textsf{GenNo}_n^y$, which is the same as $\textsf{GenNo}_n$ 
except that $\ket{\phi_n^{\textsf{No}}}$ is replaced with the state $\frac{1}{\sqrt{|\overline{L}_n|+1}}\sum_{x\in (\overline{L}_n\cup \{y\})}\ket{x}\ket{1}.$
 We denote the resulting oracles as $(\QCMA_{3,y},\mathcal{T}_{3,y})$. 
            \item $\Hy_4(n)$: We now sample the challenge from $\overline{L}$.
    \begin{enumerate}
    \item Sample $L$ and $\mathcal{T}$ as described at the start of the proof.
        \item Sample $y\leftarrow {L}_n$.
        \item Sample $\overline{y}\gets \overline{L}_n$.
        \item Run $\adv^{c\QCMA_{3,y}, \mathcal{T}_{3,y}}(\overline{y})$, and output the result. 
    \end{enumerate}

       \item $\Hy_{5}(n)$: The same as $\Hy_4$, except that $\adv$'s oracle access is changed back to $(\QCMA_{2,y},\mathcal{T}_{2,y})$.

     \item $\Hy_{6}(n)$: The same as $\Hy_6$, except that $\adv$'s oracle access is changed back to $(\QCMA_{1,y},\mathcal{T}_{1,y})$.

     \item $\Hy_7(n)$: The same as $\Hy_7$, except $\adv$'s oracle access is changed back to $(\QCMA,\mathcal{T})$. In particular, this hybrid is as follows:
         \begin{enumerate}
    \item Sample $L$ and $\mathcal{T}$ as described at the start of the proof.
        \item Sample $\overline{y}\gets \overline{L}_n$.
        \item Run $\adv^{c\QCMA, \mathcal{T}}(\overline{y})$, and output the result. 
    \end{enumerate}
\end{itemize}

\begin{Claim}
\label{claim:1}
\begin{align}
    \left| \Pr[{\Hy}_0(n)=1]-\Pr[\Hy_1(n)=1]\right|\leq O\left(2^{-n/32}\right).
\end{align}
\end{Claim}

\begin{proof}
The only difference between the hybrids is that
$\adv$'s access to $\textsf{GenYes}_n$ is replaced with $\textsf{GenYes}_n^y$.  We first show indistinguishability given a classical witness $w$ instead of a $\QCMA$ oracle. To do this, we run another hybrid argument. 


Fix a witness $w\in \{0,1\}^q$. Let $\Hy_{0,0}(n)$ be the distribution generated by the evaluation $\adv^{\textsf{GenYes},\textsf{GenNo},U}(y,w)$. Let $\Hy_{0,i}(n)$ for $i\in [q]$ be the hybrid where $\adv$'s oracle access is modified as follows: $\adv$ gets oracle access to $(\textsf{GenYes}^y,\textsf{GenNo},U)$ for the first $i$ queries and, then, gets access to $(\textsf{GenYes},\textsf{GenNo},U)$ for the rest of its queries. For any $i\in [q]$, we can bound the ability of $\adv$ to distinguish hybrids $\Hy_{i-1}$ and $\Hy_{i}$ by the spectral norm $\|\cdot \|$ between the unitaries $\textsf{GenYes}_n$ and $\textsf{GenYes}_n^y$. In particular, 
\begin{align}
    &\left|\Hy_{0,i-1}(n)-\Hy_{0,i}(n)\right|=\|\textsf{GenYes}_n- \textsf{GenYes}_n^y\|  \\
   =& \left\|\left(|\phi_n^{\textsf{Yes}}\rangle \langle 0|+|0\rangle\langle\phi_n^{\textsf{Yes}}|-|\phi_n^{\textsf{Yes}}\rangle\langle\phi_n^{\textsf{Yes}}|\right) -\left(|\phi_n^{\textsf{Yes},y}\rangle \langle 0|+|0\rangle\langle\phi_n^{\textsf{Yes},y}|-|\phi_n^{\textsf{Yes},y}\rangle\langle\phi_n^{\textsf{Yes},y}|\right)\right\|\\ \label{eq:n0}
   =&   \left\|\left(|\phi_n^{\textsf{Yes}}\rangle \langle 0|-|\phi_n^{\textsf{Yes},y}\rangle \langle 0|\right) +\left(|0\rangle\langle\phi_n^{\textsf{Yes}}|- |0\rangle\langle\phi_n^{\textsf{Yes},y}| \right)     +\left(|\phi_n^{\textsf{Yes}}\rangle\langle\phi_n^{\textsf{Yes}}|-|\phi_n^{\textsf{Yes},y}\rangle\langle\phi_n^{\textsf{Yes},y}|\right)\right\|\\ \label{eq:n1}
    =&\|  \left(|\phi_n^{\textsf{Yes}}\rangle -|\phi_n^{\textsf{Yes},y}\rangle \right)\langle 0| +|0\rangle \left(\langle\phi_n^{\textsf{Yes}}|- \langle\phi_n^{\textsf{Yes},y}| \right)\\ +& |\phi_n^{\textsf{Yes}}\rangle \left(\langle\phi_n^{\textsf{Yes}}|-\langle\phi_n^{\textsf{Yes},y}|\right)+\left(|\phi_n^{\textsf{Yes}}\rangle-|\phi_n^{\textsf{Yes},y}\rangle\right)\langle\phi_n^{\textsf{Yes},y}| \|\\ \label{eq:n2}
       =&\left\|  |\phi_n^{\textsf{Yes}}\rangle -|\phi_n^{\textsf{Yes},y}\rangle \right\| \cdot \|\langle 0|\| +\| |0\rangle\| \cdot \left\|\langle\phi_n^{\textsf{Yes}}|- \langle\phi_n^{\textsf{Yes},y}| \right\|\\
       +&\| |\phi_n^{\textsf{Yes}}\rangle \|\left\| \langle\phi_n^{\textsf{Yes}}|-\langle\phi_n^{\textsf{Yes},y}|\right\| +\left\| |\phi_n^{\textsf{Yes}}\rangle-|\phi_n^{\textsf{Yes},y}\rangle\right\| \cdot \|\langle\phi_n^{\textsf{Yes},y}| \|\\
    \leq& 2^{-n+3}.
\end{align} 
Here, \cref{eq:n0} is obtained by reordering the terms, \cref{eq:n2} by the triangle inequality, and the final bound by noting:
\begin{align}
    \||\phi_n^{\textsf{Yes}}\rangle -|\phi_n^{\textsf{Yes},y}\rangle\|=\left\| \langle\phi_n^{\textsf{Yes}}|-\langle\phi_n^{\textsf{Yes},y}|\right\|\leq 2^{-n+1}.
\end{align}
This is information-theoretic bound so it holds regardless of the witness or computational power of $\adv$. Therefore, by the triangle inequality, 
\begin{align}
     &\left|\Pr[\adv^{\textsf{GenYes},\textsf{GenNo},U}(y,w)=1]- \Pr[\adv^{\textsf{GenYes}^y,\textsf{GenNo},U}(y,w)=1]\right|  \\ = &\left|\Hy_{0,0}(n)-\Hy_{0,q}(n)\right|\\ \leq & q\cdot 2^{-n+3}.
\end{align}
By \cref{thm:oracle}, we get that for any $q$-time quantum adversary $\adv$,
\begin{align}
     \underset{\ket{\psi_y}\gets \mu_n}{\mathbb{E}} \left[ \left|\Pr[\adv^{c\QCMA,\mathcal{T}}(y)=1]- \Pr[\adv^{c\QCMA_{1,y},\mathcal{T}_{1,y}}(y)=1]\right|  \right]\leq O(2^{-n/32}).
\end{align}
\qed
\end{proof}

\begin{Claim}
\label{claim:2}
\begin{align}
    \left| \Pr[\Hy_1(n)=1]-\Pr[\Hy_2(n)=1]\right|\leq O\left(2^{-n/32}\right).
\end{align}
\end{Claim}

\begin{proof}

The only difference between these hybrids is that $U_n$ is replaced with $U^y_n$, essentially replacing a reflection map with an identity map. As such, distinguishing these two hybrids involves a similar task as \cref{lem:oracle}. Therefore, we have that for any $q$-time quantum algorithm $\adv$,
\begin{align}
 \underset{\ket{\psi}\gets \mu_n}{\mathbb{E}}\left[\max_{w\in \{0,1\}^{q}}  \left( \left|\Pr[\adv^{\mathcal{T}_{1,y}}(y,w)=1]-\Pr[\adv^{\mathcal{T}_{2,y}}(y,w)=1]\right|\right)\right]
    \leq O\left( q\sqrt{\frac{q+1}{2^n}} \right).
\end{align} 
By \cref{thm:oracle}, setting $\epsilon(n,q)\coloneqq O\left( q\sqrt{\frac{q+1}{2^n}} \right)$ as in the equation above, and setting $q=2^{n/16}$ we get 
   \begin{align}
\underset{\ket{\psi}\gets \mu_n}{\mathbb{E}}\left[   \left|\Pr[\adv^{c\QCMA_{1,y},\mathcal{T}_{1,y}}(y)=1]-\Pr[\adv^{c\QCMA_{2,y},\mathcal{T}_{2,y}}(y)=1]\right|\right]
    \leq \\
    2q\cdot \epsilon(n,1000\cdot n\cdot q)+4q\cdot O(2^{-n})\leq O(2^{-\frac{n}{32}})
\end{align}
    for large enough $n$.
    \qed
\end{proof}

\begin{Claim}
\label{claim:3}
\begin{align}
    \left| \Pr[\Hy_2(n)=1]-\Pr[\Hy_3(n)=1]\right|\leq O\left(2^{-n/32}\right).
\end{align}
\end{Claim}

\begin{proof}
This follows in the same way as Claim \ref{claim:1}.
    \qed
\end{proof}

\begin{Claim}
\begin{align}
   \Pr[{\Hy}_3(n)=1]=\Pr[\Hy_4(n)=1].
\end{align}
\end{Claim}

\begin{proof}
Relative to the oracles $(\QCMA_{3,y},\mathcal{T}_{3,y})$, both $y$ and $\overline{y}$ are sampled from the same distribution so they are indistinguishable. 
    \qed
\end{proof}

\begin{Claim}
\begin{align}
    \left| \Pr[{\Hy}_4(n)=1]-\Pr[\Hy_5(n)=1]\right|\leq O\left(2^{-n/32}\right).
\end{align}
\end{Claim}

\begin{proof}
This follows the same way as Claim \ref{claim:1}.
    \qed
\end{proof}

\begin{Claim}
\begin{align}
    \left| \Pr[{\Hy}_5(n)=1]-\Pr[\Hy_6(n)=1]\right|\leq O(2^{- n/32}).
\end{align}
\end{Claim}

\begin{proof}
This follows the same way as Claim \ref{claim:2}.
\qed    
\end{proof}

\begin{Claim}
\begin{align}
    \left| \Pr[{\Hy}_6(n)=1]-\Pr[\Hy_7(n)=1]\right|\leq O\left(2^{-n/32}\right).
\end{align}
\end{Claim}

\begin{proof}
This follows the same way as Claim \ref{claim:1}.
   \qed 
\end{proof}

By the triangle inequality and all the previous claims,
\begin{align}
     \left| \Pr[\Hy_0(n)=1]-\Pr[\Hy_7(n)=1]\right| &\leq \frac{1}{2^{n/100}}.
\end{align}
for large enough $n$. 
However, this probability is taken over the distribution of $L$ and $\mathcal{T}$. By Markov inequality, we get that 
\begin{align}
      \Pr_{{L,\mathcal{T}}} \left[ \left|\Pr_{y\leftarrow {L}_n}[\adv^{c\QCMA, \mathcal{T}}(y)=1]-\Pr_{y\leftarrow \overline{L}_n}[\adv^{c\QCMA, \mathcal{T}}(y)=1]\right| \geq {2^{-n/200}}\right] \leq 2^{-n/200},
    \end{align}
where the first probability is taken over the distribution of $L$ and $\mathcal{T}$. 
By \cref{lem:BC} (Borel-Cantelli Lemma), since $\sum_n 2^{-n/200}$ converges, with probability 1 over the distribution of $L$ and $\mathcal{T}$, it holds that
\begin{align}
     \left| \Pr_{y\leftarrow L_n}[\adv^{c\QCMA, \mathcal{T}}(y)=1]-\Pr_{y\leftarrow \overline{L}_n}[\adv^{c\QCMA, \mathcal{T}}(y)=1]\right|\leq {2^{-n/200}}
    \end{align}
except for finitely many $n\in \mathbb{N}$. There are countable number of quantum algorithms $\adv$ with a run-time and query bound of $2^{n/16}$, so this bound holds for every such adversary.
\qed
\end{proof}

\paragraph{Acknowledgements.}
TM is supported by
JST CREST JPMJCR23I3,
JST Moonshot R\verb|&|D JPMJMS2061-5-1-1, 
JST FOREST, 
MEXT QLEAP, 
the Grant-in Aid for Transformative Research Areas (A) 21H05183,
and 
the Grant-in-Aid for Scientific Research (A) No.22H00522.
Part of this work was done by TM at Columbia University as a visiting scientist.

\printbibliography
\newpage
\appendix

\section{\textsf{QCMA} Oracle Emulation using Classical Witness}
\label{sec:emulation}

In this section, we show that giving $\QCMA$ oracle access can be emulated using a classical witness, even in the presence of other quantum unitary oracles. This is used to bound the distinguishing advantage of an adversary that is given a $\QCMA$ oracle in the construction of a $\QMA \text{-} \QCMA$ gap problem.  
We use the following basic algebraic property.
\begin{lemma}
    \label{lem:bb}
    For any functions $A,B$ mapping $\{0,1\}^n\rightarrow [0,1]$, we have 
    \[
\left|\max_{w\in \{0,1\}^n} A(w)-\max_{w\in \{0,1\}^n} B(w)\right|\le \max_{w\in \{0,1\}^n}|A(w)-B(w)|.
\]
\end{lemma}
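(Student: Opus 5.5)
The statement to prove is Lemma~\ref{lem:bb}: for functions $A,B:\{0,1\}^n\to[0,1]$, $\left|\max_w A(w)-\max_w B(w)\right|\le\max_w|A(w)-B(w)|$. The plan is a direct two-sided comparison using maximizers. The domain $\{0,1\}^n$ is finite, so both maxima are attained. Fix $w_A\in\arg\max_w A(w)$ and $w_B\in\arg\max_w B(w)$, and set $\Delta\coloneqq\max_w|A(w)-B(w)|$.

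Assume first that $\max_w A(w)\ge\max_w B(w)$. Since $w_A$ is a feasible point for the maximization of $B$, we have $B(w_A)\le\max_w B(w)$. Therefore
\begin{align}
0\le \max_w A(w)-\max_w B(w)\le A(w_A)-B(w_A)\le |A(w_A)-B(w_A)|\le \Delta .
\end{align}

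The case $\max_w B(w)>\max_w A(w)$ is symmetric. Evaluating at $w_B$ gives $0<\max_w B(w)-\max_w A(w)\le B(w_B)-A(w_B)\le\Delta$. Combining the two cases yields the claimed inequality.

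No step is a real obstacle. The only point needing care is to evaluate at the maximizer of the \emph{larger} maximum, so that the bound goes in the right direction. The range $[0,1]$ is not used beyond guaranteeing that the values are finite reals. In the proof of Lemma~\ref{thm:oracle}, this inequality will presumably be applied with $A(w)$ and $B(w)$ equal to the acceptance probabilities of a verifier with witness $w$ under oracle $U$ and oracle $U'$, respectively. This transfers the witness-aided indistinguishability bound to the maximum acceptance probabilities that the emulated $\QCMA$ oracle thresholds on.
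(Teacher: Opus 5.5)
Your proof is correct and is essentially the paper's argument: both evaluate at the maximizer of the larger maximum and use $B(w_A)\le\max_w B(w)$ to bound the gap by $|A(w_A)-B(w_A)|\le\max_w|A(w)-B(w)|$. The only cosmetic difference is that the paper phrases the WLOG as $|A(w_A)|\ge|B(w_B)|$, which relies on nonnegativity of the range, whereas your case split on the signed maxima shows the $[0,1]$ hypothesis is not needed.
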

\begin{proof}
Let $w_A\coloneqq \argmax_{w\in \{0,1\}^n}A(w)$ and let $w_B\coloneqq \argmax_{w\in \{0,1\}^n}B(w).$ Without loss of generality, assume $|A(w_A)|\geq |B(w_B)|$. Then, 
\begin{align}
 \max_{w\in \{0,1\}^n}|A(w)-B(w)|&\geq \left|A(w_A)-B(w_A)\right|\geq |A(w_A)-B(w_B)|\\
 &=\left|\max_{w\in \{0,1\}^n} A(w)-\max_{w\in \{0,1\}^n} B(w)\right|.
\end{align}
\qed
\end{proof}

We are now ready to prove the main result of this section, which is also stated as \cref{thm:oracle}. 
\begin{lemma}
    Let $q$ be a function on the security parameter $n$ and let $\epsilon\leq 1$ be a function on $q$ and $n$. Let $Q$ be a quantum unitary oracle.  Let $U\gets \mathcal{U}$ and $U'\gets \mathcal{U}'$ be two $n$-qubit unitaries sampled from some distributions $\mathcal{U}, \mathcal{U}'$ of unitaries. Assume that the following classical witness aided indistinguishability condition is satisfied: For any  $q$-time quantum algorithm $\mathcal{B}$,
\begin{align}
\underset{\begin{subarray}{c} U\gets \mathcal{U}\\
U'\gets \mathcal{U}'\end{subarray}}{\mathbb{E}}\left[ \max_{w\in \{0,1\}^{q}}  \left( \left|\Pr[\mathcal{B}^{Q,U}(1^n,w)=1]-\Pr[\mathcal{B}^{Q,U'}(1^n,w)=1]\right|\right)\right]
    \leq \epsilon(n,q)
\end{align}
for large enough $n$. 
    Then, for any $U,U'$, there exist oracles $\QCMA_U\in \mathcal{O}_{\textsf{PromiseQCMA}}^{Q,U}$ and $\QCMA_{U'}\in \mathcal{O}_{\textsf{PromiseQCMA}}^{Q,U'}$, such that for any  $q$-time quantum algorithm $\adv$,
    \begin{align}
\underset{\begin{subarray}{c} U \gets \mathcal{U}\\
U'\gets \mathcal{U}'\end{subarray}}{\mathbb{E}}\left[   \left|\Pr[\adv^{Q,U,c\QCMA_U}(1^n)=1]-\Pr[\adv^{Q,U',c\QCMA_{U'}}(1^n)=1]\right|\right]\\
    \leq 2q\cdot \epsilon(n,1000\cdot n\cdot q)+4q\cdot O(2^{-n})
\end{align}
    for large enough $n$.
\end{lemma}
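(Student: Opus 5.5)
We define the two $\QCMA$ oracles concretely and then replace the $\QCMA$ answers one by one by computations driven by a single classical witness. For a unitary $W\in\{U,U'\}$ let $\QCMA_W$ be the ``canonical'' oracle already used in the proof of \cref{result 2}. On a query $(V,x)$ it takes a witness $w^*$ maximizing $\Pr[V^{Q,W}(x,w)=1]$, runs $V^{Q,W}(x,w^*)$ independently $1000\cdot n$ times, and outputs $1$ iff the empirical average is at least $1/2$. We first check that $\QCMA_W\in\mathcal{O}^{Q,W}_{\textsf{PromiseQCMA}}$. On promise inputs the maximizing witness has acceptance probability $\ge 2/3$ or $\le 1/3$, so by a Chernoff bound the answer is wrong with probability $O(2^{-n})$. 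To get a deterministic oracle, we fix the coin tosses of the repetitions, or we absorb this error into the additive $O(2^{-n})$ term. That term is the source of the $4q\cdot O(2^{-n})$ summand.

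The core step is a hybrid over the at most $q$ oracle queries of $\adv$. Let $H_i$ denote the experiment in which the first $i$ classical $\QCMA$ queries are answered by $\QCMA_{U'}$ (with $U'$ everywhere) and the remaining ones by $\QCMA_U$. The difference between consecutive hybrids is controlled by an algorithm $\mathcal{B}$ that simulates $\adv$ up to query $i$ and then must produce the $i$-th answer. The answer $\QCMA_W(V,x)$ is, up to the Chernoff error, determined by comparing $\max_w \Pr[V^{Q,W}(x,w)=1]$ with $1/2$. We compare the two maximizations using \cref{lem:bb}: $|\max_w A(w)-\max_w B(w)|\le\max_w|A(w)-B(w)|$, where $A,B$ are the acceptance probabilities under $U$ and $U'$. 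We then fold the query into a witness-aided distinguisher $\mathcal{B}^{Q,W}(1^n,w)$. This distinguisher runs $\adv$ with the earlier answers hardwired into the witness and, at the critical query, uses the witness part $w$ to run $V^{Q,W}(x,w)$ $1000n$ times. Its running time is at most $1000\cdot n\cdot q$, which is why the hypothesis is invoked at $\epsilon(n,1000nq)$. The witness length is also at most this, since it consists of the $\le q$ previous answer bits together with a candidate $\QCMA$ witness of length below the verifier's running time.

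Summing over the $q$ hybrid steps gives $q$ times the per-step loss. Each step costs one $\epsilon(n,1000nq)$ from the max-witness hypothesis and at most two Chernoff failures, one for each of $U$ and $U'$. To move the maximum over $w$ outside the expectation over $U,U'$, we use exactly the form of the hypothesis, $\mathbb{E}[\max_w|\cdot|]$. The factor $2$ comes from handling separately the event that the acceptance probabilities lie near the threshold $1/2$ and the event that the answer actually flips. This gives the total bound $2q\,\epsilon(n,1000nq)+4q\,O(2^{-n})$.

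The main obstacle is the adaptivity of the queries: the $i$-th query $(V,x)$ depends on the previous answers and on $U$ itself, so the maximizing witness depends on the oracle. The intended fix is to hardwire all previous answers into the classical witness and let the single maximum over witnesses in the hypothesis cover both the transcript and the $\QCMA$ witness. One point requires care. The canonical oracle's answer is a threshold of a maximum rather than the maximum itself, so an answer can flip even when the two maxima are close. Converting closeness of the maxima into a bound on the probability of a flip is the delicate part. The plan is to treat the majority-vote outcome itself as the output of the witness-aided distinguisher, so that the flip probability is directly one distinguishing advantage.
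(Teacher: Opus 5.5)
The ingredients are right: the canonical majority-vote oracle, \cref{lem:bb}, Hoeffding errors, and a witness-aided distinguisher running in time $1000nq$. The gap is the step that handles adaptivity, which does not work as you propose it. Your $\mathcal{B}$ runs $\adv$ up to the critical query, with the earlier $\QCMA$ answers hardwired into $w$. But those answers, and the critical query $(V,x)$ itself, are random variables depending on $\adv$'s measurement outcomes and on the oracles. The witness $w$ is fixed by the $\max_w$ before $\mathcal{B}$ runs, so no single witness can encode the transcript, nor the right $\QCMA$ witness for a random $(V,x)$. Also, since $\mathcal{B}$ runs $\adv$'s prefix relative to $W$, its advantage reflects the whole prefix under $U$ versus $U'$, not one swapped answer, so it does not bound a hybrid step.

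The missing idea is that the distinguisher should not run $\adv$ at all; the query goes into the witness. Let $\mathcal{B}^{Q,W}(1^n,(\tilde V,\tilde x,\tilde w))$ output the $1000n$-fold majority vote of $\tilde V^{Q,W}(\tilde x,\tilde w)$; this is the paper's $\Sim^{Q,W}$. Set $D(U,U')\coloneqq\max_{(\tilde V,\tilde x,\tilde w)}\bigl|\Pr[\Sim^{Q,U}(\tilde V,\tilde x,\tilde w)=1]-\Pr[\Sim^{Q,U'}(\tilde V,\tilde x,\tilde w)=1]\bigr|$. Let $\QCMA_W$ answer promise inputs correctly and, on non-promise inputs, output $\Sim^{Q,W}$ with the best witness. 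Keep this randomized: fixing coins makes each answer a hard threshold, and closeness of maxima then says nothing about flips. Then for every $(V,x)$, Hoeffding and \cref{lem:bb} applied to the majority-vote probabilities give
\[
\bigl|\Pr[\QCMA_U(V,x)=1]-\Pr[\QCMA_{U'}(V,x)=1]\bigr|\le D(U,U')+2\cdot O(2^{-n}).
\]
The hypothesis gives $\mathbb{E}[D(U,U')]\le\epsilon(n,1000nq)$ with the maximum inside the expectation. So this bound is uniform over queries and covers a $(V,x)$ chosen adaptively from $U,U'$ and $\adv$'s coins. No transcript needs encoding, and your ``delicate part'' disappears.

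Second, your hybrids are indexed only by the $\QCMA$ queries, and nothing switches $\adv$'s quantum queries from $(Q,U)$ to $(Q,U')$. The phrase ``with $U'$ everywhere'' does not fix this: either $H_0$ is not $\adv^{Q,U,c\QCMA_U}$, or the last hybrid is not $\adv^{Q,U',c\QCMA_{U'}}$. The paper first splits $\adv$'s queries so that each is either a classical $\QCMA$ query or a quantum $(Q,U)$ query. This gives $2q$ queries, and it runs the hybrid over all of them, replacing the unitary queries too. That is the source of the $2q$ in $2q\,\epsilon(n,1000nq)$, and of $4q\cdot O(2^{-n})$ as two Hoeffding errors per step. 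Your near-threshold/flip case split corresponds to no step of the proof. One caution if you follow the paper here: it simply asserts that a single unitary-query swap follows from the hypothesis. A distinguisher in the hypothesis has access to only one of $U,U'$, while that hybrid uses both, so this step needs an argument of its own.
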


\begin{proof}
We specify the oracles $\QCMA_U$ and $\QCMA_{U'}$ later in the proof. 

 Note that since oracle access to $\QCMA_{U}$ is classical, we can split the oracle queries so that every query is either to $\QCMA_{U}$ or to $(Q,U)$, but not both at the same time. This modification increases the bound on the number of queries to $2q$. We now define the following hybrids. Let $\Hy_0$ be the following experiment:
    \begin{itemize}
        \item Sample $U\gets \mathcal{U}$.
        \item Run $\adv^{Q,U,c\QCMA_U}(1^n)$ and output the result. 
    \end{itemize}
Inductively define the hybrid $\Hy_i$ for $i\in [2q]$ to be the same as $\Hy_{i-1}$ except the $i^{th}$ query to $(Q,U,c\QCMA_U)$ is replaced with $(Q,U',c\QCMA_{U'})$.


We now bound the trace distance between the hybrids $\Hy_{0}$ and $\Hy_{1}$. Consider the case where that the first query is to $\QCMA_U$. We can simulate the oracle $\QCMA_{U}$ with the oracle $\Sim^{Q,U} $ which is a quantum algorithm that queries ${U}$ at most $1000\cdot n\cdot q$ times and gets a classical witness $w$ of length at most $q$ to aid in responding to its query.  

In particular, $\Sim^{Q,U}$ acts as follows. Assume $\adv$ submits a query $(V,x)$ to $\Sim^{Q,U}$. $\Sim$ receives a witness $w$ and then, $\Sim$ computes $V^{Q,U}(x,w)$ $1000\cdot n$ times and if the average of these computations is less than $1/2$, it outputs 0 and outputs 1 otherwise. Note that since $\adv$ has run-time $q$, the length of $(V,x)$ is bounded by $q$ and, thus, the number of queries by $V$ is bounded by $q$, giving a bound of $1000\cdot n\cdot q$ on the total number of queries in this computation. 

Let $\QCMA_{U}$ be the algorithm in $\mathcal{O}_{\textsf{PromiseQCMA}}^{Q,U}$ that if queried on a non-promise input $(\tilde{V},\tilde{x})$, outputs $\Sim^{Q,U}(\tilde{V},\tilde{x},\tilde{w})$ where $\tilde{w}$ is set to the value that maximizes the probability of this computation returning 1. The output distributions differ on the promise inputs, but we can bound this error using the Hoeffding bound, obtaining:
\begin{align}
    \left|\max_{w}\Pr[\Sim^{Q,U}(V,x,w)=1]-\Pr[\QCMA_{U}(V,x)=1]\right|\leq O(2^{-n}).
\end{align}

Let $\QCMA_{U'}$ be the algorithm in $\mathcal{O}_{\textsf{PromiseQCMA}}^{Q,U'}$ that performs the same computation as $\Sim^{Q,U'}$ on non-promise inputs. We can similarly argue that oracle access to $\Sim^{Q,U'}$ simulates oracle access to $\QCMA_{U'}$ {with error $O(2^{-n})$.} 

Note that in hybrids $\Hy_{0}$ and $\Hy_{1}$, the first query is sampled from the same distribution, say $\textsf{D}_\adv$. 


For any query $(V,x)\gets \textsf{D}_\adv$, 
\begin{align}
&\underset{\begin{subarray}{c} U \gets \mathcal{U}\\
U'\gets \mathcal{U}'\end{subarray}}{\mathbb{E}}\left[ \left|\Pr\left[\QCMA_{U}(V,x)=1\right]-\Pr\left[\QCMA_{U'}(V,x)=1\right]\right|\right] \\
 \leq &\underset{\begin{subarray}{c} U \gets \mathcal{U}\\
U'\gets \mathcal{U}'\end{subarray}}{\mathbb{E}}\left[\left|\Pr\left[\QCMA_{U}(V,x)=1\right]-\max_{w}\Pr[\Sim^{Q,U'}(V,x,w)=1]\right|\right]+O(2^{-n})\\
 \leq &\underset{\begin{subarray}{c} U \gets \mathcal{U}\\
U'\gets \mathcal{U}'\end{subarray}}{\mathbb{E}}\left[\left|\max_{w}\Pr[\Sim^{{Q,U}}(V,x,w)=1]-\max_{w}\Pr[\Sim^{Q,U'}(V,x,w)=1]\right|\right] +2\cdot O(2^{-n})\\ \label{eq:bbb}
        \leq &\underset{\begin{subarray}{c} U \gets \mathcal{U}\\
U'\gets \mathcal{U}'\end{subarray}}{\mathbb{E}}\left[\max_{(\tilde{V},\tilde{x},\tilde{w})}\left(\left|\Pr[\Sim^{Q,{U}}(\tilde{V},\tilde{x},\tilde{w})=1]-\Pr[\Sim^{Q,U'}(\tilde{V},\tilde{x},\tilde{w})=1]\right|\right)\right]+2\cdot O(2^{-n})\\
  \label{eq:hh}  \leq &\epsilon(n,1000\cdot n\cdot q)+2\cdot O(2^{-n}).
\end{align} 
Here, \cref{eq:bbb} follows because of \cref{lem:bb}.
While \cref{eq:hh} follows from the theorem assumption by setting the witness to $(\tilde{V},\tilde{x},\tilde{w})$, which is of length at most $q$.
Therefore, this bounds the probability that $\adv$ can distinguish oracle access to $\QCMA_U$ and $\QCMA_{U'}$ from the first query, giving
\begin{align}
   \left|\Pr[\Hy_{0}=1]-\Pr[\Hy_{1}=1]\right|  \leq \epsilon(n,1000\cdot n\cdot q)+2\cdot O(2^{-n}).
\end{align}
If the first query was to $(Q,U)$, then indistinguishability follows directly from the assumption, which gives:
\begin{align}
   \left|\Pr[\Hy_{0}=1]-\Pr[\Hy_{1}=1]\right|  \leq \epsilon(n,q).
\end{align}
We can run the same argument to obtain the same bound of $\epsilon(n,1000\cdot n\cdot q)+2\cdot O(2^{-n})$ between any consecutive pair of hybrids. 

By the triangle inequality, we get 
\begin{align}
&\underset{\begin{subarray}{c} U \gets \mathcal{U}\\
U'\gets \mathcal{U}'\end{subarray}}{\mathbb{E}}\left[   \left|\Pr[\adv^{Q,U,c\QCMA_U}(1^n)=1]-\Pr[\adv^{Q,U',c\QCMA_{U'}}(1^n)=1]\right|\right]\\
&=\left|\Pr[\Hy_{0}=1]-\Pr[\Hy_{2q}=1]\right|  \\  &\leq 2q\cdot \epsilon(n,1000\cdot n\cdot q)+4q\cdot O(2^{-n}).
\end{align}
\qed
\end{proof}

\end{document}